\theoremstyle{plain}
\newtheorem{theorem}{Theorem}[section]
\newtheorem{lemma}[theorem]{Lemma}
\newtheorem{example}[theorem]{Example}
\newtheorem{corollary}[theorem]{Corollary}
\newtheorem{claim}[theorem]{Claim}
\Crefname{claim}{Claim}{Claims}
\newtheorem{observation}[theorem]{Observation}
\Crefname{observation}{Observation}{Observations}
\theoremstyle{definition}
\newtheorem{definition}{Definition}
\newcommand{\crefpart}[2]{\cref{#1}(\labelcref{#1-#2})}
\Crefname{equation}{Equation}{Equations}
\Crefname{figure}{Figure}{Figures}
\DeclareMathOperator{\supp}{Support}
\newcommand{\Bs}{\mathbf{s}}
\newcommand{\Bb}{\mathbf{b}}
\DeclareMathOperator*{\E}{\mathbb{E}}
\begin{document}
\title{The Best of Both Worlds:\texorpdfstring{\\}{ }Asymptotically Efficient Mechanisms with\texorpdfstring{\\}{ }a Guarantee on the Expected Gains-From-Trade}
\author{Moshe Babaioff\thanks{Microsoft Research, \emph{E-mail}: \href{mailto:moshe@microsoft.com}{moshe@microsoft.com}.}~~~~~Yang Cai\thanks{School of Computer Science, McGill University, Canada. \emph{E-mail}: \href{mailto:cai@cs.mcgill.ca}{cai@cs.mcgill.ca}.}~~~~~Yannai A. Gonczarowski\thanks{Einstein Institute of Mathematics, Rachel \& Selim Benin School of Computer Science \& Engineering, and Federmann Center for the Study of Rationality, The Hebrew University of Jerusalem, Israel; and Microsoft Research. \emph{E-mail}: \href{mailto:yannai@gonch.name}{yannai@gonch.name}.}~~~~~Mingfei Zhao\thanks{School of Computer Science, McGill University, Canada. \emph{E-mail}: \href{mailto:mingfei.zhao@mail.mcgill.ca}{mingfei.zhao@mail.mcgill.ca}.}}
\date{February 22, 2018}

\maketitle

\begin{abstract}
The seminal impossibility result of \citet{MyersonS83} states that for bilateral trade, there is no mechanism that is individually rational (IR), incentive compatible (IC), weakly budget balanced, and efficient. This has led follow-up work on two-sided trade settings to weaken the efficiency requirement and consider approximately efficient simple mechanisms, while still demanding the other properties. The current state-of-the-art of such mechanisms for two-sided markets can be categorized as giving one (but not both) of the following two types of approximation guarantees on the \emph{gains from trade}: a constant \emph{ex-ante} guarantee, measured with respect to the \emph{second-best} efficiency benchmark, or an asymptotically optimal \emph{ex-post} guarantee, measured with respect to the \emph{first-best} efficiency benchmark. Here the second-best efficiency benchmark refers to the highest gains from trade attainable by any IR, IC and weakly budget balanced mechanism, while the first-best efficiency benchmark refers to the maximum gains from trade (attainable by the VCG mechanism, which is not weakly budget balanced).

In this paper, we construct simple mechanisms for double-auction and matching markets that \emph{simultaneously} achieve both types of guarantees: these are ex-post IR, Bayesian IC, and ex-post weakly budget balanced mechanisms that\ \ 1)~ex-ante guarantee a constant fraction of the gains from trade of the second-best, and\ \ 2)~ex-post guarantee a realization-dependent fraction of the gains from trade of the first-best, such that this realization-dependent fraction converges to $1$ (full efficiency) as the market grows large.
\end{abstract}

\section{Introduction}

In a two-sided trade setting, some agents (sellers) are endowed with items, while other agents (buyers) are interested in purchasing items. Each seller has a cost for parting with her item, and each buyer has a value for obtaining an item. In such settings, a mechanism designer may wish to create a mechanism that ensures that the items end up belonging to the agents (whether buyers or sellers) that value them the most. An important property of such a mechanism is being budget balanced, that is, not running a deficit for the mechanism designer.

The seminal impossibility result of
\citet{MyersonS83} shows that for bilateral-trade, that is, for the setting where a single seller wishes to sell a single item to a single buyer, there is no mechanism that is individually rational (IR), incentive compatible (IC), weakly budget balanced (BB) and efficient (i.e., maximizes welfare).\footnote{In particular, note that the VCG mechanism, while being IR, dominant-strategy incentive compatible, and efficient, has a budget deficit.} This impossibility result clearly extends from the special case of bilateral trade to any two-sided trade setting.

In light of the above impossibility result, follow-up work in the two-sided trade literature has looked at IR, IC, and BB mechanisms that are approximately efficient, rather than precisely efficient.
Two definitions of approximate efficiency have emerged: on the one hand, approximately maximizing welfare\footnote{These papers consider the cost of a seller as a value for keeping the item rather than a cost for parting with the item, so the no-trade welfare is the cost (or rather value) of the seller, while the post-trade welfare (if trade occurs) is the value of the buyer.} \citep{BlumrosenD16,Colini-Baldeschi16,Colini-Baldeschi17}, and on the other hand, approximately maximizing the gains from trade (GFT), that is, the increase in total welfare due to the trade \citep{McAfee92,BabaioffNP09,BlumrosenM16, DuttingTR17, BrustleCFM17}.
This paper discusses the latter, more challenging benchmark.\footnote{While maximizing the gains from trade coincides with maximizing welfare, obtaining a constant approximation to the optimal gains from trade is considerably more demanding than obtaining a constant approximation to the optimal welfare. Consider, for instance, a buyer who values an item by $9$ dollars and a seller whose value for keeping the item is $8$ dollars. The optimal welfare (of $9$), and the optimal gains from trade (of $1$), are both obtained by having the seller trade with the buyer. While not trading results in a welfare of $8$ (a \nicefrac{8}{9} fraction of the optimal welfare), it results in zero gains from trade.}
The current state-of-the-art mechanisms in the literature can be categorized as giving one of two guarantees:
\begin{enumerate}
\item A constant ex-ante guarantee, measured with respect to the ``second-best'' efficiency benchmark, that is, the (possibly very complex) mechanism obtaining the highest expected GFT of any IR and IC mechanism that is weakly budget balanced,\quad or
\item An asymptotically optimal ex-post guarantee, measured with respect to the ``first-best'' efficiency benchmark, that is, the mechanism obtaining full efficiency (VCG).
\end{enumerate}
In this paper, we aim to construct simple mechanisms that simultaneously achieve both guarantees.
We study settings in which each seller is endowed with precisely one item, all items are identical, and each buyer is interested in buying one item. In the \emph{double-auction} setting, any seller can trade with any buyer, while in the more general \emph{matching} setting, trade between some buyer-seller pairs is disallowed.
Before describing our results, we first survey the state-of-the-art mechanisms giving each guarantee in more depth.

\paragraph{Ex-Ante Guarantees}
\citet*{BrustleCFM17} (henceforth BCWZ) present a simple mechanism that is IR, IC, and weakly BB, and obtains, in expectation, at least half of the expected GFT of the (possibly very complex) second-best efficiency benchmark. More specifically, they have proposed two mechanisms -- a buyer offering and a seller offering mechanism -- and have showed that the total GFT of these two mechanisms is at least the GFT of the second-best mechanism, implying that a random one obtains at least half the GFT of the second-best mechanism.
For bilateral trade, in their seller offering mechanism, the seller simply posts a take-it-or-leave-it price to the buyer, which maximizes the seller's utility in expectation, taking into account the seller's cost for the item and the buyer's value distribution. In their buyer offering mechanism, the buyer makes a similar take-it-or-leave-it purchase offer to the seller.

BCWZ also generalize their results beyond bilateral-trade settings, to more complex two-sided trade scenarios including double-auctions and matching settings. Their mechanism for these settings generalizes the seller-offering mechanism by maximizing the total Myerson virtual surplus of the sellers for the given buyers' distributions, and similarly for the buyer-offering mechanism.
While the mechanism that they present obtains at least half of the second-best GFT in expectation,
we observe that it does not give any ex-post efficiency guarantees, and moreover, even its expected
GFT does \emph{not} asymptotically converge to the GFT of the second-best (let alone the first-best) mechanism as the market grows large.
This holds even for the very simple double-auction market
with $n$~sellers, each selling an identical item, and $n$ buyers, each interested in buying a single item, with the values (or costs) of the agents sampled i.i.d.\ from the uniform distribution over $[0,1]$. Even when $n$ is large, the mechanism of
BCWZ will only give in expectation a constant fraction (strictly smaller than $1$) of the second-best GFT, and no more than that (see Example \ref{example:rvwm-not-eff} in Section \ref{sec:naive-issues}). In particular, even in a large market, the efficiency of their mechanism does not converge to full efficiency.

\paragraph{Ex-Post Guarantees}
The Trade Reduction mechanism\footnote{\citeauthor{McAfee92}'s original mechanism is slightly more involved. We use a simplified version that provides the same worst-case guarantees.}
 of \citet{McAfee92}, which is defined for the double-auctions setting, does not suffer from the above drawback and is asymptotically efficient.
The mechanism circumvents the impossibility result of \citet{MyersonS83} for bilateral trade, by providing an ex-post efficiency guarantee only when more than one trade is possible in the double-auction market.
The mechanism works as follows: it first finds the efficient trade --- denote the size (number of pairs) of this trade by $q$. It then removes the least efficient trade (one buyer-seller pair), and only allows for the remaining trades (the $q\!-\!1$ most efficient trades) to realize, charging the winning buyers the value of the removed buyer, and paying the winning sellers the cost of the removed seller.
 This creates an IR and ex-post incentive compatible (IC) mechanism.
 As the value of the removed buyer is at least the cost of the removed seller, each trade is weakly budget-balanced. The mechanism obtains at least a $1\!-\!\nicefrac{1}{q}$ fraction\footnote{Recall that $q$ is a function of the valuation profile.} of the realized optimal (first-best) GFT. In the double-auction example above, as $n$ grows $q$ will also grow, and this fraction will tend to~$1$. Unfortunately, when $q=1$ this mechanism performs no trade and provides no guarantees at all.
(Failing to provide an ex-post guarantee unconditionally is of course inevitable in light of the impossibility result of
\citet{MyersonS83}.\footnote{Ex-post approximation to the GFT requires the mechanism to trade whenever there is positive gain, but the impossibility result implies that for some of these profiles trade will not occur.}) We note that the Trade Reduction mechanism, while asymptotically efficient, fails to give any unconditional approximation to the GFT, even with respect to the GFT of the second-best mechanism (as the mechanism of
BCWZ does give).

\paragraph{The Best of Both Worlds} In this work we aim to design simple mechanisms that are IR, IC, and weakly BB, and simultaneously provide both types of efficiency guarantees discussed above. First, in the spirit of the guarantee of
BCWZ, we aim to guarantee for the expected GFT to be at least a constant fraction of the expected GFT of the second-best mechanism. Second, in the spirit of the guarantee of \citet{McAfee92}, we aim to guarantee for the ex-post GFT to be at least a realization-dependent fraction of the realized optimal GFT (first-best), such that this fraction tends to $1$ ``as the market grows large'' and the efficient trade size grows\footnote{The condition on the efficient trade size ensures that the growth in the market size does not, for example, come from adding agents that are ``irrelevant,'' such as buyers with $0$ value and sellers with very high costs, since in such a case it would not be possible to provide any guarantee that is better for large markets than for small ones (such as bilateral trade markets).}
to infinity.

\subsection{Our Results}

We present results both for the double-auction setting and for the more involved matching-market setting, which extends the double-auction setting by adding constraints on the pairs of agents who can trade with each other.
Providing a result for this more involved scenario is considerably more challenging than for the double-auction setting, and is the main result of this paper.

\subsubsection{Double Auctions}

We first consider the double-auction setting in which each seller has a single item, all items are identical, and each buyer desires a single item. Each value (for obtaining an item) of each buyer, and each cost (for parting with her item) of each seller is drawn from a known agent-specific distribution, independently from all other values and costs.
We first present our result for double auctions.

\begin{theorem}[See \cref{thm:double-auction}]
	For the double-auction setting, there exists a simple mechanism that is ex-post IR, Bayesian IC and ex-post weakly budget balanced, and satisfies both of the following.
\begin{itemize}
\item
The expected GFT of this mechanism is at least $\nicefrac{1}{4}$ of the expected GFT of the second-best mechanism.
\item
This mechanism guarantees at least $1\!-\!\nicefrac{1}{q}$ of the realized optimal (first-best) GFT, where $q$ is the size of the most efficient trade. Thus the mechanisms is asymptotically efficient (converges to full efficiency as the trade size $q$ grows large).
\end{itemize}
\end{theorem}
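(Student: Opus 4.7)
The plan is to exhibit a single deterministic mechanism~$M$ that switches between McAfee's Trade Reduction and a BCWZ-style bilateral posted-price mechanism according to the realized size of the efficient trade. Concretely, on any report profile, let $q$ denote the size of the (first-best) efficient trade. If $q\geq 2$, mechanism $M$ runs McAfee's Trade Reduction; if $q\leq 1$, mechanism $M$ runs a bilateral take-it-or-leave-it mechanism between the highest-reporting buyer and the lowest-reporting seller, with the offer price computed from the appropriate prior distributions, in the spirit of the seller-offering/buyer-offering mechanisms of BCWZ. The intuition is that McAfee already delivers the asymptotic ex-post guarantee whenever $q\geq 2$, while the $q\leq 1$ regime is precisely the one where McAfee yields zero and where the BCWZ-style bilateral component must supply the ex-ante contribution.

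The first task is to verify that $M$ is ex-post IR, ex-post weakly BB, and Bayesian IC. Ex-post IR and weak BB are inherited pointwise from the two constituent mechanisms. For Bayesian IC the only concern is that an agent might misreport in order to trigger a regime switch. I would rule this out either through Myerson's payment identity applied to the (monotone) interim allocation of $M$, or by direct case analysis: any regime-switching misreport forces the manipulator into a trade at a price that is strictly unprofitable given her true type, so truthful reporting is a best response in expectation.

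The ex-post guarantee is then immediate. On $\{q\geq 2\}$, $M$ coincides with McAfee's Trade Reduction, so by the analysis of McAfee the realized GFT is at least $(1-1/q)\cdot\mathrm{OPT}(v)$. On $\{q\leq 1\}$, the bound $(1-1/q)\cdot\mathrm{OPT}(v)$ is non-positive and holds vacuously.

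The main obstacle is the ex-ante bound $\E[\mathrm{GFT}_M]\geq \tfrac14\cdot\mathrm{OPT}_2$. I would reduce it to the BCWZ guarantee $\E[\mathrm{GFT}_{\mathrm{BCWZ}}]\geq \tfrac12\cdot \mathrm{OPT}_2$ by proving
\[
\E[\mathrm{GFT}_M]\;\geq\;\tfrac{1}{2}\,\E[\mathrm{GFT}_{\mathrm{BCWZ}}],
\]
where $\mathrm{BCWZ}$ denotes the full-market seller/buyer-offering mechanism. Splitting by the events $\{q\geq 2\}$ and $\{q\leq 1\}$: on $\{q\geq 2\}$ the pointwise McAfee bound gives $\mathrm{GFT}_M(v)\geq \tfrac12\cdot\mathrm{OPT}(v)\geq \tfrac12\cdot\mathrm{GFT}_{\mathrm{BCWZ}}(v)$, since BCWZ can never exceed the first-best. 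On $\{q\leq 1\}$ both mechanisms can trade at most the single pair $(b_{(1)},s_{(1)})$, so the comparison reduces to a bilateral analysis. The most technical step will be showing that the bilateral BCWZ offer used inside $M$ extracts at least half of what the full-market BCWZ mechanism extracts on this event; this requires careful handling of the induced conditional distributions of $(b_{(1)},s_{(1)})$ given $\{q\leq 1\}$, and may require $M$ to randomize between seller-offering and buyer-offering prices to mirror BCWZ's own hedging at the full-market level.
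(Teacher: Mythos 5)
Your high-level architecture matches the paper's: switch on the realized efficient trade size $q$, run Trade Reduction when $q\ge 2$, and run a bilateral BCWZ-style offer \emph{restricted to the first-best pair} $(b_{(1)},s_{(1)})$ when $q\le 1$; restricting to the first-best pair is indeed the right move to avoid the interim non-monotonicity of the fully na{\"i}ve switch (cf.\ \cref{example:rvwm-switch-not-monotone}). However, as stated your mechanism is not BIC, because you compute the posted price from the \emph{unconstrained, unconditioned} priors. Two modifications are essential and missing. First, the offered price must be constrained to lie in $[b_{(2)},s_{(2)}]$: otherwise a buyer whose true value is below the current $b_{(1)}$ (so she truthfully gets nothing) can overbid to become the highest buyer and then face a seller offer optimized against the raw prior, which can be strictly below her true value, a profitable deviation; your ``any regime-switching misreport forces a strictly unprofitable price'' claim is exactly what fails here, and the paper needs \crefpart{ro}{price-within-constraints} (price at least $\bar b=b_{(2)}$, at most $\bar s=s_{(2)}$) to rule it out. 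Second, the offering party's price must be optimized against the \emph{conditioned} distribution ($D^B_{b_{(1)}}|_{\ge b_{(2)}}$, resp.\ $D^S_{s_{(1)}}|_{\le s_{(2)}}$): conditioned on the event that the mechanism runs this bilateral instance (identities of $b_{(1)},s_{(1)}$ and all other values fixed), that conditional law is the true posterior of the opponent's type, and only then is the mechanism-computed offer actually optimal for the offering agent's true type, which is what makes truthful reporting of the offering side a best response; with the unconditioned prior the offerer can profitably shade her report to move the computed price. Also note the Myerson-identity route does not rescue you: the allocation may be monotone, but your payments are posted prices rather than threshold payments, so monotonicity alone does not give BIC with ex-post weak budget balance.

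There is also a genuine gap in the ex-ante step on $\{q\le 1\}$: your claim that there ``both mechanisms can trade at most the single pair $(b_{(1)},s_{(1)})$'' is false. When $q=1$ the RVWM/BCWZ mechanism maximizes \emph{virtual} surplus and may trade a different pair (e.g., a lower-value buyer with a higher ironed virtual value, or $b_{(2)}$ with $s_{(1)}$ when $b_{(2)}\ge s_{(1)}$), and this mismatch is precisely the main hurdle the paper identifies. The paper's fix is a case analysis: whenever $b_{(2)}\ge s_{(1)}$ or $s_{(2)}\le b_{(1)}$, the constrained/conditioned offer is accepted with probability at least $\nicefrac{1}{2}$ (\crefpart{ro-trade-probability}{always-trade}), so the hybrid pointwise gets at least half of the first-best and hence at least half of whatever RVWM gets on that profile; only in the remaining case $b_{(2)}<s_{(1)}\le b_{(1)}<s_{(2)}$ is $(b_{(1)},s_{(1)})$ the unique positive-gain pair, and there one needs \cref{rvwm-offer} together with \crefpart{ro-trade-probability}{restrictions-dont-matter} to show the constrained, conditioned offer trades at least as often as the unconstrained one that upper-bounds RVWM. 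Note that this charging argument itself relies on the very constraints and conditionings your mechanism omits, so the IC fix and the ex-ante bound are intertwined; and your final appeal to BCWZ's $\nicefrac{1}{2}$-of-second-best must be made only at the ex-ante level (a pointwise comparison to RVWM, then one application of \cref{bcfm}), exactly as the paper cautions.
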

Note that the asymptotic efficiency that is obtained is with respect to the most demanding benchmark of the realized optimal GFT (the first-best and not only the second-best), providing the same guarantee as the one provided by the Trade Reduction mechanism of \citet{McAfee92}. The concurrent ex-ante guarantee is with respect to the second-best, similarly to the result of
BCWZ.

Before examining the problem thoroughly, one might be tempted to think that it is trivial to come up with such a mechanism for double auctions.
Here is a natural na{\"i}ve candidate for such a mechanism: first, the mechanism  computes the efficient trade size $q$. If $q>1$, it runs McAfee's Trade Reduction mechanism. Otherwise, it runs the mechanism of
BCWZ.
This na{\"i}ve approach turns out to fail miserably as the allocation is not even monotone: it may well be the case that the two agents that trade when $q=1$ (i.e., those that trade according to the mechanism of
BCWZ) are not the highest-value buyer and the lowest-cost seller,
and so in certain scenarios an agent that is reduced in the $q>1$ case (by the Trade Reduction mechanism) may be able to reduce her bid to move to the $q=1$ case and trade (for more details see Section \ref{sec:naive-issues}).

To present our mechanism, let us first very roughly review the behavior of the mechanism of
BCWZ in the bilateral-trade case: in this special case, their mechanism flips a coin; with probability $50\%$,
the seller offers a take-it-or-leave-it price to the buyer (calculated so as to maximize the expected utility of the seller), and with probability $50\%$, the buyer offers a take-it-or-leave-it price to the seller (calculated so as to maximize the expected utility of the buyer). In order to obtain the mechanism that we seek, we {carefully} make two main modifications to the na{\"i}ve ``compound'' mechanism described above: first, in order to address the above-discussed source of non-monotonicity, instead of running the mechanism of
BCWZ on the entire market, we run their bilateral-trade mechanism only on the (unique) pair in the efficient trade. To make the resulting mechanism truthful, we need to make an additional adjustment: in the seller-offering case (the adjustment to the buyer-offering case is analogous), we on the one hand force the seller to
set a price that is
at most the threshold bid that puts her in the efficient (first-best) trade, and on the other hand notify her of the values of all buyers except the one that she is facing, and calculate the price that she offers to maximize her expected utility conditioned upon the fact that the buyer that she is facing has value larger than all of these values. Both adjustments, and in particular the first one,  make the proof of the ex-ante guarantee, as well as the proof that the mechanism is IC, quite subtle.

The  main challenge in obtaining the approximation guarantee for the case where $q=1$ is to reconcile the fact that the pair that our mechanism attempts to trade on is determined by maximizing the realized GFT (first-best) and might not be the same as the pair that would have traded according to the mechanism of
BCWZ.
The main hurdle to obtaining the approximation guarantee for this case is therefore that for some valuation profiles, an offer between the unique pair in the efficient trade will be rejected, resulting in no trade in our hybrid mechanism, while in the mechanism of
BCWZ an offer will be made --- and accepted --- between a different pair. To overcome this, we have to carefully charge such losses in GFT to gains in GFT by other parts of our hybrid mechanism.

\subsubsection{Matching Markets}

As stated above, the mechanism of
BCWZ does not converge to the efficient outcome in large double-auction markets, and thus will clearly not do so in the more general matching market setting. Our goal is to present a mechanism for matching market that is IR, IC and ex-post weakly BB, but also provide ex-ante guarantees for the GFT as well as ex-post guarantees that converges to full efficiency ``as the market grows large''.
While in the double auction setting, every buyer can trade with every seller, this is no longer the case in a matching market. Our notion of a large matching market aims to generalize the fact that in a large market there are many agents that are ``equivalent'' up to their values. The sense of agents being equivalent in a matching market is that they can trade with exactly the same set of agents. So, we can naturally partition agents to equivalence classes, with every two agents of the same class being interchangeable in any matching (up to their valuations). We consider matching markets with a fixed set of such classes, and think about a large market as a market in which the number of agents of each class is growing large, yet the number of different classes that any agent can trade with remains bounded by some constant $d$.

Recall that the Trade Reduction mechanism of \citet{McAfee92} is defined for a double-auction setting.
We first present a generalization for matching markets of the Trade Reduction mechanism (\cref{sec:matching-tr}) and prove that it is ex-post asymptotically efficient ``as the market grows large'' in the above sense. To our knowledge, this nontrivial generalization of the Trade Reduction mechanism, which may also be of separate interest, is novel. Similarly to the Trade Reduction mechanism of \citet{McAfee92} for double-auction settings, this mechanism does not give any ex-ante approximation guarantee.

As with the double-auction case, we cannot directly combine our Trade Reduction mechanism for matching markets with the mechanism of
BCWZ while maintaining truthfulness. Therefore, we present a novel mechanism (\cref{sec:matching-offering}), which we call the \emph{Offering Mechanism for Matching Markets}.
Like the mechanism of
BCWZ, this mechanism does not provide the ex-post guarantee we are after, but we manage to carefully define it in a way that allows us to combine it with the Trade Reduction mechanism for matching markets to obtain {a truthful mechanism that} provides both types of guarantees that we are after.
{The precise definition of the Offering Mechanism that allows for both the truthfulness and the efficiency guarantees of the hybrid mechanism has been quite elusive to pin down, and the proofs of truthfulness, and in particular of the ex-ante guarantee, are considerably more subtle than in the double-auction setting.}
To prove the ex-ante guarantee of the Offering Mechanism, we compare it to the mechanism of
BCWZ, showing that it attains at least half of the GFT of their mechanism, resulting in an ex-ante guarantee of at least $\nicefrac{1}{4}$ of the expected GFT of the second best mechanism. Proving the ex-ante guarantee of the Offering Mechanism is the most technically challenging part of our analysis.
To prove this guarantee, we show that it is possible to carefully ``charge'' every edge of the matching of
BCWZ to edges of the first-best matching that will be traded in our Offering Mechanism, proving that
the expected GFT of our Offering Mechanism is at least half the expected GFT of the mechanism of
BCWZ.
The combination of the Offering Mechanism for matching markets with the Trade Reduction mechanism for matching markets creates the \emph{Hybrid Mechanism for Matching Markets} (\cref{sec:matching-hybrid}), giving us our main result.

\begin{theorem}[See \cref{thm:matching-main}]
	For the matching market setting, there exists a simple mechanism that is ex-post IR, Bayesian IC and ex-post weakly budget balanced, and satisfies both of the following.
	\begin{itemize}
		\item
The expected GFT of this mechanism is at least $\nicefrac{1}{4}$ of the expected GFT of the second-best mechanism.
		\item When $1\!-\!\nicefrac{d}{q}\geq \nicefrac{1}{2}$, this mechanism guarantees at least $1\!-\!\nicefrac{d}{q}$ of the realized optimal \mbox{(first-best)} GFT where $d$ denotes the maximum number of classes that
		any agent can trade with, and $q$ denotes the minimal positive number of trading agents of the same class in the welfare maximizing outcome.
		Thus the mechanism is asymptotically efficient in the sense that it converges to full efficiency as the number of trading agents in every class grows large, as long as $d$ is fixed.
	\end{itemize}
\end{theorem}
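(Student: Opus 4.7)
The plan is to realize the Hybrid Mechanism as a deterministic combination of the Offering Mechanism for Matching Markets and the Trade Reduction mechanism for matching markets developed in the preceding sections: compute the welfare-maximizing matching and read off $q$; if $q\ge 2d$ (so that the ex-post bound $1-d/q$ is already at least $\tfrac12$), run the Trade Reduction mechanism, and otherwise fall back to the Offering Mechanism. The proof then splits into (i)~the three mechanism-design properties, (ii)~the ex-post guarantee in the good regime, and (iii)~the ex-ante comparison with the second-best.

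\textbf{Mechanism-design properties.} Ex-post IR and ex-post weak BB hold pointwise for each component, as established in their respective sections, and therefore carry through the deterministic switch. Bayesian IC is more subtle because the switching indicator depends on the full reported profile, so a single misreport can flip which submechanism is invoked. I would verify BIC via the standard characterization by interim monotonicity plus the envelope payment identity, relying on the fact that the Offering Mechanism has been specifically crafted in the preceding section so that monotonicity is preserved across the switching boundary---the delicate calibration the introduction flags as ``elusive to pin down.''

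\textbf{GFT bounds.} For the ex-post claim, whenever $q\ge 2d$ the hybrid is by construction the Trade Reduction mechanism for matching markets, whose analysis in the earlier section delivers GFT at least $(1-d/q)$ times the realized first-best. For the ex-ante claim I would decompose the expectation over the events $\{q\ge 2d\}$ and $\{q<2d\}$. On $\{q\ge 2d\}$ the Trade Reduction output is pointwise at least $(1-d/q)\ge\tfrac12$ of the first-best and hence at least $\tfrac12$ of the second-best. On $\{q<2d\}$ the hybrid runs the Offering Mechanism, whose pointwise charging argument against the BCWZ matching (established in the preceding section) yields GFT at least half of BCWZ's GFT restricted to this event; combining with the fact that BCWZ is at least $\tfrac12$ of the second-best then gives at least $\tfrac14$ of the second-best on the same event. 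Summing the two conditional bounds weighted by their probabilities gives $\E[\text{Hybrid GFT}]\ge\tfrac14\,\E[\text{second-best GFT}]$.

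\textbf{Main obstacle.} The hardest step is the BIC verification: monotonicity of each agent's interim allocation has to survive the allocation jump as the profile moves across the switching threshold $q=2d$, and achieving this is precisely what constrains the definition of the Offering Mechanism. A secondary difficulty is to make sure that the Offering-versus-BCWZ comparison in the earlier section is genuinely pointwise (rather than only in expectation), so that it can legitimately be restricted to the event $\{q<2d\}$; if it were only an expectation-level statement, one would have to rerun the charging argument under the conditional distribution, which could require additional care.
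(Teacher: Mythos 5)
Your ex-ante argument contains the precise fallacy that the paper goes out of its way to warn against (in the discussion of why the double-auction proof yields $\nicefrac{1}{4}$ and not $\nicefrac{1}{2}$): you decompose over the events $\{q\ge 2d\}$ and $\{q<2d\}$ and, on the second event, invoke ``BCWZ is at least $\nicefrac{1}{2}$ of the second-best'' \emph{conditionally on that event}. But \cref{bcfm} is an unconditional, expectation-level statement; conditioned on $\{q<2d\}$ the RVWM mechanism may obtain far less than half of the conditional second-best GFT, so the step ``at least $\nicefrac{1}{4}$ of the second-best on the same event'' is unjustified. (Your ``secondary difficulty'' worry is aimed at the wrong lemma: the Offering-vs-RVWM comparison, \cref{offering-bcfm-pointwise}, \emph{is} pointwise and restricts to events just fine; it is the RVWM-vs-second-best comparison that does not.) The repair is the paper's route: prove a single per-profile inequality, namely that the hybrid obtains at least $\nicefrac{1}{2}$ of RVWM's realized GFT on \emph{every} profile --- on the TR branch because TR gets at least $\nicefrac{1}{2}$ of $OPT(\Bb,\Bs)$ and RVWM's realized GFT never exceeds $OPT(\Bb,\Bs)$, and on the Offering branch by \cref{offering-bcfm-pointwise} --- and only then take one unconditional expectation and apply \cref{bcfm} once.

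The BIC part is also a placeholder rather than a proof, and the tool you propose (interim monotonicity plus an envelope identity across the switching boundary) is not what makes the argument work. The actual argument is structural: the switching indicator (the paper's $\alpha(\Bb,\Bs)\ge\nicefrac{1}{2}$, or your coarser $q\ge 2d$, which is harmless since $\alpha(\Bb,\Bs)\ge 1-\nicefrac{d}{q}$) is a function of the first-best matching $M(\Bb,\Bs)$ alone; by the weight-independent, subset-consistent tie-breaking, an agent who is in the first-best cannot change $M(\Bb,\Bs)$ --- and hence cannot change which submechanism runs, nor which partner she faces --- while remaining matched, and exiting the first-best yields zero utility; within each branch one then invokes ex-post IC of TR (\cref{thm:matching-TR}) and the conditional BIC of the Offering Mechanism (\cref{offering}, via \cref{ro}); and for an agent \emph{outside} the first-best, any deviation that makes her trade forces a price at least as bad as her value, which is exactly what the offer constraints $\bar{s},\bar{b}$ buy via their identification with threshold/VCG prices (\cref{offering-constraint-vcg,lem:VCG-price}). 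Without articulating these points your ``monotonicity is preserved across the switching boundary'' is an appeal to the conclusion, not a derivation of it.
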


We remark that while our mechanism ex-ante guarantees a quantitatively smaller fraction of the second-best GFT than the $\nicefrac{1}{2}$ fraction guaranteed by the mechanism of
BCWZ, our mechanism has two qualitative advantages over their mechanism: first, we additionally obtain an ex-post guarantee on the GFT that is asymptotically efficient; and moreover, while both mechanisms ensure that a truthful agent \emph{never} regrets participating (ex-post IR), our mechanism is guaranteed to \emph{never} lose money, while theirs gives this guarantee only in expectation and sometimes runs a deficit.
	
\subsection{Additional Related Work}

The Trade Reduction Mechanism of \citet{McAfee92} was generalized to different settings to provide similar asymptotic efficiency guarantees as well as ex-post guarantees as a function of the trade size with IR, IC mechanisms that are budget balanced. \citet{BabaioffW05} have presented Trade Reduction mechanisms for Supply Chain settings, while \citet{BabaioffNP09} presented such a mechanism for a Spatially Distributed Market. In Section \ref{sec:matching-tr} we generalize the Trade Reduction mechanism to matching markets.

Recent papers \citep{BlumrosenD16, BlumrosenM16} have focused on IR and Bayesian IC mechanisms that guarantee approximate efficiency while maintaining budget balance.
\citet{BlumrosenD16} have presented a mechanism for bilateral-trade that is strongly budget balanced and obtains in expectation at least a constant fraction of the optimal welfare (the optimal welfare is the higher of the values of the two agents for the item). \citet{BlumrosenM16} have considered the more challenging goal of approximately maximizing the GFT, and have presented a mechanism that obtains in expectation at least $\nicefrac{1}{e}$ of the first-best GFT when the buyer's valuation is drawn from a distribution satisfying the monotone hazard rate condition.
\citet{DuttingTR17} have studied the prior-free setting and have designed ex-post IC mechanisms that approximate the GFT and are budget-balanced for two-sided markets with constraints on each side separately, but leave open the design problem when there are cross-market constraints, which we study in our paper. Recently, \citet{Colini-Baldeschi16} have showed how to design an IR, ex-post IC and strongly BB mechanism for the double auction setting where there may be matroid feasibility constraint on the set of buyers who can trade simultaneously. Their mechanism achieves a constant fraction of the ex-ante optimal social welfare, but provides no guarantee on the GFT. Moreover, their mechanism is not asymptotically efficient even when the market grows large. Finally, \citet{Colini-Baldeschi17} have considered a two-sided combinatorial auctions, where the market has multiple types of items for sale. Each seller might own multiple items
and she has additive valuation over her items. Every buyer has XOS valuation over the items. They have showed that a variant of a sequential posted price mechanism can achieve a constant fraction of the optimal social welfare. Their mechanism neither provides any ex-post guarantees nor converges to efficiency in large market. Indeed, their mechanism may not trade a single pair of buyer and seller even when there are many tradeable pairs~\footnote{This could happen when every item's expected contribution to the social welfare is not much bigger than its expected cost.}.

\section{Preliminaries}
\label{sec:preliminaries}

\subsection{Model and Definitions}

\paragraph{Agents and Utilities}
In a market for identical goods, there is a finite set $S$ of sellers with one good each, and a finite set $B$ of unit-demand buyers, with $|S|\ge2$ and $|B|\ge2$. Each seller $j\in S$ has a cost $s_j>0$ that she incurs if she sells her item, and each buyer $i\in B$ has a value $b_i>0$ that she derives if she purchases an item. We assume that an agent who does not trade does not incur any cost or derive utility from this. Let $\Bs$ be the vector of sellers' costs and $\Bb$ be the vector of buyers' values.
The costs and values are sampled from agent-specific (but not necessarily identical) nonnegative distributions $D_i^B$ for each buyer $i\in B$ and $D_j^S$ for each seller $j\in S$, each independent of all other distributions. Agents have quasi-linear utilities and are risk neutral.

\paragraph{Trading Constraints}
In a matching market setting, an undirected bipartite graph $G=(S,B,E)$ with the sellers on one side and the buyers on the other constraints transactions. A set of \emph{trading agents} $K$ is a set of buyers and of sellers
that can be partitioned into pairs, each of one buyer and one seller that are neighbors in $G$ (this is equivalent to a matching of the set $K$ in $G$)
---
the set $K$ corresponds to each seller selling her item, and each buyer buying one of the items sold from one of its neighbors in $G$.
The \emph{size of trade} of $K$ is defined to be $|K\cap S|=|K\cap B|$.

\paragraph{Gains from Trade}
The \emph{gains from trade (GFT)} when the set $K$ (of trading agents) is trading is defined to be $\sum_{i\in K\cap B} b_i - \sum_{j\in K\cap S} s_j$.
Given a valuation profile $(\Bb,\Bs)$, a set of trading agents is \emph{efficient}
if it maximizes the gains from trade from $(\Bb,\Bs)$ among all sets of trading agents.

\paragraph{Mechanisms}
We consider direct-revelation mechanisms in which each agent reports a \emph{type} (value for buyers, cost for sellers), so the mechanism is a function from reported valuation profiles to a set of trading agents and to payments from each agent to the mechanism.
A mechanism is \emph{Bayesian incentive compatible (BIC)} if each agent, by being truthful, maximizes her expected utility (over the randomization of the mechanism and the types of the others, assuming that they are truthful\footnote{Our BIC mechanisms will actually satisfy a slightly stronger truthfulness property, being truthful for every realization of the coins of the mechanism, yet only in expectation over the types of the other agents.}.) A mechanism is \emph{ex-post IC} if being truthful maximizes the agent's utility for any actions (reports) of the others.
A mechanism is \emph{(ex-post) IR} if the utility for a truthful agent is non-negative, independent of the strategies of others.
A mechanism is \emph{interim IR} if the expected utility for a truthful agent is non-negative, when the expectation is over the randomization of the mechanism and the types of the other agents, when truthful. Clearly, if a mechanism is ex-post IR, then it is also interim IR.
As all the mechanisms in this paper are ex-post IR and BIC (or even ex-post IC), then unless otherwise stated, we assume that the reports are equal to the true values/costs.

A mechanism is \emph{ex-post weakly budget balanced (BB)} if for any valuation profile, the sum of payments to the mechanism is non-negative. A mechanism is \emph{ex-post strongly budget balanced} if for any valuation profile, the sum of total payments to the mechanism is zero.
A truthful mechanism is \emph{ex-ante weakly budget balanced} if the expected sum of payments to the mechanism is non-negative, where the expectation is over the types of all agents and the randomness of the mechanism. Clearly, if a mechanism is ex-post weakly budget balanced, then it is also ex-ante weakly budget balanced.
Following \citet{Colini-Baldeschi17} we say that one of the above budget balance properties holds for \emph{direct trade} if that budget balance property (weak or strong) holds for each of the trades separately.

\paragraph{Benchmarks}
Given a valuation profile $(\Bb,\Bs)$, let $M(\Bb,\Bs)$ be the \emph{first-best} matching, or the maximum-weight matching in $G$, where ties between agents are broken by the ``lexicographic order by IDs'' formally defined in Definition \ref{def:lex-order} in \cref{app:ties}.\footnote{This tie breaking rule satisfies two properties we use extensively: 1) it does not depend on weights, and 2) it is \emph{subset consistent} in the sense that when removing an edge $(i,j)$ from some matching $M$ and picking a matching on the remaining nodes
	$M\setminus \{i,j\}$, it will pick the matching of $M$ on these nodes.}
Slightly abusing notation, we use $M(\Bb,\Bs)$ to also denote the set of agents in the matching $M(\Bb,\Bs)$.
Let $OPT(\Bb,\Bs)$ be the GFT of the ``first-best'' $M(\Bb,\Bs)$, that is $OPT(\Bb,\Bs)=\sum_{(i,j)\in M(\Bb,\Bs)} (b_i-s_j)$.
Note that the VCG mechanism (which is not budget balanced) attains a GFT of $OPT(\Bb,\Bs)$ on every valuation profile $(\Bb,\Bs)$.

A mechanism is called \emph{second-best} if it maximizes the expected gains from trade among all BIC, interim IR and ex-ante weakly budget balanced mechanisms.

\paragraph{Special Cases}
The case where $G$ is the \emph{complete} bipartite graph (i.e., any seller can trade with any buyer) is called the \emph{double-auction} setting.
In the double-auction setting, for every valuation profile $(\Bb,\Bs)$ we denote the size of the efficient set of trading agents by $q(\Bb,\Bs)$.
The case where $|S|=|B|=1$ and the buyer and the seller are connected by an edge in $G$ (so this is also a special case of double-auction) is called the \emph{bilateral-trade} setting.

\subsection{The Trade Reduction Mechanism}

In the double-auction setting, the \emph{Trade Reduction (TR) mechanism} \citep{McAfee92} is a mechanism that finds the most efficient trade of only $q(\Bb,\Bs)-1$ items,\footnote{If $q(\Bb,\Bs)=0$ there is no trade in the TR mechanism, and no payments are made.}
and charges each agent his critical value for winning.
That is, the $q(\Bb,\Bs)\!-\!1$ highest-value buyers trade and pay the bid of the reduced buyer (the $q(\Bb,\Bs)$-highest buyer); they trade with the $q(\Bb,\Bs)\!-\!1$ lowest-cost sellers, each seller getting paid the cost of the reduced seller (the $q(\Bb,\Bs)$-lowest seller).

\begin{theorem}[\citealp{McAfee92}]\label{tr}
In the double-auction setting, the TR mechanism is ex-post IC, ex-post IR, and ex-post (direct trade) weakly budget balanced. For every valuation profile $(\Bb,\Bs)$, the gains from trade of this mechanism are at least an $1\!-\!\frac{1}{q(\Bb,\Bs)}$ fraction of $OPT(\Bb,\Bs)$.
\end{theorem}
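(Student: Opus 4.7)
The plan is to order buyers in decreasing value $b_{(1)}\ge b_{(2)}\ge\cdots$ and sellers in increasing cost $s_{(1)}\le s_{(2)}\le\cdots$, so that $q=q(\Bb,\Bs)$ is the largest $k$ with $b_{(k)}\ge s_{(k)}$, the efficient matching pairs $b_{(k)}$ with $s_{(k)}$ for $k=1,\dots,q$, and the TR mechanism trades $(b_{(k)},s_{(k)})$ for $k=1,\dots,q-1$ at uniform buyer-side price $b_{(q)}$ and uniform seller-side price $s_{(q)}$. With this notation, ex-post IR is immediate: every winning buyer $i$ has $b_i\ge b_{(q-1)}\ge b_{(q)}$ and therefore non-negative utility, and symmetrically for each winning seller. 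Ex-post direct-trade weak BB follows because each individual traded pair yields the mechanism a surplus of $b_{(q)}-s_{(q)}\ge 0$, by the very definition of $q$.

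For ex-post IC, I would appeal to Myerson's lemma for single-parameter domains, treating each buyer (resp.\ each seller) as a one-dimensional agent whose private parameter is $b_i$ (resp.\ $-s_j$). It then suffices to verify monotonicity of the allocation and that each winner's payment equals her critical bid. Monotonicity for a buyer is clear: raising $b_i$ weakly improves her rank in the sorted list of values, and can only weakly increase $q$ (raising any $b_k$ preserves or creates the defining inequalities $b_{(k)}\ge s_{(k)}$), so a winner remains a winner. For the critical bid, I would fix all other reports and track how $q$ shifts as the agent's own bid varies; a short case analysis, depending on whether the agent is currently reduced, currently trading, or currently losing, shows that her winning threshold equals precisely the price she is charged under TR. The argument for sellers is entirely symmetric.

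For the efficiency bound, write $\mathrm{GFT}_{\text{TR}}=\sum_{k=1}^{q-1}(b_{(k)}-s_{(k)})$ and $OPT(\Bb,\Bs)=\mathrm{GFT}_{\text{TR}}+(b_{(q)}-s_{(q)})$. Since $b_{(k)}$ is non-increasing and $s_{(k)}$ is non-decreasing in $k$, the differences $b_{(k)}-s_{(k)}$ are non-increasing in $k$, so $b_{(q)}-s_{(q)}$ is the smallest of the $q$ summands making up $OPT(\Bb,\Bs)$ and, in particular, is at most their average. Removing it therefore deletes at most a $1/q$ fraction of $OPT(\Bb,\Bs)$, leaving $\mathrm{GFT}_{\text{TR}}\ge(1-\nicefrac{1}{q})\cdot OPT(\Bb,\Bs)$, as required.

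The main delicacy I expect is the IC verification, because $q$ itself depends on the reporting agent's own bid, creating an apparent circularity in the notion of ``critical bid.'' The cleanest resolution is to recompute the threshold on the market with the agent in question \emph{removed} and verify that it coincides with the TR payment on the full market; this turns the argument into a routine application of Myerson's payment formula and avoids any book-keeping of how $q$ moves as a function of the agent's own report.
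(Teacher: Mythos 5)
The paper does not actually prove this statement: \cref{tr} is imported from \citet{McAfee92} (in the simplified Trade Reduction form described just above it), so there is no in-paper proof to compare yours against. Your self-contained argument is correct and is the standard one for this simplified TR variant: ex-post IR and per-trade weak budget balance follow from every winner's price lying between $s_{(q)}$ and $b_{(q)}$ with $b_{(q)}\ge s_{(q)}$ by the definition of $q$; the efficiency bound follows because $OPT(\Bb,\Bs)=\sum_{k=1}^{q}\bigl(b_{(k)}-s_{(k)}\bigr)$ and the reduced pair contributes the smallest of the $q$ (nonnegative) summands, hence at most $OPT(\Bb,\Bs)/q$; and IC follows from monotonicity plus threshold payments. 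The one step you leave as a plan --- verifying that the charged price $b_{(q)}$ is each winner's critical bid --- does go through, and the ``circularity'' you flag is harmless for a concrete reason you could state explicitly: for a fixed profile of the others, as a winning buyer varies her bid within the winning region her rank stays at most $q-1$, so the order statistics at positions $q$ and $q+1$ are bids of \emph{other} buyers and do not move; hence $q$ and the reduced buyer's value are constant on the winning region, and a short check (e.g.\ using that all of $b_{(1)},\dots,b_{(q-1)},x$ exceed $s_{(q)}$ when $x>b_{(q)}$, while bidding below $b_{(q)}$ pushes her rank to at least $q$ without increasing the trade size) shows the winning threshold is exactly $b_{(q)}$, and symmetrically $s_{(q)}$ for sellers. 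Spelling that out would make the IC part complete; ties are handled by the paper's lexicographic tie-breaking and do not affect the argument.
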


Note that if $q(\Bb,\Bs)=1$, then no ex-ante approximation to the GFT is achieved by the TR mechanism,
while for $q(\Bb,\Bs)\geq 2$, \cref{tr} guarantees at least half the efficient GFT for $(\Bb,\Bs)$, ex-post.

\subsection{The Random Virtual-Welfare Maximizing Mechanism of\texorpdfstring{\\}{ }\texorpdfstring{\citet{BrustleCFM17}}{Brustle et al. [2017]}}\label{sec:preliminaries-rvwm}

\citet{BrustleCFM17} present a mechanism for trading with downward-closed constraints (which subsume matching constraints), which we will refer to throughout this paper as the \emph{Random Virtual-Welfare Maximizing (RVWM)} mechanism.
We will first describe this mechanism, and then distill from this description the information that will be required for our analysis.
The mechanism is described in terms of the ironed virtual value and virtual cost functions of the agents. For any buyer $i$, the \emph{ironed virtual value function}\footnote{When the CDF of $D^B_i$ is differentiable, then the (non-ironed) virtual value $\varphi_i(b_i)$ of seller $i$ with value $b_i$ is defined as $b_i-\frac{1-D^B_i(b_i)}{d^B_i(b_i)}$, where $D^B_i$ and $d^B_i$ are the CDF and PDF of the distribution $D^B_i$ from which buyer $i$'s value is drawn. If the virtual value function $\varphi_i$ is not non-decreasing, then we perform an ironing procedure to make it monotone, resulting in the \emph{ironed} virtual value function $\tilde{\varphi}_i$. We refer the reader, e.g., to \citet{BrustleCFM17} for more details.} \citep{Myerson81} of buyer $i$ is denoted by $\tilde{\varphi}_i(\cdot)$ and for the purposes of our analysis it is enough to observe that it is a non-decreasing function such that for every value $b_i$ we have $\tilde{\varphi}_i(b_i)\le b_i$. For any seller $j$, the \emph{ironed virtual cost function}\footnote{This function is defined symmetrically to \citeauthor{Myerson81}'s ironed virtual value function. When the CDF of $D^S_j$ is differentiable, then the (non-ironed) virtual cost $\tau_j(s_j)$ of seller $j$ with cost $s_j$ is defined as $s_j+\frac{D^S_j(s_j)}{d^S_j(s_j)}$, where $D^S_j$ and $d^S_j$ are the CDF and PDF of the distribution $D^S_j$ from which seller $j$'s cost is drawn. If the virtual cost function $\tau_j$ is not non-decreasing, then we perform an ironing procedure to make it monotone, resulting in the \emph{ironed} virtual cost function $\tilde{\tau}_j$. We refer the reader to \citet{BrustleCFM17} for more details.} of seller $j$ is denoted by $\tilde{\tau}_j$ and for the purposes of our analysis it is enough to observe that it is a non-decreasing function such that for every cost $s_j$ we have $\tilde{\tau}_j(s_j)\ge s_j$.

This RVWM mechanism flips a coin to uniformly pick one of the following two mechanisms to run:
\begin{itemize}
\item Generalized Seller-Offering Mechanism (GSOM): Given valuation profile $(\Bb,\Bs)$, let $M_1^*(\Bb,\Bs)$ be the maximum weight matching\footnote{Follow the same breaking tie rules as the first-best matching.} of $G$ when the weight of every edge $(i,j)\in E$ is $\tilde{\varphi}_i(b_i)-s_j$.
For every pair $(i,j)\in M_1^*(\Bb,\Bs)$, trade buyer $i$ with seller $j$. The allocation rule is monotone and every agent pays (or receives) her critical value to trade in the mechanism.
\item Generalized Buyer-Offering Mechanism (GBOM): Given valuation profile $(\Bb,\Bs)$, let $M_2^*(\Bb,\Bs)$ be the maximum weight matching\footnote{Follow the same breaking tie rules as the first-best matching.} of $G$ when the weight for every edge $(i,j)\in E$ is $b_i-\tilde{\tau}_j(s_j)$.
For every pair $(i,j)\in M_2^*(\Bb,\Bs)$, trade buyer $i$ with seller $j$. The allocation rule is monotone and every agent pays (or receives) her critical value to trade in the mechanism.
\end{itemize}

The only additional properties of the ironed virtual value and cost functions that our analysis will require will be used through the following \lcnamecref{rvwm-offer}.

\begin{observation}\label{rvwm-offer}
Let $(\Bb,\Bs)$ be a valuation profile.
If trade occurs with some positive probability on a given edge $(i,j)$ in the RVWM mechanism, then trade would occur on the same edge with at least the same probability in the mechanism that runs one of the following, with probability $50\%$ each:
\begin{itemize}
\item
Seller $j$ offers a price to buyer $i$ that maximizes the utility of seller $j$ in expectation over the distribution $D^B_i$ from which buyer $i$'s valuation was drawn, and trade occurs if and only if this price is at most buyer $i$'s valuation $b_i$.
\item
Buyer $i$ offers a price to seller $j$ that maximizes the utility of buyer $i$ in expectation over the distribution $D^S_j$ from which seller $j$'s valuation was drawn, and trade occurs if and only if this price is at least seller $j$'s cost $s_j$.
\end{itemize}
\end{observation}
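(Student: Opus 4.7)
The plan is to decouple the claim according to the two outcomes of the coin flip. Both the RVWM mechanism and the simple mechanism in the observation are $50/50$ mixtures of two sub-mechanisms---GSOM versus GBOM on one side, and the seller-offering (SO) versus buyer-offering (BO) bilateral-trade rules on the other. Thus, for the fixed profile $(\Bb,\Bs)$, the desired probability inequality on edge $(i,j)$ decouples into the two deterministic implications: (a) ``if GSOM trades on $(i,j)$ then SO trades on $(i,j)$,'' and (b) ``if GBOM trades on $(i,j)$ then BO trades on $(i,j)$.'' Summing the $1/2$-weighted indicators then yields the stated probability comparison.

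For (a), suppose $(i,j)\in M_1^*(\Bb,\Bs)$. Partial matchings are feasible in $G$, and removing a single edge from a matching still yields a matching; hence a maximum-weight matching cannot contain a strictly negative-weight edge (else deleting that edge would strictly increase the total weight). Therefore
\[
\tilde{\varphi}_i(b_i)-s_j \;\ge\; 0, \qquad \text{i.e.,} \qquad \tilde{\varphi}_i(b_i)\;\ge\; s_j.
\]
A standard Myerson calculation then identifies the maximizer of the seller's expected profit $(p-s_j)\Pr_{b\sim D_i^B}[b\ge p]$---i.e., the price that SO offers---as $p^{\ast}:=\inf\{p:\tilde{\varphi}_i(p)\ge s_j\}$. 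Since $\tilde{\varphi}_i$ is non-decreasing and $\tilde{\varphi}_i(b_i)\ge s_j$, we obtain $b_i\ge p^{\ast}$, so SO trades on $(i,j)$, proving~(a).

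Implication (b) is the fully symmetric claim in which one replaces $M_1^*$ by $M_2^*$ and $\tilde{\varphi}_i$ by $\tilde{\tau}_j$: if $(i,j)\in M_2^*$ then $b_i-\tilde{\tau}_j(s_j)\ge 0$; the BO optimizer of $(b_i-p)\Pr_{s\sim D_j^S}[s\le p]$ is $p^{\dagger}:=\sup\{p:\tilde{\tau}_j(p)\le b_i\}$; and $\tilde{\tau}_j(s_j)\le b_i$ together with monotonicity of $\tilde{\tau}_j$ gives $s_j\le p^{\dagger}$, so BO trades on $(i,j)$.

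The only mildly subtle point, which I expect to be the main source of friction rather than a real obstacle, is ironing: on a flat region of $\tilde{\varphi}_i$ (respectively $\tilde{\tau}_j$) multiple prices are revenue-maximizing for SO (respectively BO), but they all produce the same trade set on buyer $i$ (respectively seller $j$), so the implications above are insensitive to which specific optimal offer is chosen in the observation's statement. Beyond this, the argument uses only the non-decreasingness of the ironed virtual value and cost functions, which is the very property explicitly highlighted in the excerpt; hence the observation reduces cleanly to the sign of the virtual surplus on the edge $(i,j)$.
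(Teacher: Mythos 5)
Your proposal is correct and follows essentially the same route as the paper: condition on the coin flip, use that a maximum-weight matching contains no negative-weight edge to get $\tilde{\varphi}_i(b_i)\ge s_j$ (resp.\ $\tilde{\tau}_j(s_j)\le b_i$), and invoke Myerson's characterization of the optimal take-it-or-leave-it offer, which is exactly your $p^{\ast}=\inf\{p:\tilde{\varphi}_i(p)\ge s_j\}$ (the paper writes it as $\tilde{\varphi}_i^{-1}(s_j)$). The only difference is cosmetic: the paper simply notes that $\tilde{\varphi}_i^{-1}(s_j)$ is \emph{a} utility-maximizing price that is at most $b_i$, rather than arguing (as in your aside, which is not quite accurate in degenerate tie cases) that every optimal price yields the same trade decision; with ties resolved toward the lowest (seller) and highest (buyer) optimizer, as in the paper's SO/BO definitions and your $p^{\ast}$, $p^{\dagger}$, the argument goes through as you wrote it.
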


To see why \cref{rvwm-offer} follows from the above definition, note that if for a valuation profile $(\Bb,\Bs)$ there is trade with positive probability on an edge $(i,j)$, then either GSOM or GBOM traded that edge. If GSOM traded that edge, then it means that $\tilde{\varphi}_i(b_i)-s_j\ge0$. So, $\tilde{\varphi}_i^{-1}(s_j)\le b_i$. \citet{Myerson81} shows that $\tilde{\varphi}_i^{-1}(s_j)$ is a price that when offered, maximizes the expected utility of seller $j$ with cost $s_j$ from buyer $i$ (the ironed virtual value function $\tilde{\varphi}_i$ encodes the valuation distribution of buyer $i$). So, since this price is at most $b_i$, it would have been accepted in the seller-offering mechanism described in \cref{rvwm-offer}. If GBOM traded this edge, then similarly an offer would have been accepted in the buyer-offering mechanism described in \cref{rvwm-offer}.

BCWZ prove that the RVWM mechanism guarantees at least half of the GFT of the second-best ex-ante:

\begin{theorem}[\citealp{BrustleCFM17}]\label{bcfm}
The RVWM mechanism for downward-closed constraints of \citet{BrustleCFM17} is ex-post IC, ex-post IR, and ex-ante weakly budget balanced, and in expectation gets a $\nicefrac{1}{2}$-fraction of the gains from trade
of the second-best mechanism.
\end{theorem}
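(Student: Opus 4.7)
The plan is to establish ex-post IC, ex-post IR, and ex-ante WBB, and then the $\nicefrac{1}{2}$ approximation. For ex-post IC and IR, each of GSOM and GBOM uses a maximum-weight-matching allocation whose edge weights are monotone in each agent's single-parameter type ($\tilde{\varphi}_i(b_i)$ is non-decreasing in $b_i$, and $-s_j$ is non-decreasing in $-s_j$, for GSOM; symmetrically for GBOM), paired with critical-value payments. By Myerson's single-parameter characterization, each sub-mechanism is ex-post IC and ex-post IR; the uniform random mixture inherits both properties pointwise on the coin realization.

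For ex-ante WBB, Myerson's payment identity applied per agent gives that the expected revenue of GSOM equals $\E\bigl[\sum_{(i,j)\in M_1^*}(\tilde{\varphi}_i(b_i) - \tilde{\tau}_j(s_j))\bigr]$, and the expected revenue of GBOM equals $\E\bigl[\sum_{(i,j)\in M_2^*}(\tilde{\varphi}_i(b_i) - \tilde{\tau}_j(s_j))\bigr]$. Neither summand is manifestly non-negative; showing that the \emph{average} is non-negative is delicate. I would argue it by exploiting the maximality of $M_1^*$ with respect to $\sum(\tilde{\varphi}_i - s_j)$ and of $M_2^*$ with respect to $\sum(b_i - \tilde{\tau}_j)$, together with the inequalities $\tilde{\varphi}_i \le b_i$ and $\tilde{\tau}_j \ge s_j$, to compare both sub-mechanisms' virtual surplus to that of the auxiliary matching $M_3^*$ that maximizes $\sum(\tilde{\varphi}_i - \tilde{\tau}_j)$ (which is non-negative since the empty matching is feasible). \cref{rvwm-offer} provides a complementary handle by reinterpreting each traded edge as a posted-price exchange, making the payment structure easier to control.

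The heart of the proof is the $\nicefrac{1}{2}$-factor GFT guarantee. Let $M^*$ denote the allocation of a second-best mechanism and consider the pointwise identity
\[
b_i - s_j \;=\; \tfrac{1}{2}\bigl[(\tilde{\varphi}_i(b_i) - s_j) + (b_i - \tilde{\tau}_j(s_j))\bigr] \;+\; \tfrac{1}{2}\bigl[(b_i - \tilde{\varphi}_i(b_i)) + (\tilde{\tau}_j(s_j) - s_j)\bigr].
\]
Summing over the edges of $M^*$ and taking expectation decomposes $\mathrm{GFT}(M^*)$ into a \emph{virtual-welfare} half and an \emph{information-rent} half. For the virtual half, since $M_1^*$ maximizes $\sum(\tilde{\varphi}_i - s_j)$ we have $\E\bigl[\sum_{M^*}(\tilde{\varphi}_i - s_j)\bigr] \le \E\bigl[\sum_{M_1^*}(\tilde{\varphi}_i - s_j)\bigr] \le \E[\mathrm{GFT}(\mathrm{GSOM})]$ (the last step uses $\tilde{\varphi}_i \le b_i$), and symmetrically $\E\bigl[\sum_{M^*}(b_i - \tilde{\tau}_j)\bigr] \le \E[\mathrm{GFT}(\mathrm{GBOM})]$. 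For the rent half, Myerson's identity applied per agent to $M^*$ turns $\E\bigl[\sum_{M^*}\bigl((b_i-\tilde{\varphi}_i) + (\tilde{\tau}_j - s_j)\bigr)\bigr]$ into the sum of buyer and seller interim utilities under $M^*$, which by interim IR (each summand non-negative) together with the ex-ante WBB of $M^*$ (expected buyer payments at least expected seller payments, rearranged as $U_B + U_S \le \mathrm{GFT}(M^*)$) is at most $\mathrm{GFT}(M^*)$. Combining,
\[
2\,\mathrm{GFT}(M^*) \;\le\; \E[\mathrm{GFT}(\mathrm{GSOM})] + \E[\mathrm{GFT}(\mathrm{GBOM})] + \mathrm{GFT}(M^*),
\]
so $\mathrm{GFT}(M^*) \le 2\,\E[\mathrm{GFT}(\mathrm{RVWM})]$, the claimed $\nicefrac{1}{2}$ factor. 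I expect the main obstacles to be the ex-ante WBB step and the rent-bounding step: both require applying Myerson's identity coherently across agents whose allocations are coupled by the matching constraints, treating each agent's report as a single parameter with all other reports as exogenous randomness, and then combining IR with ex-ante WBB precisely to bound the total rent by the achieved GFT.
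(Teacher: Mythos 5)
This statement is not proven in the paper at all: \cref{bcfm} is imported verbatim from \citet{BrustleCFM17}, and the only facts about the RVWM mechanism the paper ever uses are the statement itself together with \cref{rvwm-offer}. So there is no in-paper proof to compare you against; I can only judge your reconstruction on its own terms. Your incentive argument (monotone maximum-weight-matching allocation in each agent's single parameter, critical-value payments, coin independent of reports) is fine, and your \nicefrac{1}{2}-factor argument is, as far as I can tell, exactly the decomposition of the cited source: split $b_i-s_j$ on the second-best's traded edges into the two virtual-surplus terms and the two rent terms, bound $\E\bigl[\sum_{M^*}(\tilde{\varphi}_i-s_j)\bigr]\le\E[\mathrm{GFT}(\mathrm{GSOM})]$ via pointwise maximality of $M_1^*$ and $\tilde{\varphi}_i(b_i)\le b_i$ (symmetrically for GBOM), and bound the rent terms by the second-best's total interim utility, which its ex-ante weak budget balance caps by its GFT. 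One correction there: with ironed virtual functions the per-agent ``identities'' are inequalities, $\E[p_i]\le\E[\tilde{\varphi}_i(b_i)x_i]$ and $\E[\text{payment to }j]\ge\E[\tilde{\tau}_j(s_j)x_j]$ for any BIC, interim-IR mechanism, but they point in the direction your argument needs, so the conclusion $\mathrm{GFT}(M^*)\le 2\,\E[\mathrm{GFT}(\mathrm{RVWM})]$ stands.

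The genuine gap is the ex-ante weak budget balance step, which you flag as ``delicate'' but do not actually close. First, your revenue formula for GSOM is not an identity under ironing: the buyers' side is exact (the allocation depends on $b_i$ only through $\tilde{\varphi}_i(b_i)$), but the sellers' allocation depends on $s_j$ directly, so the expected payment to seller $j$ is $\E[\tau_j(s_j)x_j]\ge\E[\tilde{\tau}_j(s_j)x_j]$, possibly strictly; hence $\E\bigl[\sum_{M_1^*}(\tilde{\varphi}_i-\tilde{\tau}_j)\bigr]$ is only an \emph{upper} bound on GSOM's expected profit, so even proving it non-negative would not establish WBB. Second, the route you sketch does not work as stated: maximality of $M_1^*$ for the weights $\tilde{\varphi}_i-s_j$, combined with $\tilde{\tau}_j\ge s_j$, only gives $\sum_{M_1^*}(\tilde{\varphi}_i-s_j)\ge\sum_{M_3^*}(\tilde{\varphi}_i-s_j)\ge\sum_{M_3^*}(\tilde{\varphi}_i-\tilde{\tau}_j)\ge0$, which says nothing about the sign of $\sum_{M_1^*}(\tilde{\varphi}_i-\tilde{\tau}_j)$, the quantity you actually need. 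Note that already in bilateral trade GSOM runs an ex-post deficit whenever the seller's threshold $\tilde{\varphi}_i(b_i)$ exceeds the buyer's threshold $\tilde{\varphi}_i^{-1}(s_j)$, and balance emerges only in expectation (there, both expected flows equal $\E[\tilde{\varphi}_i(b_i)\mathbf{1}[\tilde{\varphi}_i(b_i)\ge s_j]]$); a correct general argument must exploit this kind of accounting of threshold payments rather than a matching-comparison, and if you want a self-contained proof instead of the citation, this is the piece that still has to be supplied.
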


Note that while the ex-ante guarantee on the GFT of
the RVWM mechanism (\cref{bcfm}) is with respect to the \emph{second-best} mechanism, the ex-post guarantee on the GFT of the TR mechanism (\cref{tr}) is with respect to the more demanding benchmark of the (realized) \emph{first-best} gains from trade. Also note that while the RVWM mechanism is ex-post IC like the TR mechanism, it is only ex-ante, rather than ex-post, weakly budget balanced. While our main result will be stated to guarantee that our mechanism is BIC and ex-post weakly budget balanced, we will note that our mechanism can also be made ex-post IC for the price of being only ex-ante weakly budget balanced, thus matching these guarantees of the RVWM mechanism (while adding an asymptotically efficient ex-post guarantee on the GFT).

\section{Shortcomings of the RVWM Mechanism and of\texorpdfstring{\\}{ }Na{\"i}ve Modifications thereto}
\label{sec:naive-issues}

In this \lcnamecref{sec:naive-issues}, we demonstrate the shortcomings of the RVWM mechanism that motivate our work, as well as the ineffectiveness of na{\"i}ve modifications to this mechanism in overcoming these shortcomings. We first show that the RVWM mechanism of \citet{BrustleCFM17} is not asymptotically efficient, even ex-ante, and then show that two na{\"i}ve strategies to combine this mechanism with the Trade Reduction mechanism of \citet{McAfee92} are not monotone, even interim, and therefore there is no hope in coupling them with a payment rule so as to make them Bayesian IC.

\subsection{Asymptotic Inefficiency of the RVWM Mechanism}

We first observe that the the RVWM mechanism of \citet{BrustleCFM17} is not asymptotically efficient for double auctions, even ex-ante and compared to the second-best.

\begin{example}\label[example]{example:rvwm-not-eff}
Consider a double-auction market with $n$ seller and $n$ buyers, with agents' values and costs sampled i.i.d.\ from the uniform distribution over $[0,1]$. We claim that even when $n$ is large, the RVWM mechanism will only give in expectation a  constant fraction (strictly smaller than $1$) of the expected GFT of the second-best
mechanism. In particular, even in a large market, and even in expectation, the efficiency of the RVWM mechanism with respect to the second best (and thus also with respect to the first-best) does not converge to full efficiency.
\end{example}

\begin{proof}[Proof sketch]
We prove \cref{example:rvwm-not-eff} in \cref{app:examples}, and here we give some intuition. When $n$ is large, it is easy to observe that
in an efficient trade roughly the $\nicefrac{n}{2}$ lowest-cost sellers (essentially distributed uniformly in $[0,\nicefrac{1}{2}]$) will sell their items to roughly the $\nicefrac{n}{2}$ highest-value buyers (essentially distributed uniformly in $[\nicefrac{1}{2},1]$),
increasing the welfare by about $\nicefrac{1}{2}$ in expectation in each trade, resulting in the first-best having asymptotic expected GFT of about $\nicefrac{n}{4}$. The second-best mechanism gets GFT that is in expectation at least the expected GFT of the Trade Reduction mechanism, so it has asymptotic expected GFT of about $\nicefrac{n}{4}-1$, asymptotically the same as the first-best mechanism.
On the other hand, when a buyer offers an optimized price facing a uniform distribution as in the RVWM mechanism, she offers only half of her value (and similarly, a seller offers a price that is half-way between her cost and $1$).
This results in only roughly the $\nicefrac{n}{3}$ lowest-cost sellers (essentially distributed uniformly in $[0,\nicefrac{1}{3}]$) selling their items to roughly the $\nicefrac{n}{3}$ highest-value buyers (essentially distributed uniformly in $[\nicefrac{2}{3},1]$), increasing the welfare by about $\nicefrac{2}{3}$ in expectation in each trade, resulting with asymptotic expected GFT of about $\nicefrac{2n}{9}<\nicefrac{n}{4}$.
\end{proof}

\subsection{Nonmonotonicity of Na{\"i}ve Modifications to the\texorpdfstring{\\}{ }RVWM Mechanism}

If one were not interested in incentive compatibility, then getting the ``best of both worlds'' would have been extremely simple: compute the outcome of both the RVWM and the TR mechanisms, and choose the outcome with higher realized GFT. As we now observe, this allocation rule is not monotone, even in an interim sense. Thus, there is no hope to couple this allocation rule with payments that will make it even Bayesian IC.

\begin{example}\label[example]{example:max-not-monotone}
Consider a double-auction setting with two buyers and two sellers. The value of buyer~$1$ is drawn from $U([0,90])$, the value of buyer~$2$ is drawn from $U([0,30])$, the cost of seller~$1$ is fixed to be $0$ with probability~$1$, and the cost of seller~$2$ is $0$ with probability $\nicefrac{1}{5}$ and is $25$ with probability $\nicefrac{4}{5}$. For the allocation rule that chooses the outcome with higher realized GFT among the outcomes of RVWM and TR, buyer~$2$ with valuation $b_2=24$ trades with higher probability (over the distributions of all other values and costs, and over the randomness of the mechanism) than buyer~$2$ with valuation $b_2=26$.
\end{example}

\begin{proof}[Proof sketch]
We prove \cref{example:max-not-monotone} in \cref{app:examples}, and here show only that this mechanism is not ex-post (rather than interim) monotone. So, fix $s_2=25$ and $b_1=30$ (and $s_1=0$). So, $\tilde{\tau}_1(s_1)=\tau_1(s_1)=0$, $\tilde{\tau}_2(s_2)=\tau_2(s_2)=25+\frac{25}{4}>30$, $\tilde{\varphi}_1(b_1)=\varphi_1(b_1)=2b_1-90=-30$, and $\tilde{\varphi}_2(b_2)=\varphi_2(b_2)=2b_2-30$.

Regardless of whether $b_2=26$ or $b_2=24$, since $\tilde{\tau}_2(s_2)>b_2$ and since $b_1>b_2$, GBOM chooses neither buyer~$2$ nor seller~$2$ as traders. Furthermore, since $s_2>\tilde{\varphi}_1(b_1)$ and since
$\tilde{\varphi}_2(b_2)\ge\tilde{\varphi}_2(24)=18>\tilde{\varphi}_1(b_1)$, GSOM chooses neither buyer~$1$ nor seller~$2$ as traders, and so since $\tilde{\varphi}_2(b_2)=18>0=s_1$, in GSOM buyer~$2$ and seller~$1$ trade. So, regardless of whether $b_2=26$ or $b_2=24$, buyer~$2$ trades in RVWM with probability $50\%$.

If $b_2=26$, then the first-best matches all agents and so in TR buyer~$1$ and seller~$1$ trade (so buyer~$2$ does not trade). So, in this case the GFT of TR is higher than the expected GFT (over the randomness of the mechanism) of RVWM, and so the TR outcome is chosen and buyer~$2$ does not trade. Conversely, if $b_2=24$, then the first-best matches only buyer~$1$ and seller~$2$ (since $b_2<s_2$) and so there is no trade in TR and the RVWM outcome is chosen and buyer~$2$ trades with probability $50\%$.
\end{proof}

Another na{\"i}ve way to combine the RVWM and TR mechanisms may be based on the value of $q$: if $q(\Bb,\Bs)\ge2$ (TR gives an ex-post guarantee), then choose the TR outcome, and otherwise choose the RVWM outcome. As we now observe, this allocation rule is not monotone either, even in an interim sense. Thus, there is also no hope to couple this allocation rule with payments that will make it even Bayesian IC.

\begin{example}\label[example]{example:rvwm-switch-not-monotone}
Consider a double-auction setting with two buyers and two sellers. The value of buyer~$1$ is drawn from $U([0,90])$, the value of buyer~$2$ is drawn from $U([0,30])$, the cost of seller~$1$ is fixed to be $0$ with probability~$1$, and the cost of seller~$2$ is $0$ with probability $\nicefrac{1}{5}$ and is $25$ with probability $\nicefrac{4}{5}$. For the allocation rule that chooses the TR outcome if $q(\Bb,\Bs)\ge2$ and the RVWM outcome otherwise, buyer~$2$ with valuation $b_2=24$ trades with higher probability (over the distributions of all other values and costs, and over the randomness of the mechanism) than buyer~$2$ with valuation $b_2=26$.
\end{example}

\begin{proof}[Proof sketch]
We prove \cref{example:rvwm-switch-not-monotone} in \cref{app:examples}, and here only note that the above proof that the mechanism from \cref{example:max-not-monotone} is not ex-post monotone in fact also shows that the mechanism from \cref{example:rvwm-switch-not-monotone} is not ex-post monotone, as these two mechanism coincide on the two valuation profiles used in the above proof of the lack of ex-post monotonicity.
\end{proof}

As can be seen from the analysis of both \cref{example:max-not-monotone,example:rvwm-switch-not-monotone} (even already from the proof of lack of ex-post monotonicity), a main source of the issues described in these \lcnamecrefs{example:rvwm-switch-not-monotone} is that the RVWM mechanism may choose as traders agents who are not in the first-best. Our approach in this paper will indeed offer an alternative to the RVWM mechanism that only chooses as traders agents who are in the first-best, and despite this added restriction still gives a qualitatively similar ex-ante guarantee to that of the RVWM mechanism. While in double auctions settings this alternative to RVWM can be considered modification of the RVWM mechanism (see \cref{sec:ro}), in matching settings this alternative is substantially different than the RVWM mechanism (see \cref{sec:matching-offering}).

\section{The Seller-Offering, Buyer-Offering, and\texorpdfstring{\\}{ }Randomized-Offerer Mechanisms}
\label{sec:ro}

Before we turn to our main results, in this \lcnamecref{sec:ro} we present a slightly modified version of the bilateral-trade construction of \citet{BrustleCFM17}, which we will use as a building block in the construction of our hybrid mechanisms, and prove several properties thereof.

\begin{definition}[SO, BO, RO Mechanisms]
Fix $D_s$ and $D_b$ to be nonnegative distributions,
and fix $\bar{s}\ge\sup\supp D_s$ and $\bar{b}\le\inf\supp D_b$ s.t.\ $\bar{s}\ge\bar{b}$.
We define three mechanisms for trade between a seller with cost $s\sim D_s$ and a buyer with value $b\sim D_b$.
\begin{itemize}
\item
The \emph{Seller-Offering (SO) mechanism with offer constraint $\bar{s}$ and target distribution $D_b$} is the mechanism in which a seller with cost $s$ offers to the buyer the lowest price $p$ among the prices that maximize the utility of the seller in expectation over $b\sim D_b$, under the constraint $p\le\bar{s}$. That is, the offered price is $\min\bigl\{p~\big|~p\in\arg\max_{p\le\bar{s}}(p-s)\cdot\bigl(1-D_b(p)\bigr)\bigr\}$. The buyer accepts this price if it is no greater than the realized value $b$ of the buyer. If the buyer accepts this price, then trade occurs at this price; otherwise, no trade occurs.
\item
The \emph{Buyer-Offering (BO) mechanism with offer constraint $\bar{b}$ and target distribution $D_s$} is the mechanism in which a buyer with value $b$ offers to the seller the highest price $p$ among the prices that maximize the utility of the buyer in expectation over $s\sim D_s$, under the constraint $p\ge\bar{b}$. That is, the offered price is $\max\bigl\{p~\big|~p\in\arg\max_{p\ge\bar{b}}(b-p)\cdot\bigl(1-D_s(p)\bigr)\bigr\}$. The seller accepts this price if it is no less than the realized cost $s$ of the seller. If the seller accepts this price, then trade occurs at this price; otherwise, no trade occurs.
\item
The (Bilateral) \emph{Randomized Offerer (RO) mechanism with SO parameters $\bar{s}$ and $D_b$ and BO parameters $\bar{b}$ and $D_s$} is the mechanism that flips a coin, with probability $\nicefrac{1}{2}$ it runs the SO mechanism with offer constraint $\bar{s}$ and target distribution $D_b$, and otherwise it runs the BO mechanism with offer constraint $\bar{b}$ and target distribution $D_s$.
\end{itemize}
\end{definition}

We slightly strengthen the special case of the incentive and budget guarantees of \cref{bcfm} for bilateral trade, and prove that they still hold even with offer constraints as in the RO mechanism.\footnote{We note that each of the SO and BO mechanisms is a deterministic and ex-post monotone mechanism, and so can be made ex-post IC (and ex-post IR) by charging the threshold winning prices. The resulting modified mechanisms, however, are not ex-post (even weakly) budget balanced, but only ex-ante (strongly) budget balanced.} We furthermore show that whenever trade occurs, the trading happens at a price that indeed lies between the seller's and the buyer's constraint.

\begin{sloppypar}
\begin{lemma}\label[lemma]{ro}
Fix $D_s$ and $D_b$ to be nonnegative distributions and fix $\bar{s}\ge\sup\supp D_s$ and $\bar{b}\le\inf\supp D_b$ s.t.\ $\bar{s}\ge\bar{b}\ge0$.
Consider the RO mechanism with SO parameters $\bar{s}$ and $D_b$ and BO parameters $\bar{b}$ and $D_s$.
\begin{enumerate}
\item\label[part]{ro-properties}
When valuations are drawn from $D_s\times D_b$, this mechanism is a BIC\footnote{Since the allocation rule of the RO mechanism is ex-post monotone, by charging threshold prices we could strengthen the incentive-compatibility property from BIC to ex-post IC (while maintaining ex-post~IR), but then the weak-budget-balance guarantee would only hold ex-ante and not ex-post (similarly to the guarantee of \cref{bcfm}). Moreover, once we settle for ex-ante budget balance, we could get ex-ante \emph{strong} budget balance ``for free'' by equally dividing our ex-ante expected profits (assuming truthful bidding) among the agents, \citep[see, e.g.,][]{BrustleCFM17}.}, ex-post IR, and ex-post (direct trade) strongly budget balanced mechanism.
\item\label[part]{ro-price-within-constraints}
Whenever trade occurs in this mechanism, it holds that the price $p$ that the seller pays the buyer satisfies $\bar{b}\le p\le\bar{s}$.
\end{enumerate}
\end{lemma}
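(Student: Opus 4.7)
The plan is to verify each property in turn, handling SO and BO separately and then inheriting the corresponding properties of RO through its convex-combination structure.

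For \crefpart{ro}{ro-properties}: Bayesian incentive compatibility of SO follows from the classical menu argument applied to the seller. If she reports $s'$ instead of her true cost $s$, her resulting expected utility is computed with her true $s$ but at the mechanism-selected price $p^*(s')$, which by the very definition of $p^*(s)$ as an argmax for type $s$ is weakly dominated by the value at the truthful report $s' = s$. The buyer in SO is not reporting but only responding to a take-it-or-leave-it offer, and her rule ``accept iff $p \le b$'' is trivially optimal. BO is symmetric, and RO inherits BIC as a convex combination. Ex-post IR of the accepting party is immediate from her acceptance criterion. For the offering party in SO, any $p < s$ with strictly positive acceptance probability yields strictly negative expected utility and is thus strictly dominated by the feasible choice $p = s$ (feasible since $s \le \bar{s}$), which yields $0$; hence such $p$ cannot lie in the argmax, and whenever trade actually occurs we must have $p \ge s$. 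The BO case is symmetric. Ex-post (direct trade) strong budget balance is immediate because every realized trade consists of a direct payment of $p$ from buyer to seller, with no funds flowing through the mechanism.

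For \crefpart{ro}{ro-price-within-constraints}: I establish the bounds separately for SO and BO. In SO, the upper bound $p \le \bar{s}$ is the explicit offer constraint. For the lower bound, any candidate $p' < \bar{b}$ yields $1 - D_b(p') = 1$ by the hypothesis $\bar{b} \le \inf\supp D_b$, so the seller's expected utility at $p'$ equals $p' - s$; the feasible alternative $p = \bar{b}$ (feasible since $\bar{b} \le \bar{s}$) yields $\bar{b} - s$, strictly larger, so $p'$ cannot lie in the argmax, and in particular the lowest element of the argmax satisfies $p \ge \bar{b}$. For BO the constraint gives $p \ge \bar{b}$ directly, and for any $p' > \bar{s}$ the hypothesis $\bar{s} \ge \sup\supp D_s$ makes seller acceptance certain both at $p'$ and at $\bar{s}$, so the buyer's actual expected payoff is $b - p' < b - \bar{s}$, ruling $p'$ out of the argmax and forcing the highest element of the argmax to satisfy $p \le \bar{s}$.

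The main subtlety I anticipate is the correct treatment of ties in the selection rules (``lowest'' for SO, ``highest'' for BO), because when the optimal expected utility is zero many prices may simultaneously lie in the argmax. Each argument above is therefore designed to exhibit \emph{strict}, rather than merely weak, dominance of the candidate prices lying outside $[\bar{b}, \bar{s}]$ (and of prices violating ex-post IR of the offering party), which is exactly what is needed to guarantee that the tie-breaking selection can never pull the chosen $p$ outside $[\bar{b}, \bar{s}]$ nor realize trade at an IR-violating price.
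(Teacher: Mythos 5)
Your proposal is correct and follows essentially the same route as the paper's proof: establish each property for SO (with BO symmetric and RO inheriting them), get BIC from the argmax definition of the offered price, ex-post IR for the offerer from the feasibility of offering her own cost (resp.\ value) within the constraint, triviality for the accepting party, direct payments for strong budget balance, and for \crefpart{ro}{ro-price-within-constraints} the explicit constraint on one side plus sure acceptance of any price below $\bar{b}$ (resp.\ above $\bar{s}$) on the other. Your explicit strict-dominance treatment of the tie-breaking in the argmax is a slightly more careful rendering of the same argument the paper states informally.
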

\end{sloppypar}

The proof of \cref{ro} is given in \cref{app:ro}.
To conclude this section, we will prove two more properties of the RO mechanisms that will allow us to lower-bound its GFT guarantee: the first will allow us to compare its GFT to that of the first-best, and the second will allow us to compare its GFT to that of the RVWM mechanism.

\begin{lemma}\label[lemma]{ro-trade-probability}
Fix $D_s$ and $D_b$ to be nonnegative distributions and fix $\bar{s}\ge\bar{b}\ge0$.
Fix $s\le\bar{s}$ to be a cost for the seller and fix $b\ge\bar{b}$ to be a value for the buyer.
Consider the RO mechanism with SO parameters $\bar{s}$ and $D_b|_{\ge\bar{b}}$ and BO parameters $\bar{b}$ and $D_s|_{\le\bar{s}}$.\footnote{For a distribution $D$ and a value $c$, we use $D|_{\le c}$ to denote this distribution conditioned upon the drawn value being at most $c$, and use $D|_{\ge c}$ to denote this distribution conditioned upon the drawn value being at least $c$.}
\begin{enumerate}
\item\label[part]{ro-trade-probability-always-trade}
If $\bar{b}\ge s$ or $\bar{s}\le b$, then the probability that trade occurs in this mechanism is at least~$\nicefrac{1}{2}$.
\item\label[part]{ro-trade-probability-restrictions-dont-matter}
If $\bar{b}\le s$ and $\bar{s}\ge b$, then the probability that trade occurs in this mechanism is at least as high as the probability that trade occurs in the RO mechanism with SO parameters $\infty$ and $D_b$ and BO parameters $0$ and $D_s$.
\end{enumerate}
\end{lemma}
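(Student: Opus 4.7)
The plan is to decouple the two sub-mechanisms comprising the RO mechanism and to argue about each separately, since each is invoked with probability $\nicefrac{1}{2}$ and is deterministic given the valuations.

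For Part 1, I would show that under either hypothesis, one of the two sub-mechanisms trades with probability $1$. If $\bar{b}\ge s$, every price offered in BO obeys the constraint $p\ge\bar{b}\ge s$, so the seller always accepts and BO trades for sure. Symmetrically, if $\bar{s}\le b$, every price offered in SO obeys $p\le\bar{s}\le b$, so the buyer always accepts and SO trades for sure. Either way, the overall RO trade probability is at least $\nicefrac{1}{2}$.

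For Part 2, under $\bar{b}\le s$ and $\bar{s}\ge b$, I would show that each constrained sub-mechanism trades whenever its unconstrained counterpart does. For SO, the key observation is a proportionality: for $p\in[\bar{b},\bar{s}]$ the conditional survival function $\Pr_{b'\sim D_b|_{\ge\bar{b}}}[b'\ge p]$ equals $(1-D_b(p))/(1-D_b(\bar{b}))$, so on this interval the constrained SO objective is a positive scalar multiple of the unconstrained SO objective. For $p\in[0,\bar{b})$ the constrained objective equals $p-s\le\bar{b}-s\le 0$, which is dominated by the value $0$ attained at the feasible point $p=s\in[\bar{b},\bar{s}]$, so the constrained argmax lies in $[\bar{b},\bar{s}]$ and by the proportionality coincides there with the unconstrained argmax restricted to $[\bar{b},\bar{s}]$. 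Now suppose the unconstrained SO trades; then its offer $p'^{*}$ satisfies $p'^{*}\le b$, and moreover $p'^{*}\ge s$ since any lower price is weakly dominated in the unconstrained problem by $p=s$. Combined with $b\le\bar{s}$ and $s\ge\bar{b}$ from the Part 2 hypothesis, this gives $p'^{*}\in[\bar{b},\bar{s}]$, so $p'^{*}$ also lies in the constrained argmax, and the min tie-breaking rule forces the constrained SO offer $p^{*}\le p'^{*}\le b$, so the constrained SO trades. A mirror argument handles BO: on $[\bar{b},\bar{s}]$ the analogous proportionality $\Pr_{s'\sim D_s|_{\le\bar{s}}}[s'\le p]=\Pr_{s'\sim D_s}[s'\le p]/\Pr_{s'\sim D_s}[s'\le\bar{s}]$ holds; for $p>\bar{s}$ the constrained BO objective reduces to $b-p<b-\bar{s}$, pinning the argmax inside $[\bar{b},\bar{s}]$; and when unconstrained BO trades with offer $q'^{*}$, one similarly has $q'^{*}\in[\bar{b},\bar{s}]$, so the max tie-breaking yields $q^{*}\ge q'^{*}\ge s$.

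The main obstacle will be the tie-breaking book-keeping: one must ensure that the min (for SO) and max (for BO) selection rules preserve the trade inequalities $p^{*}\le b$ and $q^{*}\ge s$ even when the argmax is a large set. The Part 2 hypotheses $\bar{b}\le s$ and $\bar{s}\ge b$ are used precisely to force the unconstrained optimizer into $[\bar{b},\bar{s}]$ and thereby into the constrained argmax, enabling the tie-breaking comparison to go through.
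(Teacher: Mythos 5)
Your proposal is correct and follows essentially the same route as the paper: Part~1 is the identical observation that the offer constraint forces acceptance in one of the two sub-mechanisms, and Part~2 is the paper's argument that when the unconstrained/unconditioned SO (resp.\ BO) trades, its offered price lies in $[s,b]\subseteq[\bar{b},\bar{s}]$ and hence remains optimal for the constrained, conditioned problem, with the min (resp.\ max) tie-breaking only moving the offer in the favorable direction. Your explicit proportionality computation on $[\bar{b},\bar{s}]$ simply spells out the step the paper asserts when claiming the same price maximizes utility against $D_b|_{\ge\bar{b}}$ under the constraint $p\le\bar{s}$, and your ``weak domination'' justification of $p'^{*}\ge s$ plays the same role as the paper's appeal to ex-post IR.
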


The proof of \cref{ro-trade-probability} is given in \cref{app:ro}. In a nutshell, \cref{ro-trade-probability-always-trade} holds since if, e.g., $\bar{b}\ge s$, then an offer by the buyer will always be accepted by the seller, and  \cref{ro-trade-probability-restrictions-dont-matter} holds since under the given assumptions, if trade occurs in the unconstrained and unconditioned RO mechanism, then the price offered there also satisfies all of the extra restrictions of the constrained and conditioned RO mechanism, and therefore the same price will be offered in that mechanism as well, resulting in trade there as well.

\section{Double Auctions}
\label{sec:double-auctions}

In this \lcnamecref{sec:double-auctions}, we present our results for the double-auction setting, in which there are no constraints on which seller can trade with which buyer (i.e., the graph $G$ is the full bipartite graph). Namely, we will present our hybrid mechanism for double auctions, which is an ex-post IR, BIC, ex-post weakly budget balanced mechanism, which ex-ante guarantees a constant fraction of the second-best, and is ex-post asymptotically efficient.

\subsection{A Hybrid Mechanism for Double Auctions}

While the RVWM mechanism is not asymptotically efficient, the Trade Reduction (TR) mechanism of \citet{McAfee92} is asymptotically efficient as it guarantees, ex-post, an $1\!-\!\frac{1}{q(\Bb,\Bs)}$ fraction of the efficient GFT, where $q(\Bb,\Bs)$ is the size of the most efficient trade (Theorem \ref{tr}).\footnote{If there is a trade with GFT of $0$, then there are efficient trades with different sizes. In this case trading according to the largest size will give full efficiency. }
As this mechanism gives no ex-ante guarantee (when $q(\Bb,\Bs)=1$), we create a \emph{hybrid mechanism} that runs the TR mechanism when $q(\Bb,\Bs)>1$ and run the RO mechanism with some constraints and conditional distributions otherwise. These constraints and conditionings of the distributions are needed both for incentive compatibility and for the ex-ante GFT guarantee. We now present this mechanism.

\begin{definition}[Hybrid Mechanism for Double Auctions]
Our \emph{hybrid mechanism} for double auctions is a direct revelation mechanism.
Given the reports $\Bb$ and $\Bs$ (that are assumed to be truthful), we
use $b_{(1)}$ to denote the buyer\footnote{Somewhat abusing notation, we use $b_{(1)}$ to refer both to this buyer and to his value, and similarly for other agents.} with maximum value (when breaking ties lexicographically by IDs), i.e., $b_{(1)}\geq b_i$ for every $i\in B$,
and use $b_{(2)}$ to denote the buyer with maximum value after removing buyer $b_{(1)}$.
Similarly, we use $s_{(1)}$ to denote the seller with minimal cost, and $s_{(2)}$ to denote the seller with the second-minimal cost.\footnote{Note that the maximal efficient set of trading agents is $\bigl\{s_{(1)},\ldots,s_{q(\Bb,\Bs)},b_{(1)},\ldots,b_{q(\Bb,\Bs)}\bigr\}$.}
The mechanism computes $q(\Bb,\Bs)$ and runs as follows.
\begin{itemize}
	\item If $q(\Bb,\Bs)\le1$,\footnote{Recall that in this case, if there is any trade with positive gains, then the maximal efficient set of trading agents is $\bigl\{s_{(1)},b_{(1)}\bigr\}$.} the mechanism computes the set of trading agents and payments by running the RO mechanism with SO parameters $\bar{s}=s_{(2)}$ and $D^B_{b_{(1)}}|_{\ge b_{(2)}}$ and BO parameters $\bar{b}=b_{(2)}$ and $D^S_{s_{(1)}}|_{\le s_{(2)}}$.\footnote{We note that in this case since $q(\Bb,\Bs)=1$, we have that $\bar{b}=b_{(2)}<s_{(2)}=\bar{s}$ and therefore indeed also $\bar{s}\ge\sup\supp\bigl(D^S_{s_{(1)}}|_{\le s_{(2)}}\bigr)$ and $\bar{b}\le\inf\supp\bigl(D^B_{b_{(1)}}|_{\ge b_{(2)}}\bigr)$.}
	\item If $q(\Bb,\Bs)\geq 2$, the mechanism computes the set of trading agents and payments by running the TR mechanism on $\Bb$ and $\Bs$.
\end{itemize}
\end{definition}

We will now sketch the intuition behind our choice, in the case where $q(\Bb,\Bs)=1$, of the constraints~$\bar{s},\bar{b}$ and the conditioned distributions $D^S_{s_{(1)}}|_{\le s_{(2)}}$ and $D^B_{b_{(1)}}|_{\ge b_{(2)}}$ for which the offered prices are optimized.
First, we would never want to allow $b_{(1)}$ to pay a price $p$ such that if $b_{(1)}$ had valuation~$p$ then she would not be in the first-best. This is since such a possibility would create an incentive for her to manipulate her bid if her valuation really were slightly higher than $p$ but still not high enough for her to be in the first-best: in this case, raising her bid would place her in the first-best, and she may end up paying~$p$, which would give her positive utility. So, we have to make sure that $b_{(1)}$ never offers, nor is ever offered, such a $p$ that is lower than $\bar{b}=b_{(2)}$. (In fact, the threshold bid of $b_{(1)}$ to be in the first-best is $\max\{b_{(2)},s_{(1)}\}$, but by definition of the RO mechanism, she would never pay less than $s_{(1)}$ as this would result in negative GFT, so we only need to make sure that she never pays less than $\bar{b}=b_{(2)}$.)
To make sure that $b_{(1)}$ never offers such a price $p$, we constrain her to offer at least $\bar{b}$ in the BO mechanism. To make sure that she is never offered such a price $p$ in the SO mechanism, we have $s_{(1)}$ optimize her offer under the assumption that the value of $b_{(1)}$ is drawn from $D^B_{b_{(1)}}|_{\ge\bar{b}}$, which is equivalent to disclosing to~$s_{(1)}$ that she has no point in offering a price lower than $\bar{b}$ since an offer of $\bar{b}$ will always be accepted. To see why the mechanism is truthful once we have set $\bar{b}$ (and $\bar{s}$) this way, consider the following hypothetical scenario. Say that after calculating that $q(\Bb,\Bs)=1$, the mechanism notifies $s_{(1)}$ and $b_{(1)}$ that they are the lowest-cost seller and highest-cost bidder, and furthermore notifies each of them of the values (and costs) of all other agents except the one that they are facing. In this case, the \emph{posterior distribution} of $s_{(1)}$ regarding $b_{(1)}$ is $D^B_{b_{(1)}}|_{\ge\max\{b_{(2)},s_{(1)}\}}$, so her best action is to optimize the price that she offers under this assumption, which is equivalent to optimizing the price that she offers for the distribution $D^B_{b_{(1)}}|_{\ge b_{(2)}}$ (but optimizing for the latter is easier to analyze, as it does not depend on the cost of $s_{(1)}$).

\begin{theorem}\label{thm:double-auction}
	For the double auction setting the above simple hybrid mechanism for double auctions is ex-post individually rational, Bayesian incentive compatible\footnote{Once again, since the allocation rule of the hybrid mechanism is ex-post monotone, by charging threshold prices we could strengthen the incentive-compatibility property from BIC to ex-post IC (while maintaining ex-post~IR), but then the weak-budget-balance guarantee would only hold ex-ante and not ex-post (similarly to the guarantee of \cref{bcfm}). Moreover, once we settle for ex-ante budget balance, we could get ex-ante \emph{strong} budget balance ``for free'' by equally dividing our ex-ante expected profits (assuming truthful bidding) among the agents, \citep[see, e.g.,][]{BrustleCFM17}.}, ex-post (direct trade) weakly budget balanced, and has both of the following efficiency guarantees:
	\begin{itemize}
		\item
It gets at least a $\nicefrac{1}{4}$-fraction of the efficient gains from trade ex-ante (second-best).
		\item
It gets at least a $\frac{q(\Bb,\Bs)-1}{q(\Bb,\Bs)}$-fraction of the efficient gains from trade ex-post (first-best).
Note that the mechanism is asymptotically efficient: as the trade size $q(\Bb,\Bs)$ goes to infinity, the fraction of the efficient gains from trade that it gets ex-post (first-best) goes to $1$.
	\end{itemize}
\end{theorem}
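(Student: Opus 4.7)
The proof splits along the two branches $q(\Bb,\Bs)\ge 2$ and $q(\Bb,\Bs)\le 1$, so I would organize the argument accordingly, first establishing the mechanism properties, then the ex-post efficiency bound, and finally the ex-ante efficiency bound. For ex-post IR, BIC, and ex-post weak BB, within the branch $q\ge 2$, \cref{tr} immediately gives ex-post IC, ex-post IR, and ex-post direct-trade weak BB. Within the branch $q\le 1$, the parameters $\bar s=s_{(2)}$, $\bar b=b_{(2)}$ satisfy $\bar s\ge\bar b\ge 0$ (since $q\le 1$ forces $b_{(2)}\le s_{(2)}$) and the support hypotheses of \cref{ro} are met by the conditional distributions, so \cref{ro-properties} yields ex-post IR, ex-post direct-trade strong (hence weak) BB, and BIC in the bilateral subgame. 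The nontrivial remaining obligation is BIC \emph{across} branches: no agent should be able to profitably misreport so as to switch the realized value of $q$ from one branch to the other. This is precisely what the choice of $\bar b$, $\bar s$, and the conditioning of the distributions is designed to enforce, as discussed in the intuition following the mechanism definition; I would verify it via Myerson's characterization, checking pointwise allocation monotonicity in each agent's report even at the threshold where $q$ jumps, coupled with threshold pricing at the induced critical values.

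The ex-post efficiency bound is then easy: for $q\ge 2$ it follows directly from \cref{tr}, and for $q\le 1$ the ratio $\tfrac{q-1}{q}$ is non-positive, so the bound holds trivially since the hybrid's realized GFT is non-negative by ex-post IR. For the ex-ante guarantee, I would establish the pointwise (over valuation profiles) inequality $\E[\mathrm{GFT}_{\mathrm{hyb}}\mid\Bb,\Bs]\ge\tfrac{1}{2}\,\E[\mathrm{GFT}_{\mathrm{RVWM}}\mid\Bb,\Bs]$ (expectations over the internal randomness of each mechanism), then take the outer expectation over $(\Bb,\Bs)$ and combine with \cref{bcfm} to obtain the $\nicefrac{1}{4}$ factor against the second-best. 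When $q\ge 2$ this pointwise inequality is immediate because TR yields at least $\tfrac{q-1}{q}\cdot OPT\ge\tfrac{1}{2}OPT$, while the RVWM's realized GFT is the GFT of some matching and hence at most $OPT$. When $q\le 1$, I would split on whether $s_{(1)}\ge b_{(2)}$ and $b_{(1)}\le s_{(2)}$ both hold. If they do, I observe that the only edge of positive GFT (and hence of positive virtual weight in either GSOM or GBOM) is $(s_{(1)},b_{(1)})$, so RVWM can trade only on that edge; then \cref{rvwm-offer} combined with \cref{ro-trade-probability-restrictions-dont-matter} shows the hybrid trades on $(s_{(1)},b_{(1)})$ with probability at least that of RVWM, yielding $\mathrm{GFT}_{\mathrm{hyb}}\ge\mathrm{GFT}_{\mathrm{RVWM}}$. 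Otherwise, \cref{ro-trade-probability-always-trade} applies (either $\bar b>s_{(1)}$ or $\bar s<b_{(1)}$), and the hybrid trades on $(s_{(1)},b_{(1)})$ with probability at least $\nicefrac{1}{2}$, yielding $\mathrm{GFT}_{\mathrm{hyb}}\ge\tfrac{1}{2}OPT\ge\tfrac{1}{2}\mathrm{GFT}_{\mathrm{RVWM}}$.

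The main obstacle I expect is the BIC verification across branches. The within-branch incentive properties are immediate from \cref{tr,ro}, and the ex-ante half-of-RVWM comparison, once the above case split is identified, falls out of \cref{rvwm-offer,ro-trade-probability}. However, ensuring overall BIC despite agents being able to influence which branch is triggered by their reports requires a careful pointwise monotonicity argument, and the specific choices of $\bar s$, $\bar b$, and the conditioning of the distributions are tuned precisely to make that monotonicity hold; any other natural choice, as illustrated by the na\"{\i}ve constructions in \cref{sec:naive-issues}, destroys it.
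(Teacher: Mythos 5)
Your treatment of ex-post IR, ex-post (direct trade) weak budget balance, the ex-post $\frac{q-1}{q}$ guarantee, and the ex-ante guarantee essentially reproduces the paper's argument: in particular, the pointwise comparison of the hybrid's GFT to that of the RVWM mechanism (rather than to the second-best), followed by an application of \cref{bcfm}, is exactly the paper's route, and your case split for $q\le1$ --- folding $q=0$ into the case where both $b_{(2)}\le s_{(1)}$ and $b_{(1)}\le s_{(2)}$ hold, using \cref{rvwm-offer} with \crefpart{ro-trade-probability}{restrictions-dont-matter} there, and \crefpart{ro-trade-probability}{always-trade} otherwise --- is a correct, slightly coarser version of the paper's four cases.

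The genuine gap is in the Bayesian IC part, which you correctly flag as the main obstacle but then defer. The route you propose --- ``Myerson's characterization, checking pointwise allocation monotonicity \ldots coupled with threshold pricing at the induced critical values'' --- does not prove the theorem as stated, because the hybrid mechanism does \emph{not} charge threshold prices in the RO branch: the trading pair exchanges the posted price, which is what yields ex-post (direct trade) strong budget balance there, and the offering agent's payment is not her critical value. Charging threshold prices instead would give ex-post IC but only \emph{ex-ante} weak budget balance (precisely the trade-off in the theorem's footnote), i.e., a different mechanism. Proving BIC of the actual mechanism requires the following ingredients, none of which appear in your plan: (i)~a conditioning argument showing that, fixing the identities of $b_{(1)}$ and $s_{(1)}$ and the values/costs of all other agents, the same RO instance is run on the whole conditioned region and the conditional distribution of the opposing agent's type is exactly the target distribution $D^B_{b_{(1)}}|_{\ge b_{(2)}}$ (resp.\ $D^S_{s_{(1)}}|_{\le s_{(2)}}$), so that \crefpart{ro}{properties} applies to the offering agent's price choice; (ii)~the observation that an agent in the efficient trading set cannot change the efficient trading set while remaining in it (so she cannot change which mechanism is run while still winning), which, together with the fact that the mechanism choice depends only on the efficient trading set and the losers' reports, rules out profitable cross-branch deviations by winners; and (iii)~an explicit case analysis for agents \emph{outside} the efficient trading set, showing that any deviation that makes them win (moving $q$ from $1$ to $\ge2$, staying at $q=1$, or moving from $q=0$ to $q=1$) forces a payment of at least their true value, using the structure of TR payments and \crefpart{ro}{price-within-constraints}. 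Without these steps the cross-branch incentive claim is unsubstantiated, and with the threshold-pricing substitute you suggest, the ex-post weak budget balance claim of the theorem would fail.
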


\subsection{Proof of Theorem~\ref{thm:double-auction}}

\begin{proof}[Proof of \cref{thm:double-auction}]
Recall that by \cref{tr,ro}, both the TR and the RO mechanisms are each ex-post IR, BIC, and ex-post (direct trade) weakly budget balanced.

\paragraph{Ex-post IR} Ex-post IR holds since both the TR and the RO mechanisms are ex-post IR.

\paragraph{Bayesian IC}
We will show that our hybrid mechanism is BIC for any buyer.\footnote{In fact, when our hybrid mechanism runs TR, then it is ex-post IC for every agent, and when a price~$p$ is offered by an agent in the RO mechanism, then our hybrid mechanism is Bayesian IC for the agent making the offer, and ex-post IC for all other agents including the agent who receives the offer.} A similar argument holds for truthfulness of the sellers.

We first claim that if a manipulation by a buyer does not change the choice of the mechanism that is run (TR or an instance of RO, where we consider each such instance to be a separate mechanism) by our hybrid mechanism, then it is nonbeneficial in expectation. For TR this follows since TR is ex-post IC.
To show this for RO,
we will show that that the region of the space of valuation/cost profiles where our hybrid mechanism for double auctions runs each instance of the RO mechanism can be partitioned into disjoint subsets where our hybrid mechanism is BIC on each such subset under the profile distribution conditioned upon being in that subset.

Fix a choice of the identity (but not the cost) of seller $s_{(1)}$ and the identity (but not the value) of bidder $b_{(1)}$, and fix a profile of costs and valuations for all other sellers and buyers (so in particular the cost $s_{(2)}$ and value $b_{(2)}$ are fixed). We first claim that either our hybrid mechanism runs the same instance of RO on all possible profiles $b,s$ that agree with these fixed choices, or does not run any instance of RO on any of these profiles. Indeed, if $s_{(2)}\le b_{(s)}$ (a conditioned fully determined by these fixed choices) then TR is run on all such profiles, and otherwise the RO mechanism with SO parameters $\bar{s}=s_{(2)}$ and $D^B_{b_{(1)}}|_{\ge b_{(2)}}$ and BO parameters $\bar{b}=b_{(2)}$ and $D^S_{s_{(1)}}|_{\le s_{(2)}}$ (note that all of these parameters are fully determined by the above fixed choices and do not depend on the cost $s_{(1)}$ or the value $b_{(1)}$) is run on all such profiles.

We will next show that our hybrid mechanism is BIC on the subset of all profiles that agree with these fixed choices. Note that when conditioning the distribution of all profiles to those that agree with such fixed choices, the cost of the seller $s_{(1)}$ (conditioned to agree with these fixed choices) is distributed precisely according to $s_{(1)}\sim D^S_{s_{(1)}}|_{\le s_{(2)}}$ and the value of the buyer $b_{(1)}$ (conditioned to agree with these fixed choices) is distributed precisely according to $b_{(1)}\sim D^B_{b_{(1)}}|_{\ge b_{(2)}}$. By \crefpart{ro}{properties}, we therefore have that our hybrid mechanism is BIC for the offering agent (and ex-post IC for any other agent) over all profiles that agree with these choices. We have therefore shown that if a manipulation by a buyer does not change the choice of the mechanism that is run by our hybrid mechanism, then it is nonbeneficial in expectation.

We now claim that a buyer who is in the efficient trading set cannot change the efficient trading set while remaining in this set. Indeed, to see that this is the case, suppose a buyer in the efficient trading misreports by adding $x$ (positive or negative) to his bid. The gains from trade from any trading set that includes this buyer therefore increase by $x$ (while the gains from trade of any other trading set remains the same); therefore, since we break ties in the same manner without and with the deviation, no other trading set that includes this buyer other than the true efficient trading set can ``become'' (as a result of the misreport) the new efficient trading set.

Since\ \ (1)~agents outside the efficient trading set never win,\ \ (2)~a buyer in the efficient trading set cannot change the efficient trading set while remaining in this set,\ \ (3)~the choice of the mechanism to run is completely determined by the efficient trading set and by the values/costs of the agents outside the efficient trading set, and\ \ (4)~a manipulation that does not change the choice of the mechanism to run is nonbeneficial in expectation,\ \ we conclude that there are no strategic opportunities (in expectation) for any buyer who is in the efficient trading set.

To complete the proof that our hybrid mechanism is BIC, it is therefore enough to show that there is no beneficial manipulation by a buyer who is not in the efficient trading set. We will in fact show that the mechanism is ex-post IC for such agents; we do so by considering several cases.
\begin{itemize}
\item
If $q(\Bb,\Bs)\geq 2$, then a buyer who is not in the efficient trading set cannot cause a move to $q(\Bb,\Bs)<2$. Any manipulation by such a buyer is therefore nonbeneficial since the TR mechanism (which is run prior to, and following, the manipulation) is ex-post IC.
\item
If $q(\Bb,\Bs)=1$, then we consider two possible manipulations by some buyer $b_{(j)}$ who is not in the efficient trading set (and is therefore not the true $b_{(1)}$):
\begin{itemize}
\item
First, consider a manipulation by $b_{(j)}$ that causes a move to $q(\Bb,\Bs)\ge2$ and causes her to win. We claim that in this case, this buyer, who was previously not in the efficient trading set, must pay at least her true value whenever she wins. Indeed, by definition of TR and since truly $q(\Bb,\Bs)=1$, since this buyer wins following the manipulation (and so is not reduced by the TR mechanism), she pays at least the original $b_{(1)}$, which is at least her true value. Therefore, she incurs non-positive utility.
\item
We next consider a manipulation by $b_{(j)}$ that maintains $q(\Bb,\Bs)=1$ and causes her to win (with some positive probability). We will show that whenever this buyer wins, she incurs non-positive utility. Since $q(\Bb,\Bs)=1$ is maintained following the manipulation, we must have that $b_{(j)}$ raised her bid to be higher than the original $b_{(1)}$, who is now in the role of $b_{(2)}$. By \crefpart{ro}{price-within-constraints}, if the manipulating buyer wins, then she pays at least the new $b_{(2)}$, i.e., the original $b_{(1)}$, which is at least her true value, and so she incurs non-positive utility.
\end{itemize}
\item
Finally, consider the case $q(\Bb,\Bs)=0$ and consider a manipulation by any buyer that causes her to win. Such a manipulation can only result in $q(\Bb,\Bs)=1$, so the manipulator, if she wins, trades with $s_{(1)}$, and by definition of RO and since this mechanism is ex-post IR for this seller, this buyer pays at least $s_{(1)}$. Since $q(\Bb,\Bs)=0$, we have that $s_{(1)}$ is larger than the true valuation of all buyers (including the manipulator), so the manipulator incurs negative utility whenever she wins.
\end{itemize}

\paragraph{Ex-post (direct trade) weak budget balance} Our hybrid mechanism is ex-post (direct trade) weakly budget balanced since the two mechanisms TR and  RO are both ex-post (direct trade) weakly budget balanced (the one is in fact ex-post (direct trade) strongly budget balanced).

\paragraph{Ex-post efficiency guarantee} When $q(\Bb,\Bs)=1$, then the guarantee vacuously holds, while when $q(\Bb,\Bs)\geq 2$, the guarantee follows from the same guarantee by the TR mechanism.

\paragraph{Ex-ante efficiency guarantee} We will show that for each valuation profile $(\Bb,\Bs)$, our hybrid mechanism achieves at least half of the gains from trade of the RVWM mechanism for the same valuation profile. Fix a valuation profile $(\Bb,\Bs)$. We consider several cases.
\begin{itemize}
\item
Consider the case where $s_{(2)}\le b_{(2)}$. Note that this is precisely the case where $q(\Bb,\Bs)\ge2$. In this case, our hybrid mechanism runs the TR mechanism, which by \cref{tr} guarantees at least a $\frac{q(\Bb,\Bs)-1}{q(\Bb,\Bs)}\ge\nicefrac{1}{2}$ fraction of the realized optimal gains from trade ex-post, and so at least a $\nicefrac{1}{2}$ fraction of the gains from trade of the RVWM mechanism.
\item
Consider the case where $s_{(1)}>b_{(1)}$. Note that this is precisely the case where $q(\Bb,\Bs)=0$. In this case, it is efficient to have no trade for $(\Bb,\Bs)$, and this is what both our hybrid mechanism and the RVWM mechanism do, so our hybrid mechanism has the same gains from trade as the RVWM mechanism.
\item
Consider the case where $b_{(2)} < s_{(2)}$, and in addition either $b_{(2)} \ge s_{(1)}$ or $s_{(2)} \le b_{(1)}$. In this case, since $q(\Bb,\Bs)=1$, we run the RO mechanism. By \crefpart{ro-trade-probability}{always-trade}, in this case $s_{(1)}$ and $b_{(1)}$ trade with probability at least $\nicefrac{1}{2}$, so our hybrid mechanism achieves at least a $\nicefrac{1}{2}$ fraction of the realized optimal gains from trade, and so at least a $\nicefrac{1}{2}$ fraction of the gains from trade of the RVWM mechanism.
\item
Finally, consider the case where $b_{(2)} < s_{(1)} \le b_{(1)} < s_{(2)}$. In this case, the only possible trading pair with positive gains is of $s_{(1)}$ with $b_{(1)}$, so if the RVWM mechanism achieves positive gains from trade, then it trades this pair with positive probability. By \cref{rvwm-offer}, in this case the GFT of the RVWM mechanism are therefore at least those of the RO mechanism with SO parameters $\infty$ (no constraint) and $D^B_{b(1)}$ (unconditioned distribution) and BO parameters $0$ (no constraint) and $D^S_{s_{(1)}}$ (unconditioned distribution) on that edge. Since $s_{(1)}>b_{(2)}=\bar{b}$ and $b_{(1)}<s_{(2)}=\bar{s}$, we have by \crefpart{ro-trade-probability}{restrictions-dont-matter} that the probability that trade occurs between $s_{(1)}$ and $b_{(1)}$ is at least as high in our hybrid mechanism (which runs the appropriate RO mechanism, constrained and conditioned) as it is in the unconstrained and unconditioned RO mechanism (that upper-bounds RVWM in this case). Therefore, in this case our hybrid mechanism achieves at least the gains from trade of the RVWM mechanism.
\end{itemize}
Combining all of the above, we have that
the expected gains from trade of our hybrid mechanism are at least a $\nicefrac{1}{2}$-fraction of those of the RVWM mechanism, and so by \cref{bcfm} at least a $\nicefrac{1}{4}$-fraction of the expected optimal gains from trade ex-ante (second-best).
\end{proof}

\subsubsection{Why The Proof of the Ex-Ante Guarantee Gives a Factor of \texorpdfstring{$\nicefrac{1}{4}$}{1/4} and Not~\texorpdfstring{$\nicefrac{1}{2}$}{1/2}}

Having read the proof of the ex-ante guarantee of \cref{thm:double-auction}, we note that at first glance, one may be tempted to consider the following na{\"i}ve adaptation of this proof into a ``proof'' of an ex-ante guarantee of $\nicefrac{1}{2}$ (rather than $\nicefrac{1}{4}$) of the second-best:
\begin{quote}
In each case analyzed above, the hybrid mechanism attains either at least the GFT of the RVWM mechanism, or at least half of the GFT of the first-best, which in turn is at least half of the GFT of the second-best. Since the GFT of the RVWM mechanism is in turn also at least half of the GFT of the second-best, we get that in either case our hybrid mechanism attains half of the GFT of the second-best.
\end{quote}
The problem with this ``proof'' is that it mixes ex-ante and ex-post guarantees. While indeed the hybrid mechanism attains, on each profile $(\Bb,\Bs)$, either at least the GFT of the RVWM mechanism or at least half of the GFT of the (first-best and therefore of the) second-best, it is wrong to assume that on each profile $(\Bb,\Bs)$ the GFT of the RVWM mechanism is at least half of the GFT of the second-best, as we only know that the expected GFT of the RVWM mechanism, over all profiles, is at least half of the expected GFT, over all profiles, of the second-best. In other words, it may hypothetically be that the RVWM mechanism performs poorly on the profiles on which our hybrid mechanism attains at least the GFT of the RVWM mechanism, and that the RVWM mechanism performs very well, surpassing half of the GFT of the second-best, and even half of the GFT of the first-best, on the profiles on which our hybrid mechanism attains at least half of the GFT of the first-best (so on average, the RVWM mechanism would indeed attain its guarantee), and in such a case, the above ``proof'' obviously fails.

\section{Main Results for Matching Markets}
\label{sec:matching}

In this section we will generalize the results of Section~\ref{sec:double-auctions} to matching markets.
Recall that a matching market is given by an undirected bipartite graph $G=(S,B,E)$ with nodes on one side representing the sellers and nodes on the other side representing the buyers, with edges indicating possible trades.
Recall that a profile $(\Bb,\Bs)$ assigns a value $b_i$ for each buyer $i\in B$ and a cost $s_j$ for each seller $j\in S$.

\subsection{A Trade Reduction Mechanism for Matching Markets}\label{sec:matching-tr}

We first present a generalized Trade Reduction mechanism for matching markets.
Like the Trade Reduction mechanism of \citet{McAfee92} for double-auctions, the \emph{Trade Reduction Mechanism for matching markets} that we define below picks a subset of the ``first-best'' trade, and determines the payments based on the values and costs of the agents that it removed from the first-best. The details are, however, more subtle than in the double-auction setting.

Recall from \cref{tr} that for every valuation profile $(\Bb,\Bs)$, the TR mechanism for double auctions attains GFT of at least a $1-\frac{1}{q(\Bb,\Bs)}$ fraction of $OPT(\Bb,\Bs)$. We note that giving the same guarantee for matching markets, with $q(\Bb,\Bs)$ remaining total the size of trade in the market, is not possible --- just consider a matching market that consists of two connected components, each a double auction. So, to phrase our TR mechanism for matching markets, we will first have to define some notation that will eventually help us phrase its GFT guarantee (which will still generalize the $1-\frac{1}{q(\Bb,\Bs)}$ of TR for double auctions, but in a slightly different way).

We say that the {\em classes of buyer $i$ and $i'$ are the same} if for any seller $j$ it holds that $(i,j)\in E$ if and only if  $(i',j)\in E$. Similarly, we define classes for sellers.\footnote{Note that the classes that we define depend neither on the values of the buyers nor on the costs of the sellers (nor on the distributions from which these values/costs are drawn).}
That is, two agents are of the same class if in any case that one of them can trade with some agent $x$, it also holds that the other agent can trade with agent $x$.
Thus, nodes in the graph can be partitioned into equivalent classes, where each equivalent class consists of all agents of some fixed class. Each such class either includes only buyers, or only sellers, but never both. Let $T_t$ denote the set of agents of class $t$.
For each class $t$ we denote by $q_t = q_t(M(\Bb,\Bs))$ the number of agents of class~$t$ that are matched in $M(\Bb,\Bs)$, that is
$q_t= |T_t\cap M(\Bb,\Bs)|$.
Additionally, we denote by $d_t= d_t(M(\Bb,\Bs))$ the number of distinct classes $t'$ such that there is an edge in $M(\Bb,\Bs)$ between an agent of class $t$ and an agent of class $t'$.

\begin{definition}[Trade Reduction Mechanism for Matching Markets]
Fix a matching market $G=(S,B,E)$.
The Trade Reduction mechanism for $G$ gets as input a profile $(\Bb,\Bs)$ and outputs an allocation and payments as follows.
\begin{itemize}
	\item Given profile $(\Bb,\Bs)$, let $M(\Bb,\Bs)$ be the ``first-best'' matching. Any agent not in $M(\Bb,\Bs)$ is marked as a loser and does not trade, paying 0.
	\item For each class $t$,  recall that $q_t$ is the number of agent of class $t$ that are matched in $M(\Bb,\Bs)$ and $d_t$ is the number of different classes that trade with agents of class $t$ in $M(\Bb,\Bs)$.
	\begin{itemize}
	\item
	For each buyer class $t$, the set of trading buyers will be the set of $q_t-d_t$ highest-value buyers of class $t$ (breaking ties lexicographically by IDs).\footnote{Note that the number of trading buyers is non-negative, as for every class $t$ it holds that $q_t\geq d_t$.}
	We say that $d_t$ buyers of class $t$ were \emph{reduced}.
	Each buyer of class $t$ pays the highest value reported by any reduced buyer of class $t$.   	
	\item
	For each seller class $t$, the set of trading sellers will be the set of $q_t-d_t$ lowest-cost sellers of class $t$ (breaking ties lexicographically by IDs).\footnote{Note that the number of trading sellers is non-negative, as for every class $t$ it holds that $q_t\geq d_t$.}
	We say that $d_t$ sellers of class $t$ were \emph{reduced}.
	Each buyer of class $t$ is paid the lowest cost reported by any reduced seller of class $t$.   	
	\end{itemize}
\end{itemize}
We denote the set of agents that are trading under this mechanism by $\mathit{TR}(\Bb,\Bs)$.
\end{definition}

The following \lcnamecref{thm:matching-TR} presents the properties of the Trade Reduction Mechanism for matching markets. In particular, it shows that the mechanism provides some ex-post GFT guarantees which is a function of the maximum weight matching $M(\Bb,\Bs)$. As with the TR mechanism for double auctions, this mechanism does not provide any ex-ante guarantees, though, even with respect to the second-best mechanism. In particular, with a single trade in $OPT(\Bb,\Bs)$, there will be no trade in this mechanism.
\begin{theorem}\label{thm:matching-TR}
	The Trade Reduction Mechanism for matching markets is ex-post IR, ex-post IC, ex-post (direct trade) weakly budget balanced,
	and for any $(\Bb,\Bs)$ the fraction of the gains from trade of $OPT(\Bb,\Bs)$ that it attains is at least
	$\min\bigl\{1- \frac{d_t}{q_t} ~\big|~ \text{class $t$  s.t.\ $q_t>0$}\bigr\}$.\footnote{Note that $d_t, q_t$ and $r_{t,t'}$ are all function of $M(\Bb,\Bs)$, so they are functions of the profile $(\Bb,\Bs)$. This is similar to $q(\Bb,\Bs)$ being a function of $(\Bb,\Bs)$ for Trade Reduction in double auctions.}
\end{theorem}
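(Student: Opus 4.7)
The plan is to verify the four stated properties (ex-post IR, ex-post IC, ex-post direct-trade weak BB, and the GFT guarantee) in turn, following the blueprint of McAfee's Trade Reduction argument but adapting it to the multi-class matching structure.

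For ex-post IR, a trading buyer of class $t$ pays the maximum value reported by any reduced buyer of class $t$, which is at most her own value since she is among the top $q_t-d_t$ buyers of class $t$ in $M$; symmetrically for sellers. For ex-post IC, I would apply the standard Myerson recipe of monotonicity plus critical-value pricing: raising a buyer's bid can only weakly improve her position in the maximum-weight matching $M(\Bb,\Bs)$ and her rank within her class, so the allocation is monotone, and her payment is exactly the threshold bid at which she enters the top $q_t-d_t$ of her class in $M$. The main subtlety is that a bid change can alter $M$ globally (and hence the values of $q_t$ and $d_t$), and I would handle this with a case analysis mirroring the proof of \cref{thm:double-auction}, verifying that any deviation which changes the class structure of $M$ either keeps the manipulator out of the trading set or forces her to pay at least her true value.

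For ex-post direct-trade weak BB, I would first observe that the trading agents admit a matching in $G$ by rearranging $M$ so that, for each class pair $(t_B,t_S)$ with $r_{t_B,t_S}\geq 1$, exactly one edge of $M$ is between a reduced $t_B$-buyer and a reduced $t_S$-seller while the other $r_{t_B,t_S}-1$ edges are between trading agents. This is feasible since each class $t$ has exactly $d_t$ neighbors in the class graph and exactly $d_t$ reduced agents, and such rearrangements preserve the total weight of $M$. Denoting by $p^B_{t_B}$ the maximum value among reduced $t_B$-buyers and by $p^S_{t_S}$ the minimum cost among reduced $t_S$-sellers, a trade between classes $t_B$ and $t_S$ is weakly BB iff $p^B_{t_B}\geq p^S_{t_S}$. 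I would prove this by a swap argument: were $p^B_{t_B}<p^S_{t_S}$, then further permuting endpoints of $M$ (using that buyers of $t_B$, and likewise sellers of $t_S$, are interchangeable in $G$) so that the max-reduced $t_B$-buyer is paired with the min-reduced $t_S$-seller would yield a matching of the same total weight as $M$ that contains a strictly negative edge. Deleting that edge would then strictly increase the total weight, contradicting the optimality of $M$.

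For the GFT guarantee, set $\alpha_t=1-d_t/q_t$ and $\alpha^*=\min\{\alpha_t : q_t>0\}$. Writing $V_t$ for the total value of $t$-buyers in $M$ and $C_{t'}$ for the total cost of $t'$-sellers in $M$, a standard top-$k$ averaging argument gives that the total value of trading buyers in class $t$ is at least $\alpha_t V_t$ and the total cost of trading sellers in class $t'$ is at most $\alpha_{t'} C_{t'}$. Since the trading agents can be matched as above and the realized GFT equals the total trading-buyer value minus the total trading-seller cost (regardless of the specific rearrangement), the GFT of the mechanism is at least $\sum_t \alpha_t V_t-\sum_{t'}\alpha_{t'}C_{t'}$. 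To conclude that this is at least $\alpha^*\cdot OPT(\Bb,\Bs)=\alpha^*\bigl(\sum_t V_t-\sum_{t'}C_{t'}\bigr)$, I would expand the lower bound edge-wise over $M$ and leverage the facts that $b_i\geq s_j\geq 0$ on every $M$-edge and that $\alpha_t-\alpha^*\geq 0$ for every class, aggregating the per-edge comparisons across neighbors in the class graph so that slack from non-bottleneck classes redistributes onto bottleneck ones. The main hurdle is precisely this last step: a naive edge-by-edge comparison ``(buyer's $\alpha$)\,$b_i-$(seller's $\alpha$)\,$s_j\geq \alpha^*(b_i-s_j)$'' fails whenever a bottleneck buyer class is matched to a non-bottleneck seller class, so the argument must aggregate contributions across all $M$-edges incident to a common class in order for the slack from non-bottleneck classes to compensate for the ``missing'' mass of the bottleneck. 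I expect this aggregation to parallel the structural comparison used in the proof of \cref{thm:double-auction}, with the class graph of $M$ playing the role of the single class pair in the double-auction case.
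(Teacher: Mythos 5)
Your handling of ex-post IR and of weak budget balance is essentially the paper's own argument (top-$(q_t-d_t)$ membership for IR; the within-class rearrangement of $M(\Bb,\Bs)$ plus the observation that a negative edge could be deleted to beat $M(\Bb,\Bs)$ for BB), and your IC plan is the same monotonicity-plus-critical-value route the paper takes. On IC, though, you defer the real content to a ``case analysis'': the fact that makes everything work is that the tie-breaking is value-independent and a bid change by an agent shifts every matching containing her by the same amount, so a deviation by an agent who \emph{remains} in $M(\Bb,\Bs)$ cannot change $M(\Bb,\Bs)$ at all, hence cannot change any $q_t$ or $d_t$; this is what pins the payment to an exact critical value (the top reduced bid of her class) and it also shows that your worrisome branch (a deviation that ``changes the class structure of $M$'' while the deviator keeps trading) simply cannot occur. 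This is a fixable omission, not a wrong approach.

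The genuine gap is in the GFT bound. You correctly reach the intermediate quantity $\sum_t (1-\frac{d_t}{q_t}) V_t - \sum_{t'} (1-\frac{d_{t'}}{q_{t'}}) C_{t'}$ by per-class top-$k$ averaging (this is exactly the quantity in the paper's proof), and you correctly note that getting from it to $\alpha^*\cdot OPT(\Bb,\Bs)$ is not a term-wise scaling, since seller classes enter with a negative sign. But you then leave precisely that step open, and the tools you propose for closing it --- per-edge nonnegativity $b_i\ge s_j\ge 0$ together with ``redistributing slack'' over the class graph --- cannot close it: one can assign values and costs to the agents matched by a matching $M$ so that every $M$-edge satisfies $b_i\ge s_j\ge 0$ and yet $\sum_t(\alpha_t-\alpha^*)V_t<\sum_{t'}(\alpha_{t'}-\alpha^*)C_{t'}$ (for instance, a bottleneck buyer class one of whose members is matched at enormous value and essentially equal cost into a large seller class whose remaining partners are low-value buyers trading at zero gain). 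Such profiles are excluded only because $M(\Bb,\Bs)$ is a \emph{maximum-weight} matching: deleting the expensive seller and rerouting so that a cheap matched buyer elsewhere drops out would strictly improve the matching. So any completion of your argument must invoke the global optimality of $M(\Bb,\Bs)$ via exchange/rerouting arguments relating matched sellers' costs to matched buyers' values across classes, which your proposal never does; edge-wise nonnegativity plus class-graph bookkeeping is structurally insufficient. Your pointer to ``the structural comparison used in the proof of \cref{thm:double-auction}'' is also a dead end: no such comparison appears there, and the double-auction analogue is \cref{tr}, where the issue does not arise because each side is a single class, so the per-class fraction coincides with $\alpha^*$. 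For reference, the paper itself passes from the intermediate quantity to $\alpha^*\cdot OPT(\Bb,\Bs)$ by direct per-class scaling (which is immediate on the buyer side, where class totals are nonnegative); the seller side is exactly where your concern sits, and your proposal neither completes that step nor supplies machinery capable of completing it.
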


We note that guarantee from \cref{thm:matching-TR}  of the TR mechanism attaining a fraction-of-$OPT(\Bb,\Bs)$ of at least
$\alpha(\Bb,\Bs)=\min\bigl\{1- \tfrac{d_t}{q_t} ~\big|~ \text{class $t$  s.t.\ $q_t>0$}\bigr\}$
coincides in the double-auction setting with the guarantee of at least $1-\frac{1}{q(\Bb,\Bs)}$ from \cref{tr}, and naturally generalizes it. Another generalization for matching markets of the fraction $1-\frac{1}{q(\Bb,\Bs)}$ that one may find natural, which also coincides with it in the double-auction setting, is
$\beta(\Bb,\Bs)=\min\bigl\{ 1- \tfrac{1}{r_{t,t'}} ~\big|~\text{classes $(t,t')$ s.t.\ $ r_{t,t'}>0$}\bigr\},$
where for any buyers' class $t$ and sellers' class $t'$, we use $r_{t,t'}$ to denote the number of buyers of class $t$ that are matched with sellers of class $t'$ in $M(\Bb,\Bs)$. While this alternative generalization is conceptually interesting in its own right, we in fact show that for every valuation profile it holds that $\beta(\Bb,\Bs)\le\alpha(\Bb,\Bs)$, and so a GFT guarantee of $\beta(\Bb,\Bs)$ follows from the GFT guarantee of $\alpha(\Bb,\Bs)$ from \cref{thm:matching-TR}:

\begin{corollary}\label[corollary]{cor:matching-TR-beta}
For any $(\Bb,\Bs)$, the fraction of the GFT of $OPT(\Bb,\Bs)$ that the TR mechanism for matching markets attains is at least
$\min\bigl\{ 1- \frac{1}{r_{t,t'}} ~\big|~\text{classes $(t,t')$ s.t.\ $ r_{t,t'}>0$}\bigr\}.$
\end{corollary}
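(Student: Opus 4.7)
The plan is to derive this corollary directly from \cref{thm:matching-TR} by showing that, on every valuation profile $(\Bb,\Bs)$, the quantity
\[
\beta(\Bb,\Bs) \;=\; \min\bigl\{1-\tfrac{1}{r_{t,t'}} ~\big|~ r_{t,t'}>0 \bigr\}
\]
is no larger than the quantity
\[
\alpha(\Bb,\Bs) \;=\; \min\bigl\{1-\tfrac{d_t}{q_t} ~\big|~ q_t>0\bigr\}
\]
that \cref{thm:matching-TR} already guarantees as a lower bound on the fraction of $OPT(\Bb,\Bs)$ obtained by the TR mechanism. Once the inequality $\beta(\Bb,\Bs)\le\alpha(\Bb,\Bs)$ is established, the corollary follows immediately from \cref{thm:matching-TR}.

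To prove $\beta(\Bb,\Bs)\le\alpha(\Bb,\Bs)$, I fix an arbitrary class $t$ with $q_t>0$ and show that $1-\tfrac{d_t}{q_t}\ge\beta(\Bb,\Bs)$. By definition of $d_t$, the agents of class $t$ appearing in $M(\Bb,\Bs)$ are matched with agents belonging to exactly $d_t$ distinct classes $t'_1,\dots,t'_{d_t}$ on the opposite side. For each $i$, let $r_i$ denote the number of edges in $M(\Bb,\Bs)$ between class $t$ and class $t'_i$ (this is $r_{t,t'_i}$ if $t$ is a buyer class and $r_{t'_i,t}$ if $t$ is a seller class). Each $r_i$ is strictly positive (otherwise $t'_i$ would not have been one of the $d_t$ classes), and the sets of edges they count partition the $q_t$ edges of $M(\Bb,\Bs)$ incident to class $t$, so
\[
\sum_{i=1}^{d_t} r_i \;=\; q_t.
\]
By averaging, $\min_i r_i \le q_t/d_t$, so the minimum $r^{*}$ in the definition of $\beta(\Bb,\Bs)$ satisfies $r^{*}\le q_t/d_t$. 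Rearranging gives $\tfrac{d_t}{q_t}\le\tfrac{1}{r^{*}}$, and hence
\[
1-\tfrac{d_t}{q_t} \;\ge\; 1-\tfrac{1}{r^{*}} \;=\; \beta(\Bb,\Bs).
\]
Taking the minimum over all classes $t$ with $q_t>0$ then yields $\alpha(\Bb,\Bs)\ge\beta(\Bb,\Bs)$, which combined with \cref{thm:matching-TR} finishes the proof.

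I do not expect any real obstacle here: the argument is a one-line pigeonhole/averaging step once the notation is unpacked, and it uses no property of the TR mechanism beyond the bound already proved in \cref{thm:matching-TR}. The only mild care needed is to handle uniformly the cases of $t$ being a buyer class versus a seller class, which is why I phrased the partition of the $q_t$ matched agents of class $t$ directly in terms of the edges of $M(\Bb,\Bs)$ incident to class $t$ rather than in terms of $r_{t,t'}$ with a fixed orientation.
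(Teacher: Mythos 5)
Your proposal is correct and follows essentially the same route as the paper: both reduce the corollary to the pointwise inequality $\beta(\Bb,\Bs)\le\alpha(\Bb,\Bs)$ and then invoke \cref{thm:matching-TR}. The only (cosmetic) difference is in the one-line per-class computation: you bound $\min_{t'} r_{t,t'}\le q_t/d_t$ by averaging the edge counts, whereas the paper writes $\frac{q_t-d_t}{q_t}$ as a weighted average of the terms $\frac{r_{t,t'}-1}{r_{t,t'}}$ and lower-bounds it by their minimum.
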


\noindent
The proofs of \cref{thm:matching-TR,cor:matching-TR-beta} are given in \cref{app:matching-tr}.

\subsection{The Offering Mechanism for Matching Markets}\label{sec:matching-offering}

Before defining our hybrid mechanism for matching markets, we first define an offering mechanism for matching markets, analogous to the specific instance of the RO mechanism (including the specific offer constraints and conditioned distributions) that our hybrid mechanism for double auctions runs whenever $q(\Bb,\Bs)=1$ in that setting. In this mechanism, agents not in  $M(\Bb,\Bs)$ never trade, and agent in a pair $(i,j)\in M(\Bb,\Bs)$ either trades in that pair or does not trade at all. This mechanism is defined as follows.

\begin{definition}[Offering Mechanism for Matching Markets]
The mechanism iterates over all edges $(i,j)\in M(\Bb,\Bs)$, and for each such edge acts as follows.
\begin{itemize}
\item
Let $\bar{s}=\bar{s}_{(i,j)}(\Bb,\Bs)$ be the minimal bid of buyer $i$ such that any higher bid causes $i$ to be in the first-best in the market $(S\setminus\{j\},B)$, i.e., the market without seller $j$. We set $\bar{s}=\infty$ if no such bid exists.
\item
Let $\bar{b}=\bar{b}_{(i,j)}(\Bb,\Bs)$ be the maximal bid (reported cost) of seller $j$ that causes $j$ to be in the first-best in the market $(S,B\setminus\{i\})$, i.e., the market without buyer $i$. We set $\bar{b}=0$ if no such bid exists.
\end{itemize}
Now, to decide whether trade occurs between $i$ and $j$ and at which price, run the RO mechanism on this edge with SO parameters $\bar{s}$ and $D^B_i|_{\ge\bar{b}}$ and BO parameters $\bar{b}$ and $D^S_j|_{\le\bar{s}}$.
\end{definition}

We note that the above offer constraints $\bar{s}$ and $\bar{b}$ precisely generalize the offer constraints from our hybrid mechanism for double auctions from \cref{sec:double-auctions}. Indeed, in a double auction, the minimal bid of buyer $b_{(1)}$ that causes her to be in the first-best in the market $(S\setminus\{s_{(1)}\},B)$ without $s_{(1)}$ is $\max\{b_{(2)},s_{(2)}\}$, and when $q(s,b)=1$ in the double-auctions setting (this is the case where we run the RO mechanism) it must be that $b_{(2)}<s_{(2)}$ and so $\max\{b_{(2)},s_{(2)}\}=s_{(2)}$, which how we set the constraint $\bar{s}$ in that mechanism. The choice of $\bar{b}$ is similar. The careful definition of $\bar{s}$ and $\bar{b}$ above guarantees the two properties of these thresholds that our double-auction constraints readily satisfied: first, both $\bar{b}$ and~$\bar{s}$ are completely independent of $b_i$ and of $s_j$, and second, as we will see in our analysis, $\bar{b}$ coincides with the minimal winning bid of \emph{buyer $i$} in the \emph{original} market whenever this constraint is binding. (And similarly for $\bar{s}$ and seller $j$.)

To show that the Offering Mechanism is well-defined, we have to make sure that the SO and BO parameters that we specify for the RO mechanism meet the conditions imposed in the definition of that mechanism. The following \lcnamecref{offering-good-parameters} does precisely this.

\begin{lemma}\label[lemma]{offering-good-parameters}
For every $(i,j)\in M(\Bb,\Bs)$, it holds that\ \ \emph{(1)}~$\bar{s}\ge s_j$,\ \ \emph{(2)}~$\bar{b}\le b_i$, and\ \ \emph{(3)}~$\bar{s}\ge\bar{b}$.
\end{lemma}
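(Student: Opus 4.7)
The plan is to reduce all three statements to VCG threshold formulas combined with the super-modularity of maximum-weight bipartite matching.

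Let $G(X,Y)$ denote the GFT of the maximum-weight matching of the sub-market induced by sellers $X$ and buyers $Y$, and abbreviate $G_{ij}=G(S,B)$, $G_{-i}=G(S,B\setminus\{i\})$, $G_{-j}=G(S\setminus\{j\},B)$, and $G_0=G(S\setminus\{j\},B\setminus\{i\})$. Since $(i,j)\in M(\Bb,\Bs)$, removing the edge $(i,j)$ from $M(\Bb,\Bs)$ yields an optimal matching of $(S\setminus\{j\},B\setminus\{i\})$, so $G_{ij}=G_0+(b_i-s_j)$ and in particular $b_i\ge s_j$. Next I would establish the VCG-style bounds $\bar{s}\ge b_i+G_0-G_{-j}$ and $\bar{b}\le s_j+G_{-i}-G_0$: when $i$ is matched in the first-best of $(S\setminus\{j\},B)$ at value $b_i$, $\bar{s}$ is exactly the VCG payment $b_i+G_0-G_{-j}$ (the welfare curve for $(S\setminus\{j\},B)$ has slope one in $i$'s bid and equals $G_0$ at $v=\bar{s}$); otherwise $G_{-j}=G_0$ and $\bar{s}\ge b_i = b_i+G_0-G_{-j}$, covering also the degenerate case $\bar{s}=\infty$. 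A symmetric argument handles $\bar{b}$. Parts (1) and (2) then follow at once from monotonicity of the max-weight matching in the node set: $G_{ij}\ge G_{-j}$ gives $\bar{s}\ge b_i+G_0-G_{-j}\ge b_i+G_0-G_{ij}=s_j$, and $G_{ij}\ge G_{-i}$ gives $\bar{b}\le s_j+G_{-i}-G_0\le s_j+G_{ij}-G_0=b_i$.

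For part (3), subtracting the two VCG bounds yields $\bar{s}-\bar{b}\ge G_0+G_{ij}-G_{-j}-G_{-i}$, so it suffices to prove the super-modular inequality $G_0+G_{ij}\ge G_{-i}+G_{-j}$. I would let $M_1$ attain $G_{-j}$ and $M_2$ attain $G_{-i}$, and decompose the symmetric difference $M_1\triangle M_2$ into vertex-disjoint alternating paths and cycles. In this subgraph vertex $i$ has degree one (it appears only in $M_1$) and vertex $j$ likewise has degree one (only in $M_2$). The crucial observation is that $i$ and $j$ lie in \emph{different} alternating components: if they were endpoints of a common path, alternation would force its first edge (at $i$) to come from $M_1$ and its last edge (at $j$) from $M_2$, forcing an even number of edges, while the bipartition would force any buyer-to-seller path to have odd length---a contradiction.

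Having separated $i$'s component from $j$'s, I assemble a matching $M$ on $(S,B)$ and a matching $M'$ on $(S\setminus\{j\},B\setminus\{i\})$ as follows: every common edge of $M_1\cap M_2$ goes into both; the $M_1$-edges of $i$'s component and the $M_2$-edges of $j$'s component go into $M$; the $M_2$-edges of $i$'s component and the $M_1$-edges of $j$'s component go into $M'$; and for every other alternating component I place its $M_1$-edges in $M$ and its $M_2$-edges in $M'$. Since the $M_2$-edges of $i$'s component never touch $i$ and the $M_1$-edges of $j$'s component never touch $j$, $M'$ really avoids both $i$ and $j$ and is a valid matching of $(S\setminus\{j\},B\setminus\{i\})$, while $M$ is a valid matching of $(S,B)$. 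By construction $w(M)+w(M')=w(M_1)+w(M_2)=G_{-j}+G_{-i}$, while $w(M)\le G_{ij}$ and $w(M')\le G_0$ by optimality, establishing the super-modular inequality. The main obstacle is this super-modularity step---in particular the parity argument forcing $i$ and $j$ into distinct alternating components and the verification that the edge redistribution produces valid matchings on the two different node sets.
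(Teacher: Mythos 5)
Your proof is correct, but it takes a genuinely different route from the paper's. The paper first identifies the constraints with VCG prices when the relevant agent survives the removal (if $i\in M_{-j}$ then $\bar{s}=P_j$, if $j\in M_{-i}$ then $\bar{b}=P_i$), derives (1) and (2) by a bid-raising trick (raise $b_i$ to $\max\{\bar{s}+1,b_i\}$, which leaves $\bar{s}$ unchanged, forces $i\in M_{-j}$, and identifies $\bar{s}$ with the seller's VCG price of the modified profile, which is at least $s_j$), and proves (3) by contradiction from the inequality $P_j(\Bb,\Bs)\ge P_i(\Bb,\Bs)$, which it imports from the existence of Walrasian equilibria together with Theorem 8 of \citet{GulS99}. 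You instead phrase everything through the submarket welfares: your bounds $\bar{s}\ge b_i+G_0-G_{-j}$ and $\bar{b}\le s_j+G_{-i}-G_0$ are the critical-value characterization of the two thresholds, (1)--(2) follow from monotonicity of the welfare in the vertex set, and (3) reduces to $G_{ij}+G_0\ge G_{-i}+G_{-j}$, which, after substituting the VCG payment formulas $P_i=b_i+G_{-i}-G_{ij}$ and $P_j=s_j+G_{ij}-G_{-j}$, is exactly equivalent to the paper's $P_j\ge P_i$. Your real contribution is the self-contained combinatorial proof of that inequality: the parity argument correctly forces $i$ and $j$ into distinct components of $M_1\triangle M_2$ (a path joining them would have an even number of edges by alternation but an odd number by bipartiteness), the redistribution is valid since vertices covered by $M_1\cap M_2$ touch no symmetric-difference edge and $M'$ indeed avoids both $i$ and $j$, and the weight identity then gives the claim. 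This buys an elementary exchange argument in place of the appeal to the Second Welfare Theorem and Gul--Stacchetti, at the cost of a longer verification. One small patch: if $i$ is unmatched in $M_1$ (or $j$ in $M_2$) it has degree zero rather than one, but then $M_1$ itself witnesses $G_{-j}=G_0$ (resp.\ $G_{-i}=G_0$) and the supermodular inequality is immediate from $G_{ij}\ge G_{-i}$ (resp.\ $G_{ij}\ge G_{-j}$), so nothing breaks.
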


\noindent
We next prove that the Offering Mechanism is truthful, budget balanced, and has an ex-ante guarantee.

\begin{theorem}\label{offering}
The Offering Mechanism is BIC, ex-post IR, ex-post (direct trade) strongly budget balanced, and ex-ante guarantees at least a $\nicefrac{1}{4}$ of the expected GFT
	of the second-best mechanism.
\end{theorem}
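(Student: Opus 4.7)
I would establish the four properties in order, treating ex-post IR and direct-trade strong budget balance as corollaries of \cref{ro}, BIC via a partition argument that reduces to the BIC of the bilateral RO, and the ex-ante $\nicefrac14$ guarantee via a charging of RVWM-edges to first-best edges, using \cref{rvwm-offer} and \cref{ro-trade-probability}.

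\textbf{Ex-post IR and ex-post strong direct-trade BB.} By \cref{offering-good-parameters}, for every $(i,j)\in M(\Bb,\Bs)$ the constants $\bar{s},\bar{b}$ satisfy $\bar{s}\ge\sup\supp D^S_j|_{\le\bar{s}}$, $\bar{b}\le\inf\supp D^B_i|_{\ge\bar{b}}$, and $\bar{s}\ge\bar{b}$, so \crefpart{ro}{properties} applies on each edge. Hence each realized trade transfers a single price between the two matched agents, giving ex-post IR and strong BB on each trade, while agents outside $M(\Bb,\Bs)$ never trade and pay nothing.

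\textbf{BIC.} I would show BIC for buyers; sellers are symmetric. The essential structural fact is that for each $(i,j)\in M(\Bb,\Bs)$, both $\bar{s}_{(i,j)}$ and $\bar{b}_{(i,j)}$ are computed on the reduced markets $(S\setminus\{j\},B)$ and $(S,B\setminus\{i\})$, and the tie-breaking rule is subset consistent, so these thresholds (and the very identity of the edge to which $i$ belongs in $M(\Bb,\Bs)$) depend only on the reports of agents other than $i$ and $j$. Conditioning on (a)~the identities (not values) of the matched pair on $i$'s edge and (b)~the reports of all other agents, $s_j$ is distributed as $D^S_j|_{\le\bar{s}}$ and $b_i$ as $D^B_i|_{\ge\bar{b}}$, so by \crefpart{ro}{properties} the mechanism is BIC for $i$ on this subset of profiles. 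Thus any manipulation by $i$ that preserves her pair in $M$ is non-beneficial in expectation. To rule out manipulations that change $M$ (including an agent entering $M$ from outside), I would argue as in the double-auction case: by \crefpart{ro}{price-within-constraints} the price paid upon winning is at least $\bar{b}$, and $\bar{b}_{(i,j)}$, by construction, is at least the minimal value of buyer $i$ that places her in the first-best (on edge $(i,j)$) of the original market --- so any buyer who wins as a result of a change in $M$ pays at least her true value, yielding non-positive utility.

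\textbf{Ex-ante guarantee.} It suffices, by \cref{bcfm}, to show that for every valuation profile $(\Bb,\Bs)$ the expected GFT of the Offering Mechanism is at least half that of RVWM. Fix $(\Bb,\Bs)$. For each edge $(i,j)\in M_1^*(\Bb,\Bs)\cup M_2^*(\Bb,\Bs)$ on which RVWM trades with positive probability, \cref{rvwm-offer} tells me that this trade probability is at most the probability that the unconstrained/unconditioned bilateral RO (parameters $\infty$, $D^B_i$, $0$, $D^S_j$) trades on this edge. I would then define a charging map sending such an edge to an edge $(i',j')\in M(\Bb,\Bs)$ sharing at least one endpoint with $(i,j)$; because $M(\Bb,\Bs)$ is maximum-weight with a subset-consistent tie-break, the GFT of $(i',j')$ is at least that of $(i,j)$. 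On the target edge $(i',j')$ our mechanism runs RO with constraints $\bar{s},\bar{b}$ and conditioned distributions. I would split into cases using \cref{ro-trade-probability}: when $\bar{b}\ge s_{j'}$ or $\bar{s}\le b_{i'}$, \crefpart{ro-trade-probability}{always-trade} gives trade probability $\ge\nicefrac12$ and hence GFT at least half of the first-best edge weight, dominating the charged RVWM contribution; otherwise \crefpart{ro-trade-probability}{restrictions-dont-matter} gives trade probability at least that of the unconstrained/unconditioned RO, which by \cref{rvwm-offer} matches or exceeds RVWM's on $(i,j)$. Accounting for at most two RVWM edges charging a single first-best edge (one from $M_1^*$ and one from $M_2^*$) costs a factor of $\nicefrac12$, yielding the desired $\nicefrac12$ versus RVWM and hence $\nicefrac14$ versus the second-best.

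\textbf{Main obstacle.} The subtle step, flagged in the introduction as the most technical part of the paper, is constructing the charging map in such a way that (i)~each charged first-best edge shares an endpoint with its chargers so the per-edge GFT comparison is valid, (ii)~in the case where the offer constraints bind, the first-best edge really does satisfy a hypothesis of \crefpart{ro-trade-probability}{always-trade} relative to its \emph{own} $\bar{s},\bar{b}$ rather than those of the RVWM edge, and (iii)~the overall multiplicity loses at most the claimed factor of two. I expect to handle this by analyzing the symmetric differences $M(\Bb,\Bs)\triangle M_1^*(\Bb,\Bs)$ and $M(\Bb,\Bs)\triangle M_2^*(\Bb,\Bs)$: these decompose into alternating paths and cycles along which the weight inequality between $M$ and $M_k^*$ can be used edge-by-edge to both define the charging and verify the threshold hypotheses.
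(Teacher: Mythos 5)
Your treatment of ex-post IR and strong direct-trade budget balance matches the paper, and your BIC outline is close in spirit, but note one subtlety the paper has to handle explicitly: conditioned on the identity of $i$'s partner $j$ and on all other reports, the value $b_i$ is \emph{not} in general distributed as $D^B_i|_{\ge\bar{b}}$ --- the true threshold for $(i,j)$ to be in the first-best is $P_i$, which equals $s_j>\bar{b}$ whenever $j\notin M_{-i}$ (\cref{offering-constraint-vcg,lem:VCG-price}). The paper bridges this by showing that on the leftover region $\bar{b}\le b_i<s_j$ the truthful utility is $0$ while any deviation yields non-positive utility, so the BIC inequality for $D^B_i|_{\ge\bar{b}}$ from \crefpart{ro}{properties} survives the re-conditioning; your sketch silently assumes the two conditionings coincide. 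Similarly, your claim that $\bar{b}$ is ``by construction'' at least $i$'s minimal winning bid in the original market is only true when that constraint binds ($j\in M_{-i}$); the complementary case needs \cref{lem:VCG-price} and ex-post IR of the seller.

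The genuine gap is in the ex-ante guarantee, which is exactly the part the paper identifies as the technical core. Your charging map rests on the assertion that a first-best edge $(i',j')$ sharing an endpoint with an RVWM edge $(i,j)$ has GFT at least that of $(i,j)$; this is false. Maximum-weight matching only gives \emph{aggregate} inequalities along alternating components: an interior $M_1^*$-edge is dominated by the \emph{sum} of its two adjacent $M$-edges, not by either one, so the per-edge comparison (and hence the claimed multiplicity-two bookkeeping) breaks down. Moreover, \cref{rvwm-offer} and \crefpart{ro-trade-probability}{restrictions-dont-matter} compare constrained and unconstrained RO \emph{on the same buyer--seller pair with the same distributions}, so they can only be invoked when the RVWM edge coincides with the first-best edge --- in the paper this is precisely the doubled-edge (cycle) case of \cref{cor:cycles} --- and cannot transfer RVWM's trade probability on $(i,j)$ to a different pair $(i',j')$ as your sketch does. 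What is missing is the machinery the paper builds instead: \cref{lem:sufficient-trade-offering}, which converts the combinatorial conditions $j\in M_{-i}$ and $i\in M_{-j}$ into guaranteed acceptance in the BO and SO halves (via $\bar{b}=P_i\ge s_j$ and $\bar{s}=P_j\le b_i$), together with the structural facts that every acyclic maximal alternating path of $M\cup M_1^*$ starts with a buyer and an $M$-edge (\cref{cor:buyer-first}), that every non-endpoint seller satisfies $j\in M_{-i}$ (\cref{lem:even-odd-path}), and the odd-path dichotomy of \cref{lem:odd-path-extra-edge}. These yield the factor $\nicefrac{1}{2}$ against $M_1^*$ (and symmetrically $M_2^*$) at the level of whole components rather than edges; your proposal names the right decomposition in its final paragraph but does not supply, and cannot shortcut, these lemmas.
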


The proofs of \cref{offering-good-parameters,offering} is given in \cref{app:matching-offering}. As noted in the introduction, proving the ex-ante guarantee of the Offering Mechanism for matching markets is the most technically challenging part of our analysis. The main ideas behind this proof are surveyed in \cref{sec:ex-ante}.

\subsection{The Hybrid Mechanism for Matching Markets}\label{sec:matching-hybrid}

We are now ready to define our hybrid mechanism for matching markets. It combines the TR mechanism and the Offering Mechanism in a proper way. We note that for double auctions, the mechanism defined below reduces precisely to our hybrid mechanism for double auctions from \cref{sec:double-auctions}.

\begin{definition}[Hybrid Mechanism for Matching Markets]
Let $G=(S,B,E)$ be the constraints graph.
Our \emph{hybrid mechanism} is a direct revelation mechanism.
Given the the reports $(\Bb,\Bs)$ (which is assumed to be truthful), the mechanism computes $M(\Bb,\Bs)$ and $\alpha(\Bb,\Bs) = \min\Bigl\{1- \frac{d_t}{q_t} ~\Big|~ \text{class $t$ s.t.\ $q_t>0$} \Bigr\}$ and runs as follows.
\begin{itemize}
\item
If $\alpha(\Bb,\Bs)\ge\nicefrac{1}{2}$, the mechanism computes the set of trading agents and payments by running the Trade Reduction Mechanism for matching markets defined above.
\item
 Otherwise, the mechanism computes the set of trading agents and payments by running the Offering Mechanism for matching markets defined above.
\end{itemize}	
\end{definition}

\noindent
We are now ready to formally state the main result of this paper.

\begin{theorem}\label{thm:matching-main}
	The Hybrid Mechanism for matching markets is ex-post IR, BIC\footnote{As in the double-auction setting, since the allocation rule of the hybrid mechanism for matching markets is ex-post monotone, by charging threshold prices we could strengthen the incentive-compatibility property from BIC to ex-post~IC (while maintaining ex-post~IR), but then the weak-budget-balance guarantee would only hold ex-ante and not ex-post (similarly to the guarantee of \cref{bcfm}). Moreover, once we settle for ex-ante budget balance, we could get ex-ante \emph{strong} budget balance ``for free'' by equally dividing our ex-ante expected profits (assuming truthful bidding) among the agents, \citep[see, e.g.,][]{BrustleCFM17}.}. and ex-post (direct trade) weakly budget balanced, which satisfies both of the following.
	\begin{itemize}
		\item
		The expected GFT of this mechanism are at least $\nicefrac{1}{4}$ of those of the second-best mechanism.
		\item
		For any $(\Bb,\Bs)$ with $\alpha(\Bb,\Bs)\geq\nicefrac{1}{2}$, the fraction of the gains from trade of $OPT(\Bb,\Bs)$ that this mechanism attains is at least $\alpha(\Bb,\Bs)\ge\nicefrac{1}{2}$.
	\end{itemize}
\end{theorem}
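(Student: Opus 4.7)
The plan is to closely follow the template of the proof of \cref{thm:double-auction}, with the Trade Reduction Mechanism for matching markets and the Offering Mechanism for matching markets playing the respective roles of the double-auction TR and RO mechanisms. Ex-post IR and ex-post (direct trade) weak BB hold on every profile because the hybrid always runs one of TR (ex-post IR and weakly BB by \cref{thm:matching-TR}) or Offering (ex-post IR and even strongly BB by \cref{offering}). The ex-post efficiency bullet is immediate: whenever $\alpha(\Bb,\Bs)\ge\nicefrac{1}{2}$ the hybrid runs TR, which by \cref{thm:matching-TR} attains at least an $\alpha(\Bb,\Bs)$-fraction of $OPT(\Bb,\Bs)$.

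For BIC, I will imitate the partition-of-profiles argument from the proof of \cref{thm:double-auction}. Fix the set $M(\Bb,\Bs)$ together with its pairing, and fix the values/costs of every agent outside $M(\Bb,\Bs)$. These fixed choices completely determine $\alpha(\Bb,\Bs)$ and hence which branch is executed; moreover, when Offering is executed, they additionally determine each edge's constraints $\bar{s}_{(i,j)},\bar{b}_{(i,j)}$ and the conditioned target distributions fed into each per-edge RO call (this is where the careful definition of the Offering Mechanism, together with \cref{offering-good-parameters}, is essential). On the sub-event consistent with these fixed choices, the matched agents' valuations follow precisely the distributions targeted by the RO instances, so \crefpart{ro}{properties} yields BIC of the Offering branch conditionally, while \cref{thm:matching-TR} yields ex-post IC of the TR branch. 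The main subtlety is ruling out \emph{cross-partition} deviations: an agent outside $M(\Bb,\Bs)$ who misreports in order to win forces $M(\Bb,\Bs)$ to change, and a case analysis parallel to the double-auction argument shows her payment to be at least her truthful value/cost (either because she becomes the threshold agent in a new TR-reduced set, or because \crefpart{ro}{price-within-constraints} forces her transaction price to lie on the wrong side of $\bar{b}$ or $\bar{s}$ for her to profit); an agent in $M(\Bb,\Bs)$ cannot alter $M(\Bb,\Bs)$ while staying in it, thanks to subset-consistency and the value-independence of the lexicographic tie-breaking rule; and manipulations that cross the $\alpha=\nicefrac{1}{2}$ boundary, or that move the deviator in or out of her class's reduced set in TR, are handled analogously by showing that the realized payment weakly exceeds the true value/cost.

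For the ex-ante guarantee, the plan is to establish the stronger pointwise-in-profile inequality
\[
\E_{\text{coins}}\bigl[\text{GFT}(\text{hybrid})(\Bb,\Bs)\bigr] \;\ge\; \tfrac{1}{2}\cdot \E_{\text{coins}}\bigl[\text{GFT}(\text{RVWM})(\Bb,\Bs)\bigr]
\]
for every profile $(\Bb,\Bs)$. Averaging over profiles and applying \cref{bcfm} then immediately yields the claimed $\nicefrac{1}{4}$ ex-ante bound against the second-best. I verify the pointwise inequality by cases. When $\alpha(\Bb,\Bs)\ge\nicefrac{1}{2}$, the hybrid is TR, which by \cref{thm:matching-TR} attains $\alpha\cdot OPT\ge\nicefrac{1}{2}\cdot OPT\ge\nicefrac{1}{2}\cdot \text{GFT}(\text{RVWM})$, the last inequality using that RVWM is dominated by the first-best matching on every profile. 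When $\alpha(\Bb,\Bs)<\nicefrac{1}{2}$, the hybrid is Offering, and here I will invoke the per-profile form of the edge-charging comparison underlying the proof of \cref{offering}, which bounds the expected GFT of Offering below by half the expected GFT of RVWM on that very profile. The step I expect to be most delicate is ensuring that the edge-charging argument from \cref{offering} genuinely supplies a profile-by-profile inequality (as opposed to only an averaged one); once that is in place, the averaging over profiles and the application of \cref{bcfm} are routine.
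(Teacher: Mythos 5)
Your treatment of ex-post IR, budget balance, the ex-post efficiency bullet, and the ex-ante guarantee matches the paper: in particular, the paper also proves the ex-ante bound pointwise per profile, using exactly the per-profile comparison you flag as the delicate ingredient (it is stated there as \cref{offering-bcfm-pointwise} and proved separately), together with the observation that the GFT of the RVWM mechanism never exceeds $OPT(\Bb,\Bs)$, and then applies \cref{bcfm}. The genuine gap is in your BIC argument. The partition-of-profiles device from the double-auction proof does not transfer: conditioning on the first-best matching (as a paired set) and on the values of the unmatched agents does \emph{not} determine the per-edge parameters $\bar{s}_{(i,j)},\bar{b}_{(i,j)}$, because these thresholds are computed from the markets $(S\setminus\{j\},B)$ and $(S,B\setminus\{i\})$ and hence depend on the values of the \emph{other matched} agents, which your conditioning leaves random (in the double auction this issue is invisible only because the Offering branch runs when the first-best has a single edge). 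Worse, even after fixing everything except one matched agent's type, the conditional law of buyer $i$'s value given that $(i,j)$ stays in the first-best is $D^B_i$ truncated at her VCG threshold $P_i(\Bb,\Bs)$, not at $\bar{b}$; by \cref{offering-constraint-vcg,lem:VCG-price} these coincide only when $j\in M_{-i}$, and when $j\notin M_{-i}$ one has $P_i=s_j>\bar{b}$. So your central assertion that ``the matched agents' valuations follow precisely the distributions targeted by the RO instances'' is false in general, and \crefpart{ro}{properties} cannot be invoked directly on the conditioned sub-event.

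The paper closes exactly this hole with a different, per-agent argument (\cref{offering-bic}): fix all types except the one agent's, apply \crefpart{ro}{properties} for the distribution truncated at $\bar{b}$, and then separately show that on the mismatch region $\bar{b}\le b_i<P_i=s_j$ the truthful seller gets zero utility from that edge while a deviating seller gets non-positive utility, so the BIC inequality over $b_i\ge\bar{b}$ implies the one over $b_i\ge P_i$. The hybrid's BIC (the paper's \cref{lem:hybrid-BIC}) is then obtained not by partitioning profiles but by a three-case analysis per agent: a first-best agent cannot change $M(\Bb,\Bs)$ (and hence cannot change $\alpha$ or the branch, or her matched partner) while remaining in it, by \cref{obs:first-best}; an agent who drops out of the first-best gets zero; and an agent outside the first-best who deviates into it pays at least her value (receives at most her cost) in either branch, via TR's ex-post IC and via \crefpart{ro}{price-within-constraints} together with \cref{offering-constraint-vcg,lem:VCG-price}. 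Your sketch of the cross-partition deviations is in this spirit and salvageable, but without the $[\bar{b},P_i)$ bridging argument (or an equivalent substitute) the conditional BIC of the Offering branch, and hence your proof of BIC for the hybrid, does not go through.
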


The hybrid mechanism for matching markets inherits from the Trade Reduction mechanism for matching markets also the ex-post guarantee of \cref{cor:matching-TR-beta}:

\begin{corollary}\label[corollary]{cor:matching-main-beta}
Let $\beta(\Bb,\Bs)=\min\bigl\{ 1- \frac{1}{r_{t,t'}} ~\big|~\text{classes $(t,t')$ s.t.\ $ r_{t,t'}>0$}\bigr\}$.
For any $(\Bb,\Bs)$ with $\beta(\Bb,\Bs)\geq\nicefrac{1}{2}$, the fraction of the gains from trade of $OPT(\Bb,\Bs)$ that the Hybrid Mechanism for matching markets attains is at least $\beta(\Bb,\Bs)\ge\nicefrac{1}{2}$.
\end{corollary}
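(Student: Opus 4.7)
The plan is to derive Corollary \ref{cor:matching-main-beta} as a direct consequence of the second bullet of Theorem \ref{thm:matching-main} (already proved) by establishing the pointwise inequality $\alpha(\Bb,\Bs)\ge\beta(\Bb,\Bs)$. This is exactly the same bridge that the paper uses to pass from Theorem \ref{thm:matching-TR} to Corollary \ref{cor:matching-TR-beta} for the standalone Trade Reduction mechanism, only now applied to the hybrid mechanism, so no new mechanism-design work is required: the corollary is purely a combinatorial strengthening of the lower bound in Theorem \ref{thm:matching-main}.

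The first step is to prove $\alpha(\Bb,\Bs)\ge\beta(\Bb,\Bs)$. Fix any class $t$ with $q_t>0$. By definition, the $q_t$ agents of class $t$ matched in $M(\Bb,\Bs)$ are partitioned by the class of their partner, so $q_t=\sum_{t':r_{t,t'}>0} r_{t,t'}$, and this sum contains exactly $d_t$ strictly positive terms. Setting $r_t^{\min}=\min\{r_{t,t'}:r_{t,t'}>0\}$, we obtain $q_t\ge d_t\cdot r_t^{\min}$, hence $d_t/q_t\le 1/r_t^{\min}$, i.e.
\[
1-\frac{d_t}{q_t}\ \ge\ 1-\frac{1}{r_t^{\min}}\ =\ \min_{t':r_{t,t'}>0}\Bigl(1-\tfrac{1}{r_{t,t'}}\Bigr)\ \ge\ \beta(\Bb,\Bs),
\]
where the equality uses monotonicity of $x\mapsto 1-1/x$. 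Taking the minimum over all classes $t$ with $q_t>0$ on the left-hand side gives $\alpha(\Bb,\Bs)\ge\beta(\Bb,\Bs)$.

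The second step is to plug this into Theorem \ref{thm:matching-main}. If $\beta(\Bb,\Bs)\ge\nicefrac{1}{2}$, then $\alpha(\Bb,\Bs)\ge\nicefrac{1}{2}$ by the previous step, so the hypothesis of the second bullet of Theorem \ref{thm:matching-main} is satisfied. That bullet guarantees that the Hybrid Mechanism for matching markets attains a fraction of $OPT(\Bb,\Bs)$ of at least $\alpha(\Bb,\Bs)$, which is in turn at least $\beta(\Bb,\Bs)$, completing the proof.

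There is no real obstacle here: all the heavy lifting (truthfulness, budget balance, the efficiency guarantee through $\alpha$) is already contained in Theorem \ref{thm:matching-main}, and the only content added by this corollary is the elementary counting inequality $q_t\ge d_t\cdot r_t^{\min}$ coming from the pigeonhole fact that $d_t$ is the number of nonzero terms summing to $q_t$. The mildly subtle point to be careful about in the write-up is just to note that $\alpha$ is only defined via classes $t$ with $q_t>0$, but these are precisely the classes that contribute a nonzero $r_{t,t'}$ to the definition of $\beta$, so the two minimizations are taken over consistent index sets.
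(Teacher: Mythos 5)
Your proposal is correct and follows essentially the same route as the paper: both rest on the pointwise inequality $\beta(\Bb,\Bs)\le\alpha(\Bb,\Bs)$ and then invoke the already-established $\alpha$-based ex-post guarantee (the paper routes through \cref{cor:matching-TR-beta} after noting that $\alpha\ge\nicefrac{1}{2}$ forces the hybrid mechanism to run the TR mechanism, while you invoke the second bullet of \cref{thm:matching-main} directly, which amounts to the same thing). Your pigeonhole bound $q_t\ge d_t\cdot r_t^{\min}$ is a harmless variant of the paper's weighted-average argument for $\alpha(\Bb,\Bs)\ge\beta(\Bb,\Bs)$.
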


\noindent
The proofs of \cref{thm:matching-main,cor:matching-main-beta} are given in \cref{app:matching-hybrid}.

\section{Sketch of the Proof of Ex-Ante Guarantee of the\texorpdfstring{\\}{ }Offering Mechanism for Matching Markets}
\label{sec:ex-ante}

In this section, we sketch the proof of the ex-ante guarantee of the Offering Mechanism, which has been stated in \cref{offering}. The full proof is relegated to \cref{app:matching-offering}.
To prove
that the Offering Mechanism ex-ante guarantees at least a $\nicefrac{1}{4}$-fraction of the gains from trade of the second-best mechanism,
we compare the Offering Mechanism to the RVWM mechanism of \citet{BrustleCFM17}. Due to \cref{bcfm}, it suffices to show the following lemma.

\begin{lemma}\label[lemma]{offering-bcfm-pointwise}
	For any valuation profile $(\Bb,\Bs)$, the gains from trade of the Offering Mechanism for matching markets is at least half of the gains from trade of the RVWM mechanism for that profile.
\end{lemma}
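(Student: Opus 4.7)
My plan is to reduce \cref{offering-bcfm-pointwise} to an edge-by-edge charging inequality and then invoke \cref{rvwm-offer} together with \cref{ro-trade-probability}. Let $M=M(\Bb,\Bs)$ denote the first-best matching and $M_1^*, M_2^*$ the matchings used by GSOM and GBOM on the given profile. Write the expected GFT of the RVWM mechanism as $\tfrac12(R^1+R^2)$, where $R^1$ and $R^2$ are the expected GFTs of GSOM and GBOM respectively, and let $O$ denote the expected GFT of the Offering Mechanism. Since $\tfrac12\max(R^1,R^2)\ge\tfrac14(R^1+R^2)$, it suffices to prove $O\ge\tfrac12 R^1$; the analogous bound $O\ge\tfrac12 R^2$ would follow by a symmetric argument that swaps the roles of buyers and sellers and of the SO/BO halves of the RO mechanism.

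First I would expand both sides as sums over edges:
\begin{equation*}
R^1 \;=\;\sum_{(i,j)\in M_1^*}(b_i-s_j)\cdot p^{\mathrm{GSOM}}_{(i,j)},\qquad
O\;=\;\sum_{(i,j)\in M}(b_i-s_j)\cdot p^{\mathrm{Off}}_{(i,j)},
\end{equation*}
where $p^{\mathrm{GSOM}}_{(i,j)}$ is the probability that $(i,j)$ trades in GSOM and $p^{\mathrm{Off}}_{(i,j)}$ is the analogous probability for the Offering Mechanism, which is itself the average of the SO-trade and BO-trade probabilities on that edge. By \cref{rvwm-offer}, $p^{\mathrm{GSOM}}_{(i,j)}$ is at most the trade probability of the \emph{unconstrained} RO mechanism on $(i,j)$ (with SO parameters $\infty$ and $D^B_i$ and BO parameters $0$ and $D^S_j$), which I will use to translate each $R^1$-contribution into a quantity comparable to a term in $O$.

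Next I would construct a charging scheme from edges of $M_1^*$ onto edges of $M$. Viewing $M\cup M_1^*$ as a graph in which shared edges appear with multiplicity two, its connected components are alternating paths and even cycles whose edges alternate between $M$ and $M_1^*$. For each edge $(i,j)\in M_1^*\cap M$ I would charge the entire $R^1$-contribution of $(i,j)$ to the Offering Mechanism's contribution on the same edge, using \crefpart{ro-trade-probability}{restrictions-dont-matter} and the fact that the offer constraints $\bar{s}_{(i,j)},\bar{b}_{(i,j)}$ automatically permit the needed trades when $(i,j)\in M$ (either via \crefpart{ro-trade-probability}{always-trade} or by direct inspection of the threshold-bid definitions). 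For each edge $(i,j)\in M_1^*\setminus M$, letting $j'$ be the $M$-partner of $i$ and $i'$ the $M$-partner of $j$ (if these exist), I would split the $R^1$-contribution of $(i,j)$ into two halves and send one to the SO half of Offering's RO on $(i,j')$ and the other to the BO half of Offering's RO on $(i',j)$. Each $M$-edge then receives at most two charges, one on each of its SO and BO halves, which correspond to disjoint probability-$\tfrac12$ events inside the RO mechanism. The two multiplicative $\tfrac12$ losses --- splitting each $M_1^*$-charge between its two $M$-neighbors, and assigning each $M$-edge's capacity to one of its two RO halves --- together account for the factor $\tfrac12$ on the right-hand side.

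The main obstacle, and the heart of the proof, is the per-edge inequality needed for the ``$(i,j)\in M_1^*\setminus M$'' case: that the SO half of Offering's RO on $(i,j')$ actually generates at least the charge $\tfrac12(b_i-s_j)\cdot p^{\mathrm{GSOM}}_{(i,j)}$ that it receives. This requires comparing the SO-offered prices $\tilde\varphi_i^{-1}(s_{j'})$ and $\tilde\varphi_i^{-1}(s_j)$ using the monotonicity of $\tilde\varphi_i$, together with a global accounting along the component of $M\cup M_1^*$ that leverages the maximum-weight property of $M$ with respect to $b-s$; and it requires arguing that the offer constraints $\bar{s}_{(i,j')},\bar{b}_{(i,j')}$ of the Offering Mechanism --- and the conditioning of $D^B_i$ on $b_i\ge\bar{b}_{(i,j')}$ and of $D^S_{j'}$ on $s_{j'}\le\bar{s}_{(i,j')}$ --- do not reduce the relevant trade probabilities, again via \cref{ro-trade-probability}. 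I expect that this inequality may sometimes fail in isolation and require charges along an entire alternating path or cycle to be balanced collectively, so the final step would be a component-wise aggregation rather than purely edge-wise. Completing the symmetric argument for $M_2^*$ yields $O\ge\tfrac12 R^2$, and combined with $O\ge\tfrac12 R^1$ establishes the lemma.
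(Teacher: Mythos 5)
Your high-level skeleton matches the paper's: reduce to showing the Offering Mechanism gets at least half of the GFT of $M_1^*$ (and, symmetrically, of $M_2^*$), and decompose $M\cup M_1^*$ into alternating paths and cycles. But the heart of the argument is missing, and you essentially concede this: your local charging scheme (splitting each edge of $M_1^*\setminus M$ between the SO half of the buyer's $M$-edge and the BO half of the seller's $M$-edge) comes with no proof that the receiving halves can pay, you acknowledge the per-edge inequality ``may fail in isolation,'' and the component-wise aggregation you defer to is exactly the hard content of the paper's proof. That proof never compares offered prices; it rests on three ingredients absent from your plan. First, a threshold/VCG-price identity (\cref{offering-constraint-vcg}, \cref{lem:VCG-price}): whenever $j\in M_{-i}(\Bb,\Bs)$ one has $\bar b=P_i(\Bb,\Bs)\ge s_j$, so the buyer's constrained offer is accepted with certainty and the $M$-edge $(i,j)$ trades whenever the BO coin comes up (\cref{lem:sufficient-trade-offering}); symmetrically $\bar s=P_j(\Bb,\Bs)\le b_i$ when $i\in M_{-j}(\Bb,\Bs)$ for the SO half. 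Second, structural facts about the components coming from the tie-breaking rule and from $\tilde{\varphi}_i(b_i)\le b_i$: every alternating cycle is a doubled edge of $M\cap M_1^*$ (\cref{cor:cycles}), every maximal path starts with a buyer and an $M$-edge (\cref{cor:buyer-first}), and every non-endpoint seller $j$ on a path satisfies $j\in M_{-i}$ for its $M$-partner $i$ (\cref{lem:even-odd-path}); on even paths this already gives that all $M$-edges of the component trade with certainty conditional on the BO coin, dominating $GFT_{M_1^*}$ of the component up to the factor $\nicefrac{1}{2}$. Third, for odd paths $A=(i_1j_1\ldots i_Lj_L)$ one needs the endpoint comparison of \cref{lem:odd-path-extra-edge}: either $b_{i_L}>b_{i_1}$, in which case $i_L\in M_{-j_L}$ and the last $M$-edge also trades in the SO half, or $b_{i_L}\le b_{i_1}$, in which case $GFT_M(A\setminus\{i_L,j_L\})\ge GFT_{M_1^*}(A)$ and the first $L-1$ $M$-edges traded in the BO half already suffice. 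Your proposal contains no substitute for any of these steps.

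Two further technical problems with the route you sketch. The price comparison of $\tilde{\varphi}_i^{-1}(s_{j'})$ versus $\tilde{\varphi}_i^{-1}(s_j)$ does not describe the Offering Mechanism: its SO offer is the optimal price for the \emph{conditioned} distribution $D^B_i|_{\ge\bar b}$ under the cap $\bar s$, not a virtual-value inverse, so the monotonicity argument you propose does not apply as stated. And your use of \cref{rvwm-offer} to claim that $p^{\mathrm{GSOM}}_{(i,j)}$ is at most the unconstrained RO trade probability is incorrect as written: GSOM trades every edge of $M_1^*$ with probability $1$ once its coin is chosen, whereas \cref{rvwm-offer} only guarantees the unconstrained RO trades with at least the probability that the \emph{RVWM} mechanism does, which is $\nicefrac{1}{2}$ for an edge in $M_1^*\setminus M_2^*$. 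The paper invokes \cref{rvwm-offer} together with \crefpart{ro-trade-probability}{restrictions-dont-matter} only in the one residual case of a doubled edge $(i,j)\in M\cap M_1^*$ with $i\notin M_{-j}$ and $j\notin M_{-i}$, where $\bar s\ge b_i$ and $\bar b\le s_j$ and the accounting closes; used globally as you propose, it does not.
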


We first provide the intuition behind \cref{offering-bcfm-pointwise}. Fix a valuation profile $(\Bb,\Bs)$. Let $M_1^*=M_1^*(\Bb,\Bs)$ be the maximum-weight matching of $G$ when edge weight is $\tilde{\varphi}_i(b_i)-s_j$ for $(i,j)\in E$ and $M_2^*=M_2^*(\Bb,\Bs)$ be the maximum-weight matching of $G$ when edge weight is $b_i-\tilde{\tau}_j(s_j)$ for $(i,j)\in E$.\footnote{Recall that $\tilde{\varphi}_i$ and $\tilde{\tau}_j$ are the ironed virtual value functions of buyer $i$ and seller $j$, respectively.%
} Recall that the RVWM mechanism runs the Generalized Seller Offering Mechanism (GSOM)
with probability  $\nicefrac{1}{2}$ and in that case obtains the GFT of the matching $M_1^*$, and it runs the Generalized Buyer Offering Mechanism (GBOM) with probability $\nicefrac{1}{2}$ and in that case obtains the GFT of the matching $M_2^*$.
It suffices to show that the GFT of each of the two matchings $M_1^*$ and $M_2^*$ can be bounded by twice the GFT of the Offering Mechanism for the valuation profile $(\Bb,\Bs)$. We will show how to bound the GFT of $M_1^*$. A similar argument can bound the GFT of $M_2^*$.

Consider the first-best matching $M=M(\Bb,\Bs)$ together with the matching $M_1^*$. Each connected component of the union of the two matchings $M\cup M_1^*$ is either a maximal alternating path\footnote{A path is called an \emph{alternating path} if the edges of the path alternate between the two matchings. A path is \emph{maximal} if it is not a subpath of any other path.} or an alternating cycle\footnote{An \emph{alternating cycle} is an alternating path whose two endpoints coincide.}. We will show that all alternating cycles consist of two edges between the same seller and buyer (see \cref{fig:three-cases} (a) ) due to our tie-breaking rules (proved in \cref{cor:cycles}) and that the GFT of the Offering Mechanism from that buyer-seller pair is at least the GFT of the RVWM mechanism from that pair. For an alternating path, we will consider the cases of an even or an odd number of edges of the alternating path separately, and show that in either case, the GFT of our Offering Mechanism from the path is at least half of the GFT of the matching $M_1^*$ from that path. Given the fact that $M$ and $M_1^*$ are each a maximum-weight matching w.r.t.\ the edge weights $b_i-s_j$ and $\tilde{\varphi}_i(b_i)-s_j$ respectively, we prove that any maximal alternating path that is not a cycle, starts with a buyer and an edge from $M$. See \cref{cor:buyer-first} for more details. \cref{fig:three-cases} illustrates the three different cases in our proof.

\begin{figure}[t]
\begin{center}
\begin{tabular}{ccc}
\includegraphics[scale=0.55]{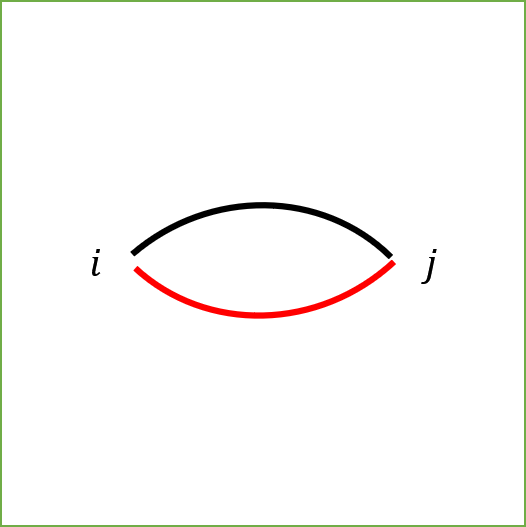} & \includegraphics[scale=0.55]{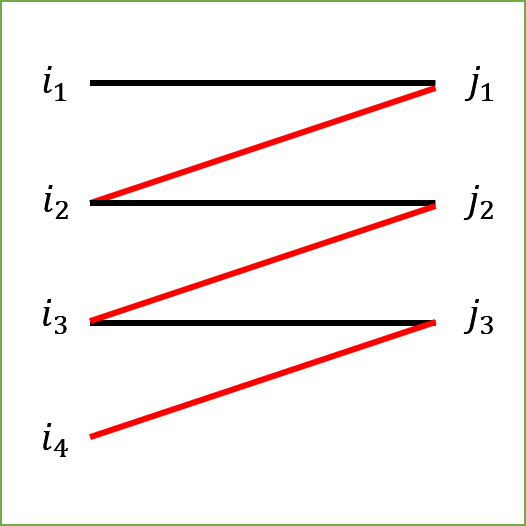} &  \includegraphics[scale=0.55]{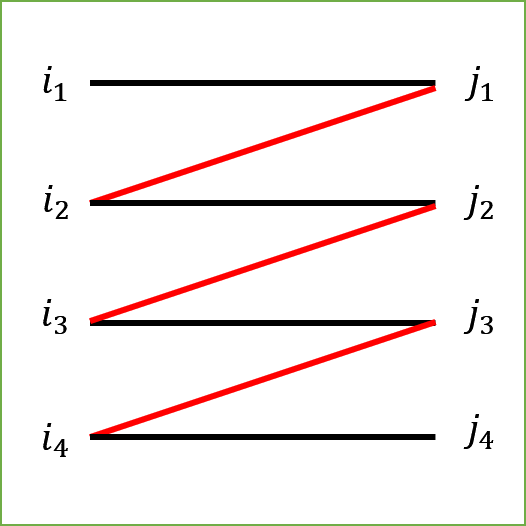}\\
(a) & (b) & (c)\\
\end{tabular}
\end{center}
\caption{Example of the three different cases considered in the proof of the ex-ante guarantee: (a) alternating cycle; (b) maximal alternating path with even number of edges; (c) maximal alternating path with odd number of edges.}
\label{fig:three-cases}
\end{figure}

The following lemma plays a central role in our proof of \cref{offering-bcfm-pointwise}. It provides a sufficient condition for a buyer-seller pair to trade in that mechanism.
\begin{lemma}\label[lemma]{lem:sufficient-trade-offering}
	Fix valuation profile $(\Bb,\Bs)$. For every $(i,j)\in M(\Bb,\Bs)$, if $j$ is in $M_{-i}(\Bb,\Bs)$ then buyer $i$ will trade with seller $j$ in the BO Mechanism, and if $i$ is in $M_{-j}(\Bb,\Bs)$ then buyer $i$ will trade with seller $j$ in the SO Mechanism. Thus, in either case $i$ and $j$ will trade with probability at least $\nicefrac{1}{2}$ in the Offering Mechanism.
\end{lemma}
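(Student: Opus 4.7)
The plan is to derive both implications from a single threshold characterization of the offer constraints $\bar{b}$ and $\bar{s}$. Specifically, I would first establish that (i)~$j\in M_{-i}(\Bb,\Bs)$ if and only if $s_j\le\bar{b}$, and symmetrically (ii)~$i\in M_{-j}(\Bb,\Bs)$ if and only if $b_i\ge\bar{s}$. The ``only if'' directions are immediate from the definitions of $\bar{b}$ and $\bar{s}$ as the maximal (respectively minimal) bid of the relevant agent that makes her a winner in the reduced market. The ``if'' directions amount to monotonicity of the first-best matching in a single agent's bid: if seller $j$ already wins in $M_{-i}(\Bb,\Bs)$ at cost $s_j$, then lowering her cost to any $s_j'\le s_j$ weakly increases every edge weight incident to $j$ by the same amount, so the previously chosen matching gains weight uniformly while no other matching gains more, keeping $j$ in the first-best.

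To make this ``if'' direction fully rigorous through ties, I would invoke the tie-breaking rule from \cref{def:lex-order}, which is weight-independent and subset-consistent. The key consequence is that when costs drop, the previously chosen matching remains among the tying optimal matchings, and since the tie-breaking rule depends only on agent identities, it still selects that same matching. This is the main obstacle I anticipate: writing out this monotonicity argument cleanly so that it applies to the matching $M_{-i}(\Bb,\Bs)$ actually selected by the tie-breaking rule, and not merely to the \emph{value} of the maximum-weight matching.

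Once (i) and (ii) are in hand, the lemma follows directly from the definitions of the SO and BO sub-mechanisms inside the Offering Mechanism. Suppose $j\in M_{-i}(\Bb,\Bs)$, so $s_j\le\bar{b}$. In the BO branch of the Offering Mechanism, by definition buyer $i$ offers a price $p\ge\bar{b}\ge s_j$, which seller $j$ then accepts, so trade occurs with probability $1$ conditional on the BO branch being chosen. Symmetrically, if $i\in M_{-j}(\Bb,\Bs)$, so $b_i\ge\bar{s}$, then in the SO branch seller $j$ offers a price $p\le\bar{s}\le b_i$, which buyer $i$ accepts, so trade again occurs with probability $1$ conditional on the SO branch being chosen. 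Since the Offering Mechanism picks SO or BO each with probability $\nicefrac{1}{2}$, either hypothesis of the lemma already gives overall trade probability at least $\nicefrac{1}{2}$, completing the proof.
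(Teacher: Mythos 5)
Your proposal is correct, and its skeleton matches the paper's: reduce the lemma to the two inequalities $\bar{b}\ge s_j$ (when $j\in M_{-i}(\Bb,\Bs)$) and $\bar{s}\le b_i$ (when $i\in M_{-j}(\Bb,\Bs)$), and then observe that any BO offer is constrained to be at least $\bar{b}$ and any SO offer at most $\bar{s}$, so the relevant offer is accepted whenever the corresponding branch of the RO coin flip is chosen (this is exactly \crefpart{ro-trade-probability}{always-trade}). Where you genuinely diverge is in how you get those inequalities: the paper first proves \cref{offering-constraint-vcg}, identifying $\bar{b}$ and $\bar{s}$ with the VCG payments $P_i(\Bb,\Bs)$ and $P_j(\Bb,\Bs)$ in the \emph{full} market, and then invokes VCG properties ($P_i(\Bb,\Bs)\ge s_j$, $P_j(\Bb,\Bs)\le b_i$, cf.\ \cref{lem:VCG-price,cor:vcg}); you instead read the inequalities directly off the threshold definitions of $\bar{b}$ and $\bar{s}$ in the \emph{reduced} markets, using bid-monotonicity of the first-best under the weight-independent tie-breaking rule. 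Your route is more elementary and self-contained for this lemma; the paper's route pays for the VCG detour by obtaining the identities $\bar{b}=P_i$, $\bar{s}=P_j$, which it reuses elsewhere (e.g., in the BIC and well-definedness arguments for the Offering Mechanism). Two small points of precision: first, only the ``only if'' directions of your characterizations are needed, so the tie-breaking care you devote to the ``if'' directions is superfluous here; second, because the paper defines $\bar{s}$ asymmetrically (the minimal bid such that any \emph{higher} bid puts $i$ in the first-best of the market without $j$), the claim $i\in M_{-j}(\Bb,\Bs)\Rightarrow b_i\ge\bar{s}$ is not literally immediate from the definition as you assert, but it does follow from the very monotonicity argument you already spell out (if $b_i<\bar{s}$, some bid above $b_i$ would lose, contradicting upward monotonicity of winning), so this is an imprecision rather than a gap.
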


The next lemma shows that any seller who is not at the end of any alternating path, must still be in the first-best matching if we remove the buyer that is matched to her.

\begin{lemma}\label[lemma]{lem:even-odd-path}
	Let $A$ be an acyclic maximal alternating path of $M(\Bb,\Bs)\cup M_1^*(\Bb,\Bs)$.
	For every seller $j\in A$ who is not at the end of the path,
	it holds that $j\in M_{-i}(\Bb,\Bs)$, where $i$ is the buyer such that $(i,j)\in M(\Bb,\Bs)$.
\end{lemma}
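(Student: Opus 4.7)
The plan is to orient $A$ via \cref{cor:buyer-first} as $v_0, v_1, \ldots, v_m$ with $v_0$ a buyer and $(v_0, v_1) \in M$, so that sellers of $A$ sit at odd indices and buyers at even ones. Write $\alpha_k = (v_{2k}, v_{2k+1})$ for the $M$-edges on $A$ ($k = 0, \ldots, p-1$) and $\beta_k = (v_{2k+1}, v_{2k+2})$ for the $M_1^*$-edges ($k = 0, \ldots, q-1$), overloading these symbols for their actual-weight values $w_{ij} = b_i - s_j$ and (as $\tilde{\alpha}_k, \tilde{\beta}_k$) for their virtual-weight values $\tilde{\varphi}_i(b_i) - s_j$. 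Set $j = v_{2\ell+1}$ and $i = v_{2\ell}$; the hypothesis that $j$ is interior guarantees that $\beta_\ell$ exists, i.e., $\ell \leq q-1$.

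First I would exhibit a matching $N$ on $(S, B \setminus \{i\})$ containing $j$, namely the tail swap
\[
  N := \bigl(M \setminus \{\alpha_\ell, \alpha_{\ell+1}, \ldots, \alpha_{p-1}\}\bigr) \cup \{\beta_\ell, \beta_{\ell+1}, \ldots, \beta_{q-1}\}.
\]
Because $A$ is a connected component of $M \cup M_1^*$, its vertices have no other $M$- or $M_1^*$-neighbors, so a short case split on whether the last edge of $A$ lies in $M$ or in $M_1^*$ (equivalently, whether the far endpoint $v_m$ is a seller or a buyer) verifies that $N$ is indeed a valid matching on $(S, B \setminus \{i\})$, avoids $i$, and matches $j$ to $v_{2\ell+2}$.

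The crux is the weight bound $w(N) \geq w(M) - (b_i - s_j)$, equivalently $\sum_{k=\ell}^{q-1} \beta_k \geq \sum_{k=\ell+1}^{p-1} \alpha_k$. Optimality of $M$ under $w$ only produces inequalities in the opposite direction on such sub-swaps, so I would instead appeal to optimality of $M_1^*$ under the virtual objective, applied to the complementary swap
\[
  \widetilde{M}_1^* := \bigl(M_1^* \setminus \{\beta_\ell, \ldots, \beta_{q-1}\}\bigr) \cup \{\alpha_{\ell+1}, \ldots, \alpha_{p-1}\},
\]
which the same path-component analysis certifies as a valid matching on $(S, B)$. This yields the virtual-weight inequality $\sum_{k=\ell}^{q-1} \tilde{\beta}_k \geq \sum_{k=\ell+1}^{p-1} \tilde{\alpha}_k$. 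Writing $\delta_v := b_v - \tilde{\varphi}_v(b_v) \geq 0$ for each buyer $v$, the difference $(\sum \beta - \sum \alpha) - (\sum \tilde{\beta} - \sum \tilde{\alpha})$ telescopes to $\sum_{k=\ell}^{q-1} \delta_{v_{2k+2}} - \sum_{k=\ell+1}^{p-1} \delta_{v_{2k}}$, and almost all of these $\delta$ terms cancel pairwise along the path: the two buyer-index ranges coincide exactly when $v_m$ is a seller, while when $v_m$ is a buyer only the nonnegative leftover $\delta_{v_m}$ survives. Either way, $\sum \beta - \sum \alpha \geq \sum \tilde{\beta} - \sum \tilde{\alpha} \geq 0$, establishing the key inequality.

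To conclude, suppose for contradiction that $j \notin M_{-i}$. Then $M_{-i} \cup \{(i,j)\}$ is a matching on $(S, B)$, so optimality of $M$ forces $w(M_{-i}) \leq w(M) - (b_i - s_j) \leq w(N)$; combined with $w(N) \leq w(M_{-i})$ (since $N$ is a matching on $(S, B \setminus \{i\})$), this makes $N$ a maximum-weight matching on $(S, B \setminus \{i\})$ that contains $j$, and the subset-consistent lexicographic tie-breaking used to define $M_{-i}$ then forces $M_{-i}$ itself to be a maximum-weight matching containing $j$, a contradiction. I expect the main obstacle to be the index bookkeeping: constructing $\widetilde{M}_1^*$ so that its validity survives both parity regimes of the path, and then executing the $\delta$-cancellation cleanly enough that only non-negative terms remain.
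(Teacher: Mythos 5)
Your construction and the key weight inequality are sound, and they run essentially parallel to the paper's argument (the paper likewise compares the tail of the path matched by $M$ against the tail matched by $M_1^*$, using optimality of $M_1^*$ under the virtual weights together with $\tilde{\varphi}_i(b_i)\le b_i$); your telescoping $\delta$-cancellation is a clean, unified way to do in one stroke what the paper does by separate even/odd case analysis. In particular, whenever your inequality $\sum_{k=\ell}^{q-1}\beta_k\ge\sum_{k=\ell+1}^{p-1}\alpha_k$ is \emph{strict}, your argument closes correctly with no appeal to tie-breaking at all, since then $w(N)>w(M)-(b_i-s_j)\ge w(M_{-i})\ge w(N)$ is an outright contradiction.

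The genuine gap is the final step in the tie case. When all the inequalities collapse to equalities, you only obtain that $N$ is \emph{some} maximum-weight matching of $(S,B\setminus\{i\})$ containing $j$, while $M_{-i}(\Bb,\Bs)$ is the particular maximum-weight matching selected by the lexicographic-by-IDs rule; the existence of an equal-weight alternative containing $j$ does not contradict $j\notin M_{-i}(\Bb,\Bs)$. Subset consistency cannot be invoked here: it constrains the chosen matching after deleting \emph{both} endpoints of a chosen edge, and says nothing about $N$, which is never the chosen matching of any market. Nor does the lexicographic rule prefer larger matchings or matchings covering $j$ in general: the ``more edges wins'' clause applies only when one sorted edge list is exhausted at the first disagreement, and with adversarial IDs the rule can legitimately rank $M\setminus\{(i,j)\}$ above your $N$ even though $N$ has one more edge and equal weight. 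This degenerate case is exactly what the paper's tie-breaking machinery is for: in the equality case one also gets $\sum\tilde{\beta}_k=\sum\tilde{\alpha}_k$, so the same swap is weight-neutral under both $W$ and $W_1$, and the paper derives the contradiction from the \emph{weight-independence} of the tie-breaking rule (the choices made in defining $M_{-i}$ and $M_1^*$ cannot break the same tie in opposite ways), not from weights alone. Your proof needs an argument of this kind appended to handle the equal-weight case; as written, the concluding sentence is a non sequitur.
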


Combining \cref{lem:even-odd-path,lem:sufficient-trade-offering}, we show that all sellers in a maximal alternating path of even length will trade in the BO mechanism with the buyers that are matched to them in the first-best matching.

Next, we consider maximal alternating paths with odd length and present another useful characterization. We assume w.l.o.g.\ that any  maximal alternating path of  $M\cup M_1^*$ starts with a buyer and an edge in $M$ (by Corollary \ref{cor:buyer-first}).
Let
$GFT_{M'}(U)$ be the GFT of all edges of $M'$ that are contained in $U$.

\begin{lemma}\label[lemma]{lem:odd-path-extra-edge}
	For $K>3$, let $A=(i_1j_1i_2j_2...i_{L-1}j_{L-1}i_Lj_L)$, be a maximal alternating path of odd number of edges of $M\cup M_1^*$ with
	$i_l$ denoting buyers and $j_l$ denoting sellers, and with $(i_1,j_1)\in M$.
It holds that
	\begin{itemize}
		\item if $b_{i_L}>b_{i_1}$ then  $i_L\in M_{-j_L}$.
		\item if $b_{i_L}\leq b_{i_1}$ then $GFT_M(A\setminus \{i_L,j_L\} )\geq GFT_{M_1^*}(A) $.
	\end{itemize}
\end{lemma}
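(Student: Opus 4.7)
The plan begins by recording the rigid structure of $A$: since it is odd-length and its first edge $(i_1,j_1)$ lies in $M$, the $M$-edges on $A$ are exactly $\{(i_l,j_l):l=1,\ldots,L\}$ while the $M_1^*$-edges on $A$ are exactly $\{(j_l,i_{l+1}):l=1,\ldots,L-1\}$. For the second bullet this immediately yields the claim by direct computation: $GFT_M(A\setminus\{i_L,j_L\}) - GFT_{M_1^*}(A)$ equals $\sum_{l=1}^{L-1}(b_{i_l}-s_{j_l}) - \sum_{l=1}^{L-1}(b_{i_{l+1}}-s_{j_l})$; the seller terms cancel pairwise and the buyer sum telescopes to $b_{i_1}-b_{i_L}$, which is nonnegative under the hypothesis $b_{i_L}\le b_{i_1}$. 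This case uses nothing beyond the edge labeling.

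For the first bullet, the key step is to construct a ``swap'' matching $M''$ on $(S,B)$ obtained from $M$ by deleting the $L$ edges of $M$ along $A$ and inserting the $L-1$ edges of $M_1^*$ along $A$. This is a valid matching in which exactly $i_1$ and $j_L$ are left unmatched among the nodes of $A$, while all nodes off $A$ are matched as in $M$. A short telescoping calculation gives $w(M'')=w(M)-b_{i_1}+s_{j_L}$. I would then argue by contradiction: suppose $i_L\notin M_{-j_L}$. Since $M''$ contains no edge incident to $j_L$, it is also feasible in the market with $j_L$ removed, so $w(M_{-j_L})\ge w(M'')$. Under the contradiction hypothesis both $i_L$ and $j_L$ are unmatched in $M_{-j_L}$ (the latter because it is absent from that market), so $M_{-j_L}\cup\{(i_L,j_L)\}$ is a valid matching on the full market $(S,B)$ of weight at least $w(M'')+(b_{i_L}-s_{j_L})=w(M)+(b_{i_L}-b_{i_1})>w(M)$, contradicting the maximum-weight property of $M$. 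The strict inequality comes from the strict hypothesis $b_{i_L}>b_{i_1}$, so the lexicographic tie-breaking used to define $M$ plays no role.

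The main obstacle I expect is guessing the correct comparison matching $M''$ for the harder case: the substitution of $M_1^*$-edges for $M$-edges along $A$ must be arranged so that precisely the two endpoints $i_1$ and $j_L$ are the ones left unmatched, because only then does the weight bookkeeping convert the hypothesis $b_{i_L}>b_{i_1}$ into a strict improvement over $w(M)$ that $M$'s optimality can rule out. Notably, neither the optimality of $M_1^*$ nor the specific thresholds $\bar b,\bar s$ of the Offering Mechanism enter the argument---the whole proof is a structural property of the max-weight matching $M$ on an odd-length alternating path.
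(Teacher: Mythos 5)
Your proposal is correct. The second bullet is handled exactly as in the paper: both arguments reduce to the telescoping identity $GFT_M(A\setminus\{i_L,j_L\})-GFT_{M_1^*}(A)=b_{i_1}-b_{i_L}$, using only the edge labeling forced by the odd length and the first edge lying in $M$. For the first bullet you use the same path-exchange as the paper (replace the $L$ edges of $M$ along $A$ by the $L-1$ edges of $M_1^*$, leaving exactly $i_1$ and $j_L$ uncovered, with weight $w(M)-b_{i_1}+s_{j_L}$), but you route the contradiction differently: the paper assumes $i_L\notin M_{-j_L}$, invokes subset consistency of the lexicographic tie-breaking to conclude that $M_{-j_L}$ agrees with $M$ on $A\setminus\{i_L,j_L\}$, and then contradicts the optimality of $M_{-j_L}$ by the matchable set $A\setminus\{i_1,j_L\}$; you instead lower-bound $w(M_{-j_L})$ by the weight of your swap matching $M''$ (feasible in the market without $j_L$) and then add the edge $(i_L,j_L)$ back to $M_{-j_L}$ to obtain a matching of weight strictly exceeding $w(M)$, contradicting the optimality of $M$ itself. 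Your variant buys a small simplification---since $b_{i_L}>b_{i_1}$ gives a strict improvement, no tie-breaking property (subset consistency or weight independence) is needed, as you note---whereas the paper's version stays closer to the machinery it has already set up for $M_{-a}$ and its tie-breaking lemmas. Both are valid; your feasibility checks (that $M$-edges outside $A$ avoid the nodes of $A$ because $A$ is a connected component of $M\cup M_1^*$, and that $i_L$ and $j_L$ are both free so $(i_L,j_L)$ can be added) are the only points that need to be stated explicitly, and you have them.
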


\begin{proof}[Proof of \cref{offering-bcfm-pointwise}]
By \cref{cor:cycles}, any alternating cycle of $M$ and $M_1^*$ has only two (identical) undirected edges $(i,j)\in M\cap M_1^*$. If $j\in M_{-i}(\Bb,\Bs)$ or $i\in M_{-j}(\Bb,\Bs)$, by \cref{lem:sufficient-trade-offering} buyer $i$ will trade with seller $j$ in the Offering Mechanism with probability at least $\nicefrac{1}{2}$, which obtains at least half of the GFT that the RVWM mechanism obtains on $(i,j)\in M_1^*$ when the profile is $(\Bb,\Bs)$.
Otherwise, since $i\notin M_{-j}$ we have that $\bar{s} \ge b_i$, and since $j\notin M_{-i}$ we have that $\bar{b}\le s_j$.
	Since trade occurs with positive probability on $(i,j)$ in the RVWM mechanism, then
similarly to the double-auction case, by \cref{rvwm-offer} and \crefpart{ro-trade-probability}{restrictions-dont-matter},
our Offering Mechanism achieves at least the gains from trade of the RVWM mechanism on this edge (and therefore, on any alternating cycle).
	
Consider any maximal alternating path of even number of edges.
By \cref{lem:sufficient-trade-offering,lem:even-odd-path}, every pair $(i,j)\in M$ trades in the BO mechanism, so whenever the BO
 mechanism runs, the maximal GFT (first-best) of the agents in the alternating path, which is at least the GFT of $M_1^*$ from these agents, is obtained. The Offering Mechanism runs the BO mechanism is probability $\nicefrac{1}{2}$, so in expectation it obtains at least $\nicefrac{1}{2}$ the GFT of $M_1^*$ from this path.

Now consider any maximal alternating path $(i_1j_1i_2j_2...i_{L-1}j_{L-1}i_Lj_L)$ of odd number of (at least 3)\footnote{If there is a single edge, then it is only in $M$. We only need to cover edges in $M_1^*$.} edges, which starts with buyer $i_1$ and an edge from $M$. Here $L\geq 2$. By \cref{lem:sufficient-trade-offering,lem:even-odd-path}, for every $l=1,2,\ldots,L-1$, buyer $i_l$ will trade with seller $j_l$ in the BO mechanism. If $b_{i_1}\geq b_{i_L}$, the claim holds since by \cref{lem:odd-path-extra-edge} the GFT of $M_1^*$ from this path is at most the GFT of the first $L\!-\!1$ pairs in the first-best matching $M$, and all these $L\!-\!1$ pairs will be traded in the BO mechanism, which happens with probability~$\nicefrac{1}{2}$.

If $b_{i_1}<b_{i_L}$, by \cref{lem:sufficient-trade-offering,lem:odd-path-extra-edge}, buyer $i_L$ will trade with seller $j_L$ in the SO mechanism, which happens with probability $\nicefrac{1}{2}$. Therefore, every pair $(i,j)\in M$ is traded with probability at least $\nicefrac{1}{2}$. This obtains half the maximal GFT (first-best) of this path, which is at least half the GFT of $M_1^*$ in this path.

Similarly, we can show that the Offering Mechanisms obtains at least $\nicefrac{1}{2}$ of the GFT of $M_2^*$. Since the expected GFT of the RVWM mechanism is the average GFT of $M_1^*$ and $M_2^*$, we conclude that the Offering Mechanisms obtains at least $\nicefrac{1}{2}$ the GFT of the RVWM mechanism.
\end{proof}

\section{Conclusion}
\label{sec:conclusion}

One of the biggest pushbacks against constant-approximation mechanisms is that while they provide some worst-case guarantee, they often do not provide any guarantee for significantly better performance ``when the instances are easy to handle''. We believe that a mechanism that not only provides a worst-case guarantee, but also provides a guarantee of performing very well on ``easy instances'' is much more appealing and more likely to be used. In our setting, we implement this agenda by postulating that ``nice instances'' are large-market instances (for some formal sense of ``large''), and we are able to achieve the best of both worlds: a guaranteed constant approximation on one hand, and asymptotic optimality when the markets are large on the other hand. We believe that presenting similar results in other settings is an interesting research direction.

Mechanisms with such ``worse-case \emph{and} best-case guarantees'' are of particular appeal when the social planner needs to fix the mechanism well before the exact market characteristics are known, for example, when the mechanism is defined by some laws or regulations (e.g., FCC auctions) that are fixed well in advance.

\section*{Acknowledgements}
Yang Cai is supported by NSERC Discovery RGPIN-2015-06127 and FRQNT 2017-NC-198956.
Yannai Gonczarowski is supported by the Adams Fellowship Program of the Israel Academy of Sciences and Humanities;
his work is supported by ISF grant 1435/14 administered by the Israeli Academy of Sciences and by Israel-USA Bi-national Science Foundation (BSF) grant number 2014389;
this project has received funding from the European Research Council (ERC) under the European Union's
Horizon 2020 research and innovation programme (grant agreement No 740282).
Mingfei Zhao is supported by NSERC Discovery RGPIN-2015-06127 and Richard H. Tomlinson Doctoral Fellowship.

\bibliographystyle{plainnat}
\bibliography{bib}

\appendix

\section{Proof of Examples~\ref{example:rvwm-not-eff} through~\ref{example:rvwm-switch-not-monotone}}
\label{app:examples}

\begin{proof}[Proof of \cref{example:rvwm-not-eff}]
First consider the first-best mechanism. Fix buyer $i$. We'll prove that if her value $b_i>\nicefrac{1}{2}+\epsilon$ for some small $\epsilon$ that will be determined later, she will trade in the first-best mechanism with high probability. Let $\mathcal{S}$ be the number of sellers whose cost is smaller than $\nicefrac{1}{2}+\epsilon$, and let $\mathcal{B}$ be the number of buyers whose value is larger than $\nicefrac{1}{2}+\epsilon$. Notice that there are at most $\mathcal{B}$ buyers with value larger than $i$. Thus if $\mathcal{B}<\mathcal{S}$, all buyers with value $>\nicefrac{1}{2}+\epsilon$, including $i$, must have traded with a seller whose cost is smaller than $\nicefrac{1}{2}+\epsilon$ in the first-best mechanism.

Both $\mathcal{S}$ and $\mathcal{B}$ are the sum of independent Bernoulli random variables, with expectation $(\nicefrac{1}{2}+\epsilon)n$ and $(\nicefrac{1}{2}-\epsilon)(n-1)$ respectively. By Chernoff bound,
$$\Pr\left[\mathcal{S}<(1-\epsilon)\left(\frac{1}{2}+\epsilon\right)n\right]\leq exp\left(-\frac{\epsilon^2}{2}\cdot\left(\frac{1}{2}+\epsilon\right)n\right)\leq exp\left(-\frac{1}{12}\epsilon^2n\right),$$
$$\Pr\left[\mathcal{B}>(1+\epsilon)\left(\frac{1}{2}-\epsilon\right)(n-1)\right]\leq exp\left(-\frac{\epsilon^2}{3}\cdot\left(\frac{1}{2}-\epsilon\right)(n-1)\right)\leq exp\left(-\frac{1}{12}\epsilon^2n\right).$$

With probability at least $\left(1-exp\left(-\frac{1}{12}\epsilon^2n\right)\right)^2$,

$$\mathcal{S}\geq (1-\epsilon)\left(\frac{1}{2}+\epsilon\right)n>(1+\epsilon)\left(\frac{1}{2}-\epsilon\right)(n-1)\geq \mathcal{B}.$$

In other words, if $b_i>\nicefrac{1}{2}+\epsilon$, buyer $i$ will trade in the first-best mechanism with probability at least $\left(1-exp\left(-\frac{1}{12}\epsilon^2n\right)\right)^2$, taking expectation over all other agents' types.

Choose $\epsilon=n^{-\nicefrac{1}{3}}$. The expected GFT contributed by buyer $i$ is at least

$$\int_{\frac{1}{2}+n^{-\frac{1}{3}}}^1b_idb_i\cdot\left(1-exp\left(-\frac{1}{12}n^{\frac{1}{3}}\right)\right)^2=\frac{3}{8}+o(1).$$

By linearity of expectation and the fact that all buyers are i.i.d., the expected GFT contributed by all buyers is at most $\nicefrac{3n}{8}+o(n)$.

Similarly for every seller $j$, if her cost $s_j>\nicefrac{1}{2}+\epsilon$ for some small $\epsilon$, she can only trade in the first-best mechanism with small probability. The expected GFT contributed by all sellers (a negative term) is at least $-\nicefrac{n}{8}+o(n)$. Thus the first-best mechanism obtains GFT at least $\nicefrac{n}{4}+o(n)$. The second-best mechanism gets GFT that is in expectation at least the expected GFT of the Trade Reduction mechanism, and as the GFT of the TR mechanism is at least the GFT of the first best minus $1$,  the second-best mechanism obtains GFT at least $\nicefrac{n}{4}+o(n)$ as well.

For the mechanism of \citet{BrustleCFM17}, we first refer the reader to Appendix~\ref{sec:matching-preliminaries} for the formal definition of the mechanism. In the mechanism, with probability a half, the mechanism implements GSOM that finds all the efficient trade based on buyers' virtual value and sellers' cost. And probability a half, it implements GBOM that finds all the efficient trade based on buyers' value and seller' virtual cost. We will only give the proof that the expected GFT of GSOM is at most $\nicefrac{2n}{9}+o(n)$. An analogous proof shows that the expected GFT of GBOM is at most $\nicefrac{2n}{9}+o(n)$.

For each buyer whose value is drawn from uniform distribution $[0,1]$, her virtual value follows the uniform distribution $[-1,1]$. Fix buyer $i$, we'll show that if her value $b_i<\nicefrac{2}{3}-\epsilon$ (which means her virtual value $\tilde{\varphi}_i(b_i)=\varphi_i(b_i)<\nicefrac{1}{3}-2\epsilon$) for some small $\epsilon$ she can only trade in the GSOM with exponentially small probability. Notice that buyer $i$ can only trade in GSOM with a seller whose cost is smaller than $\nicefrac{1}{3}-2\epsilon$. Let $\mathcal{S}$ be the number of sellers whose cost is smaller than $\nicefrac{1}{3}-2\epsilon$, and let $\mathcal{B}$ be the number of buyers whose value is at least $\nicefrac{2}{3}-\epsilon$. Then if buyer $i$ trades in GSOM, $\mathcal{B}\leq\mathcal{S}$. This is because there are at least $\mathcal{B}$ buyers with value larger than $i$ (and thus have a larger virtual value since since all buyers are i.i.d.). If $\mathcal{B}>\mathcal{S}$, those buyers will take away all the sellers with cost $<\nicefrac{1}{3}-2\epsilon$. It contradicts with the fact that buyer $i$ must trade with a seller with cost $<\nicefrac{1}{3}-2\epsilon$ in GSOM. Notice that again $\mathcal{S}$ and $\mathcal{B}$ are the sum of independent Bernoulli random variables. And the expectation of $\mathcal{S}$ $(\nicefrac{1}{3}-2\epsilon)n$ is smaller than the expectation of $\mathcal{B}$, $(\nicefrac{1}{3}+\epsilon)(n-1)$ when $\epsilon=n^{-\frac{1}{3}}$ and $n$ sufficiently large. By Chernoff bound (and a similar calculation as above), $\mathcal{B}\leq\mathcal{S}$ happens with exponentially small probability. The expected GFT contributed by each buyer $i$ is at most
$$\int_{\frac{2}{3}-n^{-\frac{1}{3}}}^1 b_idb_i+o(1)=\frac{5}{18}+o(1).$$

Similarly, for each seller $j$ whose value $s_j<\nicefrac{1}{3}-\epsilon$ for some small $\epsilon$, she will trade in GSOM with high probability. The expected GFT contributed by each seller (a negative term) is at most $-\nicefrac{1}{18}+o(1)$. Thus GSOM obtains GFT at most $\nicefrac{2n}{9}+o(n)$.

We conclude that the expected GFT the mechanism of \citet{BrustleCFM17} obtains is at most $\nicefrac{2n}{9}+o(n)$, which is only a constant fraction of the second-best mechanism.
\end{proof}

\begin{proof}[Proof of \cref{example:max-not-monotone}]
We first note that $\tilde{\varphi}_1(b_1)=\varphi_1(b_1)=2b_1-90$, that $\tilde{\varphi}_2(b_2)=\varphi_2(b_2)=2b_2-30$, that $\tilde{\tau}_1(s_1)=\tau_1(s_1)=s_1=0$, and that if $s_2=0$ then $\tilde{\tau}_2(s_2)=\tau_2(s_2)=0$ and else (i.e., if $s_2=25$, then) $\tilde{\tau}_2(s_2)=\tau_2(s_2)=25+\frac{25}{4}>30$.

\begin{itemize}
\item
We first analyze the case in which $s_2=25$.
\begin{itemize}
\item
If $b_2=24$, then the first-best matches at only one pair (since $b_2<s_2$), and so there is no trade in TR and the RVWM outcome is chosen. Since $b_2<s_2$ and $\tilde{\tau}_1(s_1)=0<b_2$ we have that buyer~$2$ trades in GBOM if and only if $b_1<b_2=24$, i.e., with probability $\nicefrac{24}{90}$.
Since $b_2<s_2$ and $\tilde{\varphi}_2(b_2)=\tilde{\varphi}_2(24)=18>s_1$, we have that buyer~$2$ trades in GSOM if and only if $\tilde{\varphi}_1(b_1)<\tilde{\varphi}_2(b_2)=18$, i.e., if and only if $b_1<54$, i.e., with probability $\nicefrac{54}{90}$.

So, the overall interim probability of trade of buyer~$2$ when $b_2=24$ and $s_2=25$ is $\nicefrac{39}{90}$.
\item
If $b_2=26$, then we consider a few cases.
\begin{itemize}
\item
If $b_1<25$, then the first-best matches only one pair (since $b_1<s_2$) and so there is no trade in TR and the RVWM outcome is chosen. Since $b_1<s_2$ and $\tilde{\tau}_1(s_1)=0<b_2$ and $b_1<b_2$, we have that buyer~$2$ trades in GBOM. Since $\tilde{\varphi}_2(b_2)=\tilde{\varphi}_2(26)=22>s_1$ and $\tilde{\varphi}_2(b_2)=22>-40\ge\tilde{\varphi}_1(b_1)$, we have that buyer~$2$ trades in GSOM. So, in this case buyer~$2$ trades with overall probability $1$.
\item
If $b_1\in[25,26]$, then the first-best matches all agents and so TR will trade buyer~$2$ and seller~$1$. As in the case where $b_1<25$, buyer~$2$ trades with probability $1$ in with GBOM and GSOM. So, in this case buyer~$2$ trades with overall probability $1$ as well.
\item
If $b_1>26$, then the first-best matches all agents and so TR will trade buyer~$1$ and seller~$1$ (so buyer~$2$ does not trade). Since $\tilde{\tau}_2(s_2)>30>b_2$, GBOM trades only one edge, and since $\tilde{\varphi}_2(b_2)=22<s_2$, GSOM trades at most one edge, so the GFT of TR is at least that of RVWM (with equality if and only if the TR and RVWM outcomes coincide), and so the TR outcome is chosen and buyer~$2$ does not trade. So, in this case buyer~$2$ does not trade.
\end{itemize}
So, the overall interim probability of trade of buyer~$2$ when $b_2=26$ and $s_2=25$ is $\nicefrac{26}{90}$.
\end{itemize}
\item
We now analyze the case in which $s_2=0$. In this case the first-best matches all agents, and moreover, GBOM has all agents trading. Furthermore, regardless of whether $b_2=24$ or $b_2=26$, since $\tilde{\varphi}_2(b_2)\ge0=s_1=s_2$, buyer~$2$ trades in GSOM. So, RVWM has buyer~$2$ trading with probability~$1$.
\begin{itemize}
\item
If $b_1<b_2$, then $b_2$ trades in the TR mechanism. So, buyer~$2$ trades with probability~$1$ in this case.
\item
If $b_2<b_1<45$, then $b_2$ does not trade in the TR mechanism. In this case, since $\tilde{\varphi}_1(b_1)<0=s_1=s_2$, only buyer~$2$ trades in GSOM. So, regardless of whether $b_2=24$ or $b_2=26$, we have that $GFT(TR)=b_1<\frac{b_1+b_2}{2}+\frac{b_2}{2}=GFT(GBOM)+GFT(GSOM)$, so the RVWM outcome is chosen and buyer~$2$ trades with probability~$1$.
\item
If $b_2<b_1$ and $b_1>45$, then since $\tilde{\varphi}_1(b_1)>0=s_1=s_2$, all agents trades in GSOM as well, regardless of whether $b_2=24$ or $b_2=26$. So, the RVWM outcome is efficient and is chosen, and buyer~$2$ trades with probability~$1$.
\end{itemize}
So, the overall interim probability of trade of buyer~$2$ when $s_2=0$, regardless of whether $b_2=24$ or $b_2=26$, is $1$.
\end{itemize}
Combining all of the above, we have that the overall interim probability of trade of buyer~$2$ with $b_2=24$, over the distributions of all other values and costs and over the randomness of the mechanism,  is $\nicefrac{4}{5}\cdot\nicefrac{39}{90}+\nicefrac{1}{5}\cdot1$ and the overall interim probability of trade of buyer~$2$ with $b_2=26$, over the distributions of all other values and costs and over the randomness of the mechanism,  is $\nicefrac{4}{5}\cdot\nicefrac{26}{90}+\nicefrac{1}{5}\cdot1$, which is strictly smaller.
\end{proof}

\begin{proof}[Proof of \cref{example:rvwm-switch-not-monotone}]
For the case in which $s_2=25$, the analysis coincides with that of \cref{example:max-not-monotone}. For the case in which $s_2=0$, the first-best matches all agents and therefore $q=2$ and TR is chosen. So, buyer~$2$ trades if and only if $b_2>b_1$. So, if $b_2=24$ then this occurs with probability $\nicefrac{24}{90}$ and if $b_2=26$ then this occurs with probability $\nicefrac{26}{90}$.

Combining all of the above, we have that the overall interim probability of trade of buyer~$2$ with $b_2=24$, over the distributions of all other values and costs and over the randomness of the mechanism,  is $\nicefrac{4}{5}\cdot\nicefrac{39}{90}+\nicefrac{1}{5}\cdot\nicefrac{24}{90}=\nicefrac{36}{90}$ and the overall interim probability of trade of buyer~$2$ with $b_2=26$, over the distributions of all other values and costs and over the randomness of the mechanism,  is $\nicefrac{4}{5}\cdot\nicefrac{26}{90}+\nicefrac{1}{5}\cdot\nicefrac{26}{90}=\nicefrac{26}{90}$, which is strictly smaller.
\end{proof}

\section{Proofs of Lemmas~\ref{ro} and~\ref{ro-trade-probability}}
\label{app:ro}

\begin{proof}[Proof of \cref{ro}]
We start by proving \cref{ro-properties}. 
We will show all of these properties for the SO mechanism with the given parameters. Similar arguments show them for the BO mechanism, and therefore for the RO mechanism. That the SO  mechanism with the given parameters is BIC for the seller is immediate since the the seller chooses a price that maximizes her expected utility over the distribution from which the buyer is drawn. That SO is ex-post IC and ex-post IR for the buyer is immediate from the buyer choosing whether to accept or reject the offer in a way that maximizes her utility. Strong budget balance is also immediate from definition. Finally, to show that that the SO  mechanism with the given parameters is ex-post IR for the seller, we note that since the seller's cost is drawn from $D_s$, and since $\bar{s}\ge\sup\supp D_s$, we have that $\bar{s}$ is at least the seller's cost. Therefore, seller $j$ can always ask for a price equal to his cost (without violating the constraint $\bar{s}$), which will result in zero utility for her (regardless of whether the buyer accepts or rejects this price), and in particular guarantees nonnegative utility for her.

We move on to prove \cref{ro-price-within-constraints}.
We will show that any offered price is at least $\bar{b}$. An analogous proof shows it to be at most $\bar{s}$. In case of the BO mechanism, this holds by definition since any offer by the buyer is constrained to be at least $\bar{b}$. We note that any offer by the seller will also be at least $\bar{b}$, since this seller knows that the buyer will buy at this price with probability $1$ since $\bar{b}\le\inf\supp D_b$ (and since this price is at least $\bar{s}$, offering it does not violate this seller's constraint), and therefore the seller will never make a lower offer, and the proof is complete.
\end{proof}

\begin{proof}[Proof of \cref{ro-trade-probability}]
We start by proving \cref{ro-trade-probability-always-trade}.
If $\bar{s}\le b$, then any price offered by the seller in the SO part of the RO mechanism with the given parameters will be accepted by the buyer (as it will be at most $\bar{s}$ and therefore at most $b$). Similarly, if $\bar{b}\ge s$, then any price offered by the buyer in the BO part of the RO mechanism with the given parameters will be accepted by the seller (as it will be at least $\bar{b}$ and therefore at least $s$). Either way, trade will occur in the RO mechanism with probability at least $\nicefrac{1}{2}$.

We move on to prove \cref{ro-trade-probability-restrictions-dont-matter}.
It is enough to show that if trade occurs in the SO part of the latter mechanism, then trade occurs also in the SO part of the former mechanism. (The BO part is handled analogously.)
If trade occurs in the SO part of the latter mechanism (unconstrained and unconditioned), then it means that in that mechanism, the price $p$ that maximizes the revenue of the seller from $D_b$ (unconstrained and unconditioned), which is the price that was offered, is at most $b$ (since the offered price is accepted) and at least $s$ (since the mechanism is ex-post IR). Therefore, since $b\le\bar{s}$ and $s\ge\bar{b}$, we have $\bar{s}\ge b\ge p \ge s \ge\bar{b}$, and so $p$ is also the price that maximizes the revenue of the seller from $D_b|_{\ge\bar{b}}$ constrained upon the price being at most $\bar{s}$, and so this is also (at most, in case of multiple utility-maximizing prices) the price offered in the SO mechanism with parameters $\bar{s}$ and $D_b|_{\ge\bar{b}}$, and so the price offered by the seller is accepted in this mechanism as well.
\end{proof}

\section{The Trade Reduction Mechanism for\texorpdfstring{\\}{ }Matching Markets: Proofs}
\label{app:matching-tr}
		
\begin{proof}[Proof of \cref{thm:matching-TR}]
	We first observe that the allocation is indeed feasible, that is, that we can perfectly match all winning buyers and sellers. Indeed, the set of winners can be obtained by taking the matching  $M(\Bb,\Bs)$ and then removing the agents that correspond to one edge between any two classes $t$ and~$t'$ that are trading (have  $r_{t,t'}>0$), and then switching agents of the same class (removing every agent in the leftover matching that has value lower than a removed agent of the same class, and adding the removed agent in her stead), maintaining a perfect matching of $\mathit{TR}(\Bb,\Bs)$. Thus, there is a matching of the winners $\mathit{TR}(\Bb,\Bs)$ such that there are exactly $r_{t,t'}-1$ trades of agents of classes~$t$ and $t'$ whenever $r_{t,t'}>0$. The reduced agents can be perfectly matched with exactly a single edge between agents  of classes $t$ and $t'$ whenever $r_{t,t'}>0$.

Now, the \lcnamecref{thm:matching-TR} will directly follow from the following sequence of \lcnamecrefs{matching-TR-incentive}.
\begin{claim}\label[claim]{matching-TR-incentive}
	The Trade Reduction Mechanism for matching markets is ex-post IR and ex-post IC.
\end{claim}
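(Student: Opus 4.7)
I would prove the two properties separately.

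For \textbf{ex-post IR}, a non-trading agent has utility zero. A trading buyer $i$ of class $t$ pays the highest reported value $p^*$ among the $d_t$ reduced class-$t$ buyers; since $i$ is among the top $q_t - d_t$ class-$t$ buyers by value (by construction of the mechanism), her bid satisfies $b_i \geq p^*$, so her utility $b_i - p^*$ is non-negative. A symmetric calculation handles each trading seller, who is paid at least her reported cost.

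For \textbf{ex-post IC}, I would use Myerson's characterization: establish allocation monotonicity in each agent's own bid and show that the charged payment equals the critical bid. First, I would prove a structural lemma: for any profile, $M(\Bb,\Bs) \cap T_t$ consists of the top $q_t$ class-$t$ agents (by bid, with lexicographic tie-breaking). This follows from a swap argument, since two agents of the same class have identical neighbor sets in $G$ and can be interchanged in any matching while preserving feasibility; swapping always favors the higher-bidding buyer (or lower-cost seller), and the subset-consistent tie-breaking rule pins down equalities.

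For monotonicity, I would fix all other bids and show that if buyer $i$ wins at bid $b_i$, she wins at every $b_i' > b_i$. Three observations combine: (i) $i$ remains in $M$ after the increase, because every matching containing $i$ gains exactly $b_i' - b_i$ in weight while matchings excluding $i$ are unchanged; (ii) $i$'s rank within class $t$ can only improve; and (iii) $q_t - d_t$ is non-decreasing in $b_i$, because raising the weight of $i$'s edges cannot strictly reduce the number of class-$t$ edges in the optimum, and each additional class-$t$ edge contributes at most one to $d_t$. Having monotonicity, for the payment rule let $p^*$ be the payment of trading buyer $i$: any bid slightly above $p^*$ keeps $i$ in the top $q_t - d_t$ of class $t$ and, by the weight argument above, in $M$, so $i$ wins; any bid below $p^*$ drops her out of the top $q_t - d_t$ positions, so $i$ loses. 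Hence $p^*$ is exactly the critical bid, and Myerson's lemma yields ex-post IC for buyers. The seller argument is symmetric.

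The \textbf{main obstacle} is tracking how $q_t$ and $d_t$ jointly change when a single bid moves, since both are defined globally through the maximum weight matching and depend on how ties are resolved. I expect to isolate this in a short combinatorial lemma that compares the optimal matchings before and after the bid change via their symmetric difference, using the structural lemma to reduce to purely local reasoning within class $t$.
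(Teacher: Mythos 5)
Your skeleton is the same as the paper's: a direct IR argument plus Myerson-style ex-post IC via allocation monotonicity and critical-value payments, with same-class interchangeability doing the combinatorial work. The one step that does not hold up as written is your observation (iii). The claim that $q_t-d_t$ is non-decreasing in $b_i$, justified by counting ``additional class-$t$ edges'', gives no control over $d_t$: even when $q_t$ is unchanged, the optimal matching can in principle restructure so that the matched class-$t$ agents change which seller classes they are paired with, so $d_t$ could move without any class-$t$ edge being added, and your argument does not rule this out. The paper avoids this issue entirely with a stronger and simpler fact, which your own observation (i) almost states: if a winning buyer $i$ changes her bid and remains in the first-best, then every matching containing $i$ changes in weight by the same amount while every matching excluding $i$ is unchanged, so by weight-independence of the tie-breaking rule the \emph{selected matching $M(\Bb,\Bs)$ is literally the same}. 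Hence $q_t$ and $d_t$ for every class are unchanged --- no monotonicity claim about $q_t-d_t$ is needed --- and $i$, whose rank within her class only improves, still survives the reduction.

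The same observation is what makes the critical-value part airtight. For a bid below $p^*$: either $i$ drops out of $M(\Bb,\Bs)$ and loses, or she stays in, in which case the matching (hence $q_t$ and $d_t$) is unchanged and she is no longer among the top $q_t-d_t$ class-$t$ bids, so she is reduced. For a bid above $p^*$: the fact that she remains in the first-best cannot be deduced from your observation (i), which covers only increases from a winning bid; this is exactly where your structural lemma (the first-best matches the top $q_t$ agents of each class) must be invoked, after which the ``same matching'' fact again fixes $q_t$ and $d_t$ and she wins. With (iii) replaced by the unchanged-matching argument and the structural lemma applied at that point, your proof coincides with the paper's.
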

\begin{proof}
We first observe that the TR mechanism for matching markets is monotone. It is enough to show this for the buyers.
Assume that the value of a winning buyer $i$ increases by $\delta$. We show that she still wins after the increase. Every matching that includes this buyer improves by the same amount $\delta$, while the value of any other matching did not change, so the same matching $M(\Bb,\Bs)$ will be picked after the value increase (ties are broken the same way, independent of values). Finally, the reduction will also not change as for any class $t$, the values of $q_t$ and $d_t$ did not change, and if buyer $i$ is of class $t$, she will still not be in the set of $d_t$ lowest-value buyers after her value has increased, so she will not be reduced.

To complete the proof that the mechanism is ex-post IR and ex-post IC, we need to show that payments are by critical values.
Indeed, assume that for a winning buyer $i$ of class $t$, the value of the highest reduced buyer of class $t$ (if $i$ bids truthfully) is $x$. If buyer $i$ of class $t$ changes her bid but keeps it  above $x$, she wins. Now assume that she drops her bid below $x$. If she is not in $M(\Bb,\Bs)$ she loses. If she is in $M(\Bb,\Bs)$ and bids below $x$, then the matching $M(\Bb,\Bs)$ will contain the exact same set of agents, and will be the same (due to the tie breaking rule which is independent of the actual bids), so she will lose while the agent with bid $x$ will win instead of her, as both $q_t$ and $d_t$ are the same but now $i$ is no longer one of the $q_t-d_t$ highest-bidding agents of class $t$.
Thus, bidding above $x$ implies winning for $i$, while bidding below $x$ implies losing, so $x$ is indeed the critical value for buyer $i$ to win. Similar arguments prove the \lcnamecref{matching-TR-incentive} for sellers.
\end{proof}

\begin{claim}\label[claim]{matching-TR-budget}
	The Trade Reduction Mechanism for matching markets is ex-post (direct trade) weakly budget balanced.
\end{claim}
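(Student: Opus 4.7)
The plan is to reduce ex-post direct trade weak budget balance to a single inequality per active class pair. Since the payment rule is class-uniform---every trading buyer of class $t_B$ pays $p^B_{t_B}=\max\{b_{i'}:i'\ \text{reduced of class}\ t_B\}$ and every trading seller of class $t_S$ is paid $p^S_{t_S}=\min\{s_{j'}:j'\ \text{reduced of class}\ t_S\}$---it suffices to show, for every class pair $(t_B,t_S)$ with $r_{t_B,t_S}\geq 1$ (a necessary condition for any trading pair between these classes to exist in $\mathit{TR}(\Bb,\Bs)$), that $p^B_{t_B}\geq p^S_{t_S}$.

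I would next pin down a structural property of $M(\Bb,\Bs)$ that makes these payments concrete: the matched buyers of any class $t$ are precisely the $q_t$ highest-valued buyers of that class (and analogously the lowest-cost sellers for each seller class). This follows from a one-edge swap---if a higher-valued buyer of class $t$ were unmatched while a lower-valued one were matched, swapping them (feasible because same-class agents share an identical neighborhood in $G$) would strictly increase the matching's weight. Hence the reduced buyers of class $t_B$ are exactly the $d_{t_B}$ lowest-valued among the matched, so the highest-valued reduced buyer $i'$ satisfies $b_{i'}=p^B_{t_B}$, and symmetrically the lowest-cost reduced seller $j'$ satisfies $s_{j'}=p^S_{t_S}$; both $i'$ and $j'$ lie in $M(\Bb,\Bs)$.

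The main step---the one I expect to be the principal obstacle---is a rearrangement argument producing a max-weight matching $M'$ with the same class-pair counts $\{r_{t,t'}\}$ as $M(\Bb,\Bs)$ but in which the specific edge $(i',j')$ appears. I would fix any edge $(i^*,j^*)\in M(\Bb,\Bs)$ between classes $t_B$ and $t_S$ (which exists because $r_{t_B,t_S}\geq 1$) and apply at most two swaps. First, letting $j''$ denote $i'$'s partner in $M(\Bb,\Bs)$, replace the edges $(i',j'')$ and $(i^*,j^*)$ by $(i',j^*)$ and $(i^*,j'')$; this is feasible because $i'$ and $i^*$ share the same neighborhood, so $i^*$ can inherit $i'$'s former neighbor. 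Next, letting $i^{**}$ denote $j'$'s partner at this point, replace $(i^{**},j')$ and $(i',j^*)$ by $(i',j')$ and $(i^{**},j^*)$, which is feasible because $j'$ and $j^*$ likewise share a neighborhood. Each swap merely permutes the identities of matched endpoints within the same class pair, so every $r_{t,t'}$ and the total matching weight are preserved, yielding a max-weight $M'$ containing $(i',j')$. A small amount of case analysis handles degeneracies such as $j''=j^*$, $j^*=j'$, or $i^{**}\in\{i',i^*\}$, each of which collapses the construction to one or zero swaps.

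Finally, every edge of a max-weight matching must carry non-negative weight, since otherwise deleting that single edge alone would yield a strictly heavier matching (still a matching). Applied to the edge $(i',j')\in M'$, this gives $b_{i'}\geq s_{j'}$, which by our identification of $i'$ and $j'$ is precisely $p^B_{t_B}\geq p^S_{t_S}$. Combined with the fact that losing agents neither pay nor receive anything, this establishes ex-post (direct trade) weak budget balance on every valuation profile.
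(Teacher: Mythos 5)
Your proposal is correct and takes essentially the same approach as the paper: it exploits the identical neighborhoods of same-class agents to realize a reduced buyer and a reduced seller as an edge of a maximum-weight matching, and then uses that every edge of a maximum-weight matching has non-negative weight. The only cosmetic difference is that you apply the swaps directly to the extreme reduced agents $(i',j')$ to get $p^B_{t_B}\ge p^S_{t_S}$ in one step, whereas the paper argues via the ``reduced edge'' produced in its feasibility construction together with a short chain of inequalities relating the payments to the value and cost on that edge.
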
	
\begin{proof}
To see that the mechanism is ex-post (direct trade) weakly BB we prove that every trade is ex-post weakly BB.
Indeed, consider a trade between two agents of respective classes $t$ and $t'$ in the matching of $\mathit{TR}(\Bb,\Bs)$ that contains exactly $r_{t,t'}-1$ trades of agents of classes $t$ and $t'$ whenever $r_{t,t'}>0$, and consider the reduced edge between these two classes. The buyer pays at least the value of the reduced buyer of the same class on the reduced edge between classes $t$ and $t'$, while the seller receives at most the cost of the reduced seller on that edge.
As that reduced edge has non-negative gain (otherwise removing it will increase the welfare of the first-best matching $M(\Bb,\Bs)$), the trade is ex-post weakly BB.
\end{proof}

\begin{claim}\label[claim]{matching-TR-gains}
	For any profile $(\Bb,\Bs)$, the fraction of the realized gains from trade (first-best) that the Trade Reduction Mechanism for matching markets obtains ex-post is at least $\min\bigl\{1- \frac{d_t}{q_t} ~\big|~ \text{class $t$  s.t.\ $q_t>0$}\bigr\}$.
\end{claim}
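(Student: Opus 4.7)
My plan is to prove the equivalent bound $L := \text{OPT} - \text{GFT}(\mathit{TR}) \le (1-\alpha)\,\text{OPT}$, where $\alpha = \min_{t : q_t > 0}(1 - d_t/q_t)$. Since $M := M(\Bb,\Bs)$ is a maximum-weight matching and agents within a class are interchangeable, the $q_t$ agents of class $t$ appearing in $M$ are the top-$q_t$ values (for a buyer class) or the bottom-$q_t$ costs (for a seller class). Writing $V_t, C_t$ for the class-$t$ totals in $M$ and $V'_t, C'_t$ for their restrictions to $\mathit{TR}(\Bb,\Bs)$, one has $\text{OPT} = V - C$ and $\text{GFT}(\mathit{TR}) = V' - C'$ (with $V = \sum_{t \text{ buyer}} V_t$, etc.), so $L = \sum_t V^R_t - \sum_t C^R_t$, where $V^R_t := V_t - V'_t$ is the sum of the $d_t$ bottom-value buyers of class $t$ in $M$ (for a buyer class) and $C^R_t := C_t - C'_t$ the sum of the $d_t$ top-cost sellers of class $t$ in $M$ (for a seller class).

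Two per-class inequalities follow immediately: $V^R_t \le (d_t/q_t)\,V_t$, since the average of the $d_t$ smallest of $q_t$ nonnegative numbers is at most the overall average, and symmetrically $C^R_t \ge (d_t/q_t)\,C_t$. Combining them gives $L \le \sum_{t \text{ buyer}} (d_t/q_t)\,V_t - \sum_{t \text{ seller}} (d_t/q_t)\,C_t$, so the target bound $L \le (1-\alpha)(V-C)$ becomes the algebraic inequality
\[
\sum_{t \text{ buyer}} \bigl((1-\alpha) - d_t/q_t\bigr)\,V_t \;\ge\; \sum_{t \text{ seller}} \bigl((1-\alpha) - d_t/q_t\bigr)\,C_t,
\]
in which each coefficient is non-negative by definition of $\alpha$.

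This algebraic inequality need not hold for arbitrary non-negative $V_t, C_t$, so the main obstacle is to exploit the max-weight structure of $M$ to establish it in our setting. To do so, I plan to work with a concrete ``reduced sub-matching'' $E^R \subseteq M$ consisting of exactly one edge per trading class pair $(t,t')$, whose buyer endpoint is one of the $d_t$ removed buyers of class $t$ and whose seller endpoint is one of the $d_{t'}$ removed sellers of class $t'$. This is feasible since each class $t$ has exactly $d_t$ removed agents and appears in exactly $d_t$ trading pairs. In any such pair $(t,t')$, the removed edge has weight at most that of any other $M$-edge between the same pair of classes (the removed buyer has the smallest value and the removed seller the largest cost among participants in $M$), hence at most the per-pair average $U^M_{t,t'}/r_{t,t'}$, where $U^M_{t,t'}$ is the total weight of edges between the two classes in $M$. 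Summing immediately yields the weaker \cref{cor:matching-TR-beta} bound $L \le (1 - \beta(\Bb,\Bs))\,\text{OPT}$. The hard part will be to upgrade this pair-level bound to the class-level $(1-\alpha)$ bound, by aggregating the per-pair inequalities across all pairs incident to each class and using the max-weight coupling $b_i \ge s_j$ on every edge to convert the pair-level fractions $1/r_{t,t'}$ into the class-level fractions $d_t/q_t$. Finally, the degenerate case $\alpha = 0$ (some class has $d_t = q_t$, so nothing of that class trades under TR) reduces to verifying $\text{GFT}(\mathit{TR}) \ge 0$, which holds because every edge of $\mathit{TR}$ matches a top-value buyer of a class $t$ with a bottom-cost seller of a partner class $t'$, and hence has weight at least that of some $M$-edge between $(t,t')$, which is non-negative.
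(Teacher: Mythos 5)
Your opening reduction is fine and coincides with the first half of the paper's own argument: the paper also works class by class, noting that the removed agents of class $t$ are the $d_t$ worst of the $q_t$ matched ones, which gives exactly your bounds $V^R_t\le\frac{d_t}{q_t}V_t$ and $C^R_t\ge\frac{d_t}{q_t}C_t$ (the paper phrases this with signed node values $v_k$ and concludes $\mathit{TR}(\Bb,\Bs)\ge\sum_t\frac{q_t-d_t}{q_t}\sum_{k\in W\cap T_t}v_k$). But from that point on your write-up is a plan rather than a proof: the class-level inequality $\sum_{t\text{ buyer}}\bigl((1-\alpha)-\frac{d_t}{q_t}\bigr)V_t\ge\sum_{t\text{ seller}}\bigl((1-\alpha)-\frac{d_t}{q_t}\bigr)C_t$, which you yourself identify as ``the main obstacle'' (and which the paper dispatches in a single line), is never established. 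The announced strategy cannot work as routinely as described: the pair-level guarantee is the \emph{weaker} one (the paper proves $\beta(\Bb,\Bs)\le\alpha(\Bb,\Bs)$ in \cref{cor:matching-TR-beta}), so no aggregation of the per-pair $1/r_{t,t'}$ bounds can by itself yield the $\alpha$-level bound; and the edge-wise inequality $b_i\ge s_j$ is genuinely insufficient, since one can write down matchings that are edge-wise nonnegative but not maximum-weight for which your displayed inequality fails (e.g.\ a buyer class with $\gamma_t=0$ matched to expensive sellers of a class with $\gamma_{t'}>0$, compensated only by low-value buyers elsewhere). Any correct completion must therefore invoke the global optimality of $M(\Bb,\Bs)$ --- the absence of improving reallocations across classes --- which your sketch never uses beyond per-edge nonnegativity. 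This missing step is the entire content of the claim, so the proposal has a genuine gap.

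In addition, the one concrete stepping stone you do assert is false as stated. When $d_t\ge2$, the removed buyer placed on the reduced edge for the pair $(t,t')$ is merely \emph{one} of the $d_t$ lowest-value matched buyers of class $t$; other $M$-edges between $t$ and $t'$ may themselves involve different removed buyers of class $t$ (or removed sellers of class $t'$) with even lower values or higher costs. Hence the reduced edge need not have weight at most that of every other $M$-edge between the two classes, nor at most the per-pair average $U^M_{t,t'}/r_{t,t'}$: with matched class-$t$ values $10,3,1$ and $d_t=2$, if the reduced edge carries the value-$3$ buyer while the value-$1$ buyer is also matched into $(t,t')$, the comparison fails. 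So even the ``immediate'' derivation of \cref{cor:matching-TR-beta} does not go through as written (in the paper, that corollary is deduced from the $\alpha$-bound via $\beta\le\alpha$, not proved directly by matching removed agents). To fix the proof you should either complete the class-level comparison directly, in the spirit of the paper's averaging argument but with an explicit use of the maximality of $M(\Bb,\Bs)$ (an exchange argument across classes), or abandon the per-pair route altogether.
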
	

\begin{proof}
	Recall that $T_t$ is the set of class-$t$ agents.
	Let $\alpha(\Bb,\Bs) = \min\bigl\{1- \frac{d_t}{q_t} ~\big|~ \text{class $t$  s.t.\ $q_t>0$}\bigr\}$. To prove the claim that the mechanism guarantees an $\alpha$ fraction of the welfare of $OPT(\Bb,\Bs)$, we let $v_k$ be the value of agent $k$ ($v_k=b_i$ for a buyer $k=i$, and $v_k=-s_j$ for seller $k=j$), and assuming $W$ is the set of agents in $M(\Bb,\Bs)$ we
	observe that
	$$OPT(\Bb,\Bs)=\sum_{(i,j)\in M(\Bb,\Bs)} (b_i-s_j) =  \sum_{t:q_t>0} \ \ \ \ \sum_{k\in W\cap T_t} v_k$$.

	As for each class $t$ with $q_t>0$ we remove $d_t$ agents each with value at most the value of any winner, we obtain at least a $\frac{q_t-d_t}{q_t}$ fraction of the value of agent of class $t$. Thus,
	$$\alpha(\Bb,\Bs) \cdot OPT(\Bb,\Bs)\leq \sum_{t: q_t>0} \frac{q_t-d_t}{q_t} \ \ \ \ \sum_{k\in W\cap T_t} v_k\le TR(\Bb,\Bs),$$
	as needed.
\end{proof}

\cref{thm:matching-TR} follows from \cref{matching-TR-incentive,matching-TR-budget,matching-TR-gains}.
\end{proof}

\begin{proof}[Proof of \cref{cor:matching-TR-beta}]
Let $\beta(\Bb,\Bs) = \min\bigl\{ 1- \frac{1}{r_{t,t'}} ~\big|~\text{classes $(t,t')$ s.t.\ $ r_{t,t'}>0$}\bigr\}$.
By the guarantee of \cref{thm:matching-TR} with respect to $\alpha(\Bb,\Bs)$, it is enough to show that $\beta(\Bb,\Bs)\le\alpha(\Bb,\Bs)$. Indeed, for every class $t$ such that $q_t>0$, since $q_t=\sum_{t'} r_{t,t'}$:
\[\frac{q_t-d_t}{q_t}=\sum_{t':r_{t,t'}>0}\frac{r_{t,t'}-1}{q_t}=\sum_{t':r_{t,t'}>0}\frac{r_{t,t'}-1}{r_{t,t'}}\cdot\frac{r_{t,t'}}{q_t}\ge\min_{t':r_{t,t'}>0}\frac{r_{t,t'}-1}{r_{t,t'}},\]
where the inequality is since a weighted average of values is always at least the minimal value. Taking the minimum of both sides of the obtained  inequality over all classes $t$ s.t.\ $q_t>0$, we obtain that $\alpha(\Bb,\Bs)\ge\beta(\Bb,\Bs)$, as required.
\end{proof}

\section{Additional Preliminaries for Appendices~\ref{app:matching-offering} and~\ref{app:matching-hybrid}}
\label{sec:matching-preliminaries}

\subsection{Notation}\label{app:notation}

First we give some notations specialized in this setting.
Given profile $(\Bb,\Bs)$, let $M(\Bb,\Bs)$ be the first-best matching, or the maximum weight matching\footnote{We break ties lexicographically by IDs.}, under graph $G$ with edge weight $b_i-s_j$ on each edge $(i,j)\in E$. For each agent $a$, denote $M_{-a}(\Bb,\Bs)$ the maximum weight matching\footnote{Follow the same breaking tie rules as the first-best matching.} after removing $a$ and its related edges. For each buyer $i$ such that $(i,j)\in M(\Bb,\Bs)$, let $P_i(\Bb,\Bs)$ be the VCG payment of buyer $i$. Formally,
$$P_i(\Bb,\Bs)= \sum_{(i',j')\in M_{-i}(\Bb,\Bs)}(b_{i'}-s_{j'})- \sum_{(i',j')\in M(\Bb,\Bs)}(b_{i'}-s_{j'})+b_i$$

Similarly, let $P_j(\Bb,\Bs)$ be the VCG payment received by seller $j$:
$$P_j(\Bb,\Bs)= \sum_{(i',j')\in M(\Bb,\Bs)}(b_{i'}-s_{j'})- \sum_{(i',j')\in M_{-j}(\Bb,\Bs)}(b_{i'}-s_{j'})+s_j$$

For simplicity, when the valuation profile $(\Bb,\Bs)$ is fixed, we will abuse the notation and use $M$ (or $M_{-a}$, $P_i$, $P_j$) instead in the proof, without writing the valuation profile.

\subsection{Lexicographic Tie-Breaking by ID}\label{app:ties}

In this section, we define the tie-breaking rule that we use whenever we have to choose between multiple maximum weight matchings when picking a matching with maximum weight anywhere throughout this paper.
We first define a strict total order over matchings, which we call the \emph{Lexicographic order by IDs}.

\begin{definition}[Lexicographic order by IDs]\label{def:lex-order}
Fix a bipartite graph,
and let $M'$ and $M''$ be two matchings in this graph.
The \emph{Lexicographic order by IDs} decides which of $M'$ and $M''$ is ranked higher as follows.
It first sorts the edges of the matching by the index of the buyer.
For each $k$, let $(i'_k,j'_k)$ and $(i^{''}_k,j^{''}_k)$ be the $k$\textsuperscript{th} sorted edges (according to the index of the buyer) in $M'$ and in $M''$, respectively.
Let $k$ be the lowest index such that it is not the case that the two edges $(i'_k,j'_k)$ and $(i^{''}_k,j^{''}_k)$ are both defined and are the same edge.
\begin{itemize}
	\item If one matching has a $k$\textsuperscript{th} edge while the other does not, then
	the matching with more edges is ranked higher.
	\item Otherwise,
	the matching with the lower buyer index in the $k$\textsuperscript{th} edge is ranked higher.
	\item Otherwise,
	the matching with the lower seller index in the $k$\textsuperscript{th} edge is ranked higher.
\end{itemize}
\end{definition}

As noted above, throughout this paper when two matchings have the same weight, we use the Lexicographic by IDs order to break ties when choosing a maximum weight matching, so we in fact choose the lexicographically-by-IDs-highest matching among those with maximum weight. We will refer to this practice as using the \emph{Lexicographic by IDs tie-breaking rule}.
We will now formalize the two properties of this tie-breaking rule, which we will use in our analysis:\footnote{Indeed, our results would still hold for any other tie-breaking rule that satisfies these two properties.}

\begin{lemma}
	The Lexicographic by IDs tie-breaking rule satisfies the following two properties:
	\begin{itemize}
		\item The tie-breaking is \emph{weight independent}: ties between maximum weight matchings are broken independently of any weight function. That is, if $W$ and $W'$ are two weight functions,
		if $\mathcal{M}$ and $\mathcal{M}'$ are the respective corresponding sets of maximum weight matchings, and if $M$ and $M'$ are the respective corresponding chosen
		matchings, then if $\mathcal{M}\subseteq\mathcal{M'}$ and $M'\in\mathcal{M}$, then $M=M'$.
		So, the set of matched nodes that this tie-breaking rule picks (among all possible maximum weight options), as well as the matching that this rule picks within that set, does not depend on the weight function.
		\item The choice function is \emph{subset consistent}: if the chosen maximum weight matching among all matchings of the vertices $U$ is the matching $M$, then for any $(i,j)\in M$, the chosen maximum weight matching among all matchings of the vertices $U\setminus\{i,j\}$ is the matching $M\setminus\{(i,j)\}$.
	\end{itemize}
\end{lemma}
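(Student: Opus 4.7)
The plan is to prove the two properties separately, relying on the structure of the Lex-by-IDs order itself and on a small ``cancellation'' principle for it.

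For \textbf{weight independence}, I would observe that the order of Definition~\ref{def:lex-order} is defined entirely in terms of the edges of a matching and their buyer/seller indices, without reference to any weight function. It is therefore a fixed strict total order on the set of all matchings in the graph. Since the tie-breaking rule picks the lex-highest element of the given set of max-weight matchings, and $M'\in\mathcal{M}\subseteq\mathcal{M}'$ is lex-highest in $\mathcal{M}'$, it is in particular lex-higher than every element of $\mathcal{M}$ (a subset of $\mathcal{M}'$) and lies in $\mathcal{M}$; hence $M'$ is also lex-highest in $\mathcal{M}$, so $M=M'$.

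For \textbf{subset consistency}, I would first check that $M\setminus\{(i,j)\}$ achieves the maximum weight among matchings of $U\setminus\{i,j\}$: if some matching $N$ of $U\setminus\{i,j\}$ had $w(N)>w(M\setminus\{(i,j)\})$, then $N\cup\{(i,j)\}$ would be a matching of $U$ with $w(N\cup\{(i,j)\})>w(M)$, contradicting the maximality of $M$. The same identity $w(N\cup\{(i,j)\})=w(N)+w(i,j)$ shows that $N\mapsto N\cup\{(i,j)\}$ is a bijection between max-weight matchings of $U\setminus\{i,j\}$ and max-weight matchings of $U$ that contain $(i,j)$. I would then invoke a \emph{cancellation lemma}: for any two matchings $N_1,N_2$ both containing the same edge $e$, the lex comparison of $N_1,N_2$ returns the same verdict as that of $N_1\setminus\{e\},N_2\setminus\{e\}$. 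Applied to any max-weight matching $N$ of $U\setminus\{i,j\}$, this gives $M$ lex-higher than $N\cup\{(i,j)\}$ (by lex-maximality of $M$ among max-weight matchings of $U$), hence $M\setminus\{(i,j)\}$ lex-higher than $N$, so $M\setminus\{(i,j)\}$ is the lex-highest max-weight matching of $U\setminus\{i,j\}$, as required.

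The main obstacle is the cancellation lemma, whose subtlety is that the shared edge $e=(i,j)$ need not sit at the same sorted-by-buyer position in $N_1$ and in $N_2$: that position equals one plus the number of other edges of the matching whose buyer index is strictly less than $i$, and this count can differ between the two matchings. I would handle this by a short case analysis on the positions $p_1,p_2$ of $e$ in the sorted lists $L_1,L_2$; WLOG $p_1\le p_2$. For $k<p_1$, both $L_1[k]$ and $L_2[k]$ have buyer index strictly less than $i$ and are unchanged by deletion of $e$, so comparisons at those positions are identical in the two verdicts. If $p_1<p_2$ and all earlier positions agree, then at position $p_1$ we have $e$ in $L_1$ (buyer $i$) against an edge of $L_2$ with buyer index $<i$, so $N_2$ wins; after deletion, position $p_1$ in $L_2'$ is unchanged (buyer $<i$) while position $p_1$ in $L_1'$ is the former $L_1[p_1+1]$ (buyer $>i$), so $N_2\setminus\{e\}$ still wins the same way. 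If $p_1=p_2$, both lists have $e$ at the common position and agree there, and deletion performs a uniform left-shift of the suffixes in both lists, so any later first-disagreement produces the same verdict. Once this bookkeeping is checked, the rest of the argument is immediate.
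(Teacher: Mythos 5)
Your proof is correct and follows essentially the same route as the paper's: weight independence is immediate because the Lexicographic-by-IDs order is defined without reference to weights, and subset consistency is obtained by adding $(i,j)$ back to any competing maximum-weight matching of $U\setminus\{i,j\}$, invoking the lex-maximality of $M$ among maximum-weight matchings of $U$, and then cancelling the shared edge. The only difference is that you explicitly verify the cancellation step --- that deleting a common edge preserves the lex comparison even when that edge sits at different sorted positions in the two matchings --- which the paper simply asserts with ``the shared edge makes no difference in the tie breaking.''
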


\begin{proof}
Weight-independence is by definition of the Lexicographic order by IDs. For subset consistency, let $M'\ne M\setminus\{(i,j)\}$ be another maximum weight matching of $U\setminus\{(i,j)\}$, and note that  when adding the edge $(i,j)$ to $M'$, one obtains a maximum weight matching of $U$. By definition of $M$, it is ranked higher than $M'\cup\{(i,j)\}$ by the Lexicographic by IDs order, and since the shared edge makes no difference in the tie breaking, we have that after its removal $M\setminus\{(i,j)\}$ is (still) ranked higher than $M'$ by the Lexicographic by IDs order (so the tie is be broken in the same way).
\end{proof}

\section{The Offering Mechanism for\texorpdfstring{\\}{ }Matching Markets: Proofs}
\label{app:matching-offering}

We will now prove \cref{offering}, which states that the Offering Mechanism for matching markets is BIC, ex-post IR, ex-post (direct trade) strongly budget balanced, and ex-ante guarantees at least a $\nicefrac{1}{4}$-fraction of the optimal GFT (second-best). The Offering Mechanism is ex-post strongly (direct trade) budget balanced as the RO mechanism is ex-post (direct trade) strongly budget balanced. To show the remaining properties, we first develop some machinery.

\subsection{Supporting Machinery}

\begin{lemma}\label[lemma]{offering-constraint-vcg}
In the offering mechanism when run on a profile $(\Bb,\Bs)$, for every $(i,j)\in M(\Bb,\Bs)$ the following hold:
\begin{itemize}
\item
If $i\in M_{-j}(\Bb,\Bs)$, then $\bar{s}=P_j(\Bb,\Bs)$.
\item
If $j\in M_{-i}(\Bb,\Bs)$, then $\bar{b}=P_i(\Bb,\Bs)$.
\end{itemize}
\end{lemma}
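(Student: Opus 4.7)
The plan is to prove both identities by reducing each threshold to a piecewise-linear marginal-welfare equation and then exploiting the subset consistency of our lexicographic tie-breaking rule. For the analysis, let $M_{-i,-j}(\Bb,\Bs)$ denote the maximum-weight matching in the graph after removing both $i$ and $j$ (under the same tie-breaking rule), and let $W$, $W_{-i}$, $W_{-j}$, $W_{-i,-j}$ denote the GFT of $M(\Bb,\Bs)$, $M_{-i}(\Bb,\Bs)$, $M_{-j}(\Bb,\Bs)$, $M_{-i,-j}(\Bb,\Bs)$ respectively. In terms of these quantities, the VCG-payment formulas read $P_i = W_{-i} - W + b_i$ and $P_j = W - W_{-j} + s_j$.

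The key preliminary step is the welfare identity
\[
W_{-i,-j} \;=\; W \;-\; (b_i - s_j),
\]
which follows from $(i,j) \in M$ together with the subset consistency of the tie-breaking rule (Appendix~\ref{app:ties}): these jointly force $M_{-i,-j} = M \setminus \{(i,j)\}$, and the displayed identity is then immediate by summing the GFT of the remaining edges.

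For the first bullet (assuming $j \in M_{-i}$), I would argue as follows. Seller $j$'s reported cost $c$ contributes $-c$ linearly to the welfare of every matching of $(S, B\setminus\{i\})$ that contains $j$, and does not enter the welfare of matchings that exclude $j$. Hence the best welfare among matchings containing $j$ is a linear function of $c$ with slope $-1$; pinning the intercept down via the hypothesis $j \in M_{-i}$ (which gives the value $W_{-i}$ at $c = s_j$), this best-with-$j$ welfare equals $W_{-i} + s_j - c$. The best welfare among matchings of $(S, B\setminus\{i\})$ that exclude $j$ is the constant $W_{-i,-j}$. Equating the two gives the threshold $\bar{b} = W_{-i} + s_j - W_{-i,-j}$; substituting the identity above yields $\bar{b} = W_{-i} - W + b_i = P_i$, as claimed.

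The second bullet (assuming $i \in M_{-j}$) follows by a fully symmetric argument: buyer $i$'s reported value $\hat{b}$ enters the welfare of any matching of $(S\setminus\{j\}, B)$ containing $i$ linearly with slope $+1$, giving maximum such welfare $W_{-j} + (\hat{b}-b_i)$, while the welfare of matchings of $(S\setminus\{j\}, B)$ excluding $i$ is the constant $W_{-i,-j}$. Equating the two gives $\bar{s} = W_{-i,-j} - W_{-j} + b_i$, and substituting the welfare identity yields $\bar{s} = W - W_{-j} + s_j = P_j$. The main conceptual obstacle is recognizing that subset consistency of the tie-breaking rule, and not merely weight-maximality of $M$, is what delivers the welfare identity: without subset consistency the first-best on a subset of vertices could in principle re-match to a different set of edges, breaking the simple subtraction. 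A minor technical point is the behavior at the critical value of $c$ or $\hat{b}$, at which the matchings including and excluding the relevant agent have the same total weight; this is handled by the weight-independence of the tie-breaking rule, so does not affect the argument.
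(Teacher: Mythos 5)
Your proof is correct and follows essentially the same route as the paper's: the paper identifies the offer constraint with the VCG threshold in the reduced market (e.g.\ $\bar{s}=P_i(\Bb,\Bs_{-j})$ when $i\in M_{-j}(\Bb,\Bs)$) and then performs exactly the algebra you do, using $(i,j)\in M(\Bb,\Bs)$ to convert between the two payment expressions, so your piecewise-linear crossing argument is just a self-contained derivation of that threshold fact. One small remark: the welfare identity $W_{-i,-j}=W-(b_i-s_j)$ already follows from weight-maximality alone (deleting the edge $(i,j)$ from $M$, or adding it to $M_{-i,-j}$, gives feasible matchings in the respective markets), so subset consistency of the tie-breaking is not needed for that identity, only for the stronger statement that the chosen matching $M_{-i,-j}$ is literally $M\setminus\{(i,j)\}$.
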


\begin{proof}
We will prove the first statement (the second is analogous).
Since $i\in M_{-j}(\Bb,\Bs)$, the VCG price of buyer $i$ in the market without seller $j$ is the minimal bid that causes her to be in the first-best in that market, and so $\bar{s}=P_i(\Bb,\Bs_{-j})$.
Now observe that:
\begin{multline*}
P_j(\Bb,\Bs)= \sum_{(i',j')\in M}(b_{i'}-s_{j'})- \sum_{(i',j')\in M_{-j}}(b_{i'}-s_{j'})+s_j=\\
\sum_{(i',j')\in M\setminus\{(i,j)\}}(b_{i'}-s_{j'})- \sum_{(i',j')\in M_{-j}}(b_{i'}-s_{j'})+b_i
=P_i(\Bb,\Bs_{-j})=\bar{s}\tag*{\qedhere}
\end{multline*}
\end{proof}

\noindent
We are now ready to prove the first two parts of \cref{offering-good-parameters}:

\begin{claim}\label{offering-ir}
For every $(i,j)\in M(\Bb,\Bs)$, it holds that $\bar{s}\ge s_j$ and $\bar{b}\le b_i$.
\end{claim}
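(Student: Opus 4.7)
The plan is to prove each of the two inequalities by splitting into two cases based on whether the ``other endpoint'' of the matched edge $(i,j)$ survives in the first-best matching once we delete the given endpoint --- that is, I would split on whether $i\in M_{-j}(\Bb,\Bs)$ for the first inequality, and on whether $j\in M_{-i}(\Bb,\Bs)$ for the second. The extremal values $\bar{s}=\infty$ and $\bar{b}=0$ are immediate since the paper assumes $s_j,b_i>0$, so I may assume throughout that $\bar{s}<\infty$ and $\bar{b}>0$.

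To establish $\bar{s}\ge s_j$, in the case $i\in M_{-j}(\Bb,\Bs)$ I would apply \cref{offering-constraint-vcg} to rewrite $\bar{s}=P_j(\Bb,\Bs)$, and then unpack the explicit formula for $P_j$ from the preliminaries to observe that $P_j(\Bb,\Bs)\ge s_j$ is equivalent to $\sum_{(i',j')\in M(\Bb,\Bs)}(b_{i'}-s_{j'})\ge\sum_{(i',j')\in M_{-j}(\Bb,\Bs)}(b_{i'}-s_{j'})$, which is nothing but the fact that $M(\Bb,\Bs)$ is a maximum-weight matching (and $M_{-j}(\Bb,\Bs)$ is a valid matching in the full market). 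In the case $i\notin M_{-j}(\Bb,\Bs)$, I would argue that buyer $i$'s true value $b_i$ already fails to put her in the first-best of the market $(S\setminus\{j\},B)$, so the threshold $\bar{s}$ must satisfy $\bar{s}\ge b_i$ (otherwise $b_i>\bar{s}$ would force $i$ into that first-best by the very definition of $\bar{s}$); combined with the fact that $(i,j)$ contributes non-negatively to the weight of $M(\Bb,\Bs)$ (otherwise dropping it would strictly increase weight), this yields $\bar{s}\ge b_i\ge s_j$.

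The proof of $\bar{b}\le b_i$ is entirely symmetric: when $j\in M_{-i}(\Bb,\Bs)$, \cref{offering-constraint-vcg} gives $\bar{b}=P_i(\Bb,\Bs)\le b_i$ (by the same type of VCG identity, now translating into buyer IR), and when $j\notin M_{-i}(\Bb,\Bs)$, seller $j$'s true cost $s_j$ already fails to put her in the first-best of $(S,B\setminus\{i\})$, forcing $\bar{b}\le s_j\le b_i$. I do not anticipate any substantive obstacle: essentially all the content is encapsulated in the already-proved \cref{offering-constraint-vcg} together with the standard observation that edges of a maximum-weight matching contribute non-negatively. The only mild care needed is in cleanly handling the definitional extremes $\bar{s}=\infty$ and $\bar{b}=0$, which (as noted) are vacuous.
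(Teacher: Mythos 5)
Your proof is correct, but it takes a different route from the paper's. You split, under the \emph{true} profile, on whether $i\in M_{-j}(\Bb,\Bs)$ (resp.\ $j\in M_{-i}(\Bb,\Bs)$): in the first case you invoke \cref{offering-constraint-vcg} and unwind the formula for $P_j$ (resp.\ $P_i$), reducing the inequality to $\sum_{(i',j')\in M}(b_{i'}-s_{j'})\ge\sum_{(i',j')\in M_{-j}}(b_{i'}-s_{j'})$, which is just maximality of $M$; in the second case you argue directly from the definition of the threshold that $\bar{s}\ge b_i$ (resp.\ $\bar{b}\le s_j$) and combine with $b_i\ge s_j$, which holds because a matched edge of a maximum-weight matching has non-negative weight. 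The paper instead avoids any case split: to show $\bar{s}\ge s_j$ it passes to the modified profile $((\Bb_{-i},b'_i),\Bs)$ with $b'_i=\max\{\bar{s}+1,b_i\}$, notes that $\bar{s}$ is unchanged (it does not depend on $b_i$) and that $(i,j)$ stays in the first-best, so that $i\in M_{-j}$ is \emph{forced} in the modified profile; then \cref{offering-constraint-vcg} gives $\bar{s}=P_j$ there, and $P_j\ge s_j$ since $j$ trades. Your argument is somewhat more elementary (no profile surgery), at the cost of handling two cases and of leaning, in the $j\notin M_{-i}$ case, on the implicit fact that the set of reported costs keeping $j$ in the first-best of $(S,B\setminus\{i\})$ is downward closed (i.e., monotonicity of the first-best allocation in one's own bid under the weight-independent tie-breaking); this is the same kind of monotonicity the paper itself uses when it asserts $(i,j)\in M((\Bb_{-i},b'_i),\Bs)$ after raising $i$'s bid, so it is not a gap relative to the paper's own level of rigor. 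Both arguments ultimately rest on \cref{offering-constraint-vcg} plus basic properties of maximum-weight matchings and VCG thresholds.
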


\begin{proof}
We will show the former; the latter is analogous. If $\bar{s}=\infty$ the the claim immediately holds, so we assume that $\bar{s}<\infty$. Consider the profile $((\Bb_{-i},b'_i),\Bs)$ for $b'_i=\max\{\bar{s}+1,b_i\}$. Since we have only increased the bid of $i$, we still have that $(i,j)\in M((\Bb_{-i},b'_i),\Bs)$. By definition, $\bar{s}$ is the same for the profile $((\Bb_{-i},b'_i),\Bs)$ as it is for $(\Bb,\Bs)$. By definition of $\bar{s}$, we have by $b'_i>\bar{s}$ that $i\in M_{-j}((\Bb_{-i},b'_i),\Bs)$. Therefore, by \cref{offering-constraint-vcg}, $\bar{s}=P_j((\Bb_{-i},b'_i),\Bs)$. Since $(i,j)\in M((\Bb_{-i},b'_i),\Bs)$, we have that $s_j\le P_j((\Bb_{-i},b'_i),\Bs)$, and so $s_j\le \bar{s}$, as required.
\end{proof}

\begin{claim}\label{vcg-budget-deficit}
For every $(\Bb,\Bs)$ and $(i,j)\in M(\Bb,\Bs)$, it is the case that $P_j(\Bb,\Bs)\ge P_i(\Bb,\Bs)$.
\end{claim}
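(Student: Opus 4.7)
The plan is to reduce the inequality $P_j(\Bb,\Bs)\ge P_i(\Bb,\Bs)$ algebraically to a submodularity claim about maximum-weight matchings, and then prove that latter claim by an exchange argument on the union of two matchings.

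Writing $W(X):=\sum_{(i',j')\in X}(b_{i'}-s_{j'})$ for the total weight of a matching $X$, unfolding the definitions of $P_i$ and $P_j$ from \cref{app:notation} and collecting terms shows that $P_j\ge P_i$ is equivalent to $W(M)+W(M\setminus\{(i,j)\})\ge W(M_{-i})+W(M_{-j})$. I would first argue that $W(M\setminus\{(i,j)\})=W(M_{-\{i,j\}})$, where $M_{-\{i,j\}}$ denotes the maximum-weight matching on $V\setminus\{i,j\}$: the ``$\le$'' direction is immediate, and the reverse holds because otherwise $M_{-\{i,j\}}\cup\{(i,j)\}$ would be a matching on $V$ of weight strictly greater than $W(M)$, contradicting the maximality of $M$. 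It therefore suffices to prove the submodularity-like inequality $W(M)+W(M_{-\{i,j\}})\ge W(M_{-i})+W(M_{-j})$.

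To establish this, I would form the multi-union $H:=M_{-i}\cup M_{-j}$, whose maximum degree is~$2$ and which therefore decomposes into paths and cycles whose edges alternate between $M_{-i}$ and $M_{-j}$ (shared edges appearing as 2-cycles). Since $i\notin M_{-i}$ and $j\notin M_{-j}$, the vertices $i$ and $j$ each have degree at most~$1$ in $H$ and so can only appear as endpoints of paths. A useful observation is that $i$ and $j$ cannot be endpoints of the same path in $H$: any such path would have to begin with an $M_{-j}$-edge (the one touching $i$) and end with an $M_{-i}$-edge (the one touching $j$), forcing an even number of edges; but bipartiteness (with $i$ a buyer and $j$ a seller) forces the path's length to be odd, a contradiction. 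Consequently each component of $H$ contains at most one vertex from $\{i,j\}$.

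I would then partition the edges of $H$ into two matchings $M^*$ on $V$ and $M^{**}$ on $V\setminus\{i,j\}$ with $W(M^*)+W(M^{**})=W(H)=W(M_{-i})+W(M_{-j})$: within each component, I split its edges into the two alternating sub-matchings and assign the one covering the component's endpoint from $\{i,j\}$ (if any) to $M^*$ while the other goes to $M^{**}$; components meeting neither $i$ nor $j$ are split arbitrarily. Because $W(M)\ge W(M^*)$ and $W(M_{-\{i,j\}})\ge W(M^{**})$ by the maximality of $M$ and $M_{-\{i,j\}}$, the submodularity inequality follows, which completes the proof. The main obstacle is organising the exchange argument cleanly, and in particular verifying the parity observation above, since it is precisely what guarantees that the assignment rule produces no conflict between $M^*$ and $M^{**}$.
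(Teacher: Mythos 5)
Your proof is correct, and it takes a genuinely different route from the paper's. The paper disposes of the claim in two lines by appealing to market-clearing prices: by the Second Welfare Theorem the first-best matching $M(\Bb,\Bs)$ is supported by Walrasian prices, and Theorem~8 of Gul and Stacchetti (1999) sandwiches the Walrasian price $p_j$ between the two VCG payments, giving $P_j(\Bb,\Bs)\ge p_j\ge P_i(\Bb,\Bs)$ directly. You instead reduce the inequality algebraically to $W(M)+W(M_{-\{i,j\}})\ge W(M_{-i})+W(M_{-j})$ (using $(i,j)\in M$ and the maximality of $M$ to identify $W(M\setminus\{(i,j)\})$ with the value of the maximum-weight matching on $V\setminus\{i,j\}$), and prove this by decomposing $M_{-i}\cup M_{-j}$ into alternating paths and cycles. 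I checked the crucial points: since $M_{-i}$ misses $i$ and $M_{-j}$ misses $j$, any edge of the union at $i$ comes only from $M_{-j}$ and any edge at $j$ only from $M_{-i}$ (so neither can lie on a doubled edge and each has degree at most one), and your parity argument --- alternation forces a path with endpoints $i$ and $j$ to have an even number of edges while bipartiteness forces an odd number --- correctly rules out $i$ and $j$ sharing a component, which is exactly what makes the reassembly into a matching on $V$ and a matching on $V\setminus\{i,j\}$ conflict-free and weight-preserving. What each approach buys: the paper's proof is shorter but imports two nontrivial external results about the assignment game, whereas yours is self-contained and elementary, and in fact proves the slightly stronger combinatorial statement that the maximum-matching value is submodular under deleting one buyer and one seller (the hypothesis $(i,j)\in M(\Bb,\Bs)$ is needed only in the algebraic reduction, not in the exchange argument itself).
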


\begin{proof}
by the Second Welfare Theorem, there exist prices $p=(p_{j'})_{j'\in S}$ (where $p_{j'}$ denotes a price for the good of seller $j$) such that $(M(\Bb,\Bs);p)$ is a Walrasian equilibrium. Therefore, $p_j$ is a price received by $s_j$ and paid by $b_i$ in some Walrasian equilibrium. By Theorem 8 of \citet{GulS99}, we therefore have that $P_j(\Bb,\Bs)\ge p_j \ge P_i(\Bb,\Bs)$, completing the proof.
\end{proof}

\noindent
We are now ready to prove the third and final part of \cref{offering-good-parameters}:

\begin{claim}\label{offering-constraint-order}
For every $(i,j)\in M(\Bb,\Bs)$, it holds that $\bar{s}\ge\bar{b}$.
\end{claim}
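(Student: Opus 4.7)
The plan is to reduce to a modified profile on which both VCG identities from \cref{offering-constraint-vcg} apply simultaneously, and then invoke \cref{vcg-budget-deficit} to obtain the required inequality between them.

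First I will dispose of the degenerate cases. If $\bar{s}=\infty$ the inequality is immediate, and if $\bar{b}=0$ it follows from $\bar{s}\ge 0$ (any threshold bid is nonnegative). So I may assume $\bar{s}<\infty$ and $\bar{b}>0$.

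I will then construct a modified profile $(\Bb',\Bs')$ that agrees with $(\Bb,\Bs)$ everywhere except at $i$ and $j$, by setting $b'_i=\max\{b_i,\bar{s}+1\}$ and $s'_j=\min\{s_j,\bar{b}/2\}$, so that $b'_i>\bar{s}$ and $0<s'_j<\bar{b}$. The crucial observation is that the thresholds $\bar{s}_{(i,j)}$ and $\bar{b}_{(i,j)}$ are functions of the valuations of agents other than $i$ and $j$: $\bar{s}$ is defined in the market $(S\setminus\{j\},B)$, which never sees $s_j$, and is by definition a threshold on $b_i$ that is independent of the realized $b_i$; symmetrically for $\bar{b}$. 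Hence the two thresholds at $(\Bb',\Bs')$ coincide with those at $(\Bb,\Bs)$.

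Next I will verify three structural properties of the first-best under $(\Bb',\Bs')$: \emph{(a)}~$(i,j)\in M(\Bb',\Bs')$; \emph{(b)}~$i\in M_{-j}(\Bb',\Bs')$; and \emph{(c)}~$j\in M_{-i}(\Bb',\Bs')$. Properties \emph{(b)} and \emph{(c)} are immediate from the definitions of $\bar{s}$ and $\bar{b}$ together with $b'_i>\bar{s}$ and $s'_j<\bar{b}$. For \emph{(a)}, I will use the fact that raising $b_i$ to $b'_i\ge b_i$ increases the weight of every matching containing $i$ by the same amount and leaves other matchings unchanged, and symmetrically for lowering $s_j$ to $s'_j\le s_j$ with respect to matchings containing $j$. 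Combined with the weight-independence of the lexicographic-by-IDs tie-breaking rule (see \cref{def:lex-order}), this shows that $M(\Bb,\Bs)$ remains the chosen max-weight matching under $(\Bb',\Bs')$, so $(i,j)$ remains matched.

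Given \emph{(a)}--\emph{(c)}, \cref{offering-constraint-vcg} applied to $(\Bb',\Bs')$ yields $\bar{s}=P_j(\Bb',\Bs')$ and $\bar{b}=P_i(\Bb',\Bs')$, while \cref{vcg-budget-deficit} applied to the same profile yields $P_j(\Bb',\Bs')\ge P_i(\Bb',\Bs')$, so $\bar{s}\ge\bar{b}$ as required. The main obstacle will be establishing \emph{(a)}: I expect this to require a careful monotonicity argument showing that the modifications only strengthen the first-best's preference for matchings containing $i$ or $j$, together with weight-independence of the tie-breaking rule so that no tied alternative is accidentally selected as perturbations shift the weights.
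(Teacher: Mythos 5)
Your proposal is correct, but it takes a different route from the paper's own proof. The paper argues by contradiction entirely on the original profile: assuming $\bar{s}<\bar{b}$, \cref{offering-ir} (already proved at that point) gives $b_i\ge\bar{b}>\bar{s}$ and $s_j\le\bar{s}<\bar{b}$, which immediately yields $i\in M_{-j}(\Bb,\Bs)$ and $j\in M_{-i}(\Bb,\Bs)$; then \cref{offering-constraint-vcg} and \cref{vcg-budget-deficit} give $\bar{s}=P_j(\Bb,\Bs)\ge P_i(\Bb,\Bs)=\bar{b}$, a contradiction. You instead give a direct proof by perturbing the profile to $(\Bb',\Bs')$ with $b'_i>\bar{s}$ and $s'_j<\bar{b}$, using the independence of the thresholds from $b_i$ and $s_j$, and applying the same two lemmas at the perturbed profile. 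The engine is identical (\cref{offering-constraint-vcg} plus \cref{vcg-budget-deficit}); what differs is how the two membership conditions are obtained. Your way costs extra work, namely your step \emph{(a)} that $(i,j)$ survives in $M(\Bb',\Bs')$ --- which does go through, either by your weight-improvement argument together with weight-independent tie-breaking, or by two applications of \cref{obs:first-best} --- whereas the paper avoids this entirely because the contradiction hypothesis combined with \cref{offering-ir} hands both memberships for free on the unperturbed profile. (Amusingly, your perturbation $b'_i=\max\{b_i,\bar{s}+1\}$ is exactly the trick the paper uses to prove \cref{offering-ir} itself, so in effect you re-derive that ingredient inline rather than citing it.) What your route buys is a constructive, contradiction-free argument that does not invoke \cref{offering-ir}; what the paper's route buys is brevity. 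One small caveat: your claim that \emph{(c)} is ``immediate from the definitions'' tacitly uses monotonicity of seller $j$'s winning region in her reported cost (the definition of $\bar{b}$ only names the maximal winning cost, not that all lower costs win); this is the same implicit monotonicity step the paper itself makes, and it follows from the standard argument that lowering a cost raises all matchings containing $j$ uniformly under weight-independent tie-breaking, so it is not a genuine gap, but it deserves a sentence in a full write-up.
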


\begin{proof}
Assume for contradiction that there exists a profile $(\Bb,\Bs)$ and a pair $(i,j)\in M(\Bb,\Bs)$ such that $\bar{s}<\bar{b}$. By \cref{offering-ir} we have that $b_i\ge\bar{b}>\bar{s}$ and $s_j\le\bar{s}<\bar{b}$. Therefore, we have by definition of $\bar{s}$ and $\bar{b}$ that both $i\in M_{-j}(\Bb,\Bs)$ and $j\in M_{-i}(\Bb,\Bs)$. Therefore, by \cref{offering-constraint-vcg}, $\bar{s}=P_j(\Bb,\Bs)$ and $\bar{b}=P_i(\Bb,\Bs)$, and so $P_j(\Bb,\Bs)<P_i(\Bb,\Bs)$ --- a contradiction to \cref{vcg-budget-deficit}.
\end{proof}

\begin{proof}[Proof of \cref{offering-good-parameters}]
Follows from \cref{offering-ir,offering-constraint-order}.
\end{proof}

\begin{lemma}\label[lemma]{lem:VCG-price}
Fix valuation profile $(\Bb,\Bs)$ and a pair $(i,j)\in M$. If $j\not\in M_{-i}$, then $P_i(\Bb,\Bs)=s_j$. Similarly if $i\not\in M_{-j}$, then $P_j(\Bb,\Bs)=b_i$.
\end{lemma}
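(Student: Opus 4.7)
The plan is a straightforward two-sided sandwich argument based on the definition of the VCG payment and the fact that $M$ (resp.\ $M_{-i}$) is a maximum-weight matching in $G$ (resp.\ in $G$ with buyer $i$ removed). I will prove the first claim that $P_i(\Bb,\Bs)=s_j$ when $j\notin M_{-i}$; the second claim about $P_j(\Bb,\Bs)=b_i$ when $i\notin M_{-j}$ follows by an entirely symmetric argument with the roles of buyers and sellers swapped.

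To establish $P_i(\Bb,\Bs)\le s_j$, I would exhibit $M_{-i}\cup\{(i,j)\}$ as a candidate matching in $G$. This is a legal matching in $G$ precisely because $j\notin M_{-i}$ (so adding the edge $(i,j)$ does not conflict with any existing match) and $i\notin M_{-i}$ by definition. Since $M$ is a maximum-weight matching in $G$, its total weight is at least that of this candidate, i.e.,
\[
\sum_{(i',j')\in M}(b_{i'}-s_{j'})\;\ge\;\sum_{(i',j')\in M_{-i}}(b_{i'}-s_{j'})+(b_i-s_j).
\]
Rearranging and adding $b_i$ to both sides of the resulting inequality yields $P_i(\Bb,\Bs)\le s_j$.

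To establish the reverse inequality $P_i(\Bb,\Bs)\ge s_j$, I would go in the opposite direction: consider $M\setminus\{(i,j)\}$, which is a legal matching of the market with buyer $i$ removed (it uses neither $i$ nor $j$ and is otherwise a subset of $M$). Since $M_{-i}$ is a maximum-weight matching in that market, its weight is at least
\[
\sum_{(i',j')\in M_{-i}}(b_{i'}-s_{j'})\;\ge\;\sum_{(i',j')\in M}(b_{i'}-s_{j'})-(b_i-s_j).
\]
Rearranging and adding $b_i$ yields $P_i(\Bb,\Bs)\ge s_j$. Combining the two inequalities gives the claim.

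I do not expect a real obstacle here: the only subtlety is choosing the two ``swap'' matchings correctly and checking that each is feasible in the appropriate graph, which is exactly where the hypothesis $j\notin M_{-i}$ is used. Note that tie-breaking plays no role in the argument, since only the total weights of the maximum-weight matchings $M$ and $M_{-i}$ enter the VCG formula, and those weights are unambiguous regardless of how ties are resolved.
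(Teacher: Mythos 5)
Your proposal is correct, but it takes a different route from the paper. The paper's proof is a one-liner that leans on the machinery it has already built for its lexicographic tie-breaking rule: since $j\notin M_{-i}$, subset consistency of the tie-breaking rule is invoked to conclude the structural identity $M_{-i}=M\setminus\{(i,j)\}$ as matchings, and plugging this into the definition of $P_i$ immediately gives $P_i=s_j$ (and symmetrically for $P_j$). You instead prove only the equality of weights, via the two standard swap inequalities: adding the edge $(i,j)$ to $M_{-i}$ (feasible exactly because $j\notin M_{-i}$) gives $P_i\le s_j$ from maximality of $M$, and deleting $(i,j)$ from $M$ gives $P_i\ge s_j$ from maximality of $M_{-i}$ --- note this second direction does not use the hypothesis at all, and is essentially the inequality $P_i\ge s_j$ that the paper proves separately in \cref{cor:vcg}. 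Both arguments are sound; yours is self-contained and, as you observe, independent of how ties are broken, since only the (unambiguous) maximum weights enter the VCG formula, whereas the paper's argument is shorter given that subset consistency is needed elsewhere anyway (e.g., in \cref{lem:even-odd-path}) and yields the stronger set-level fact $M_{-i}=M\setminus\{(i,j)\}$ rather than just the payment identity.
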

\begin{proof}
We only give the proof for $P_i = P_i(\Bb,\Bs)$ and similar argument holds for seller's VCG payment $P_j = P_j(\Bb,\Bs)$.
If $j\not\in M_{-i}$, it holds that $M_{-i}=M\setminus\{(i,j)\}$ by subset consistency of the tie breaking rule. Now, by definition of $P_i$, we have $P_i=s_j$.
\end{proof}

\subsection{Incentive Guarantees}

\begin{claim}
The Offering Mechanism is ex-post IR.
\end{claim}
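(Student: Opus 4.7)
The plan is to reduce ex-post IR of the Offering Mechanism directly to ex-post IR of the RO mechanism (\crefpart{ro}{properties}), applied independently on each edge of $M(\Bb,\Bs)$. Splitting into cases on whether an agent is in the first-best matching, the only non-trivial case is the agents participating in an RO sub-invocation, and that case reduces cleanly to a parameter-admissibility check that is exactly the content of \cref{offering-good-parameters}.

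First, I would observe that any agent outside $M(\Bb,\Bs)$ is neither allocated an item nor charged (equivalently, paid) anything by the mechanism, because the mechanism only touches the endpoints of edges of $M(\Bb,\Bs)$. Such an agent therefore obtains utility zero, which yields ex-post IR for free.

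Second, for each pair $(i,j)\in M(\Bb,\Bs)$, I would invoke the RO sub-mechanism with SO parameters $\bar{s}$ and $D^B_i|_{\ge\bar{b}}$ and BO parameters $\bar{b}$ and $D^S_j|_{\le\bar{s}}$. To apply \crefpart{ro}{properties}, I need to check that the parameters satisfy its hypotheses, namely $\bar{s}\ge\sup\supp(D^S_j|_{\le\bar{s}})$, $\bar{b}\le\inf\supp(D^B_i|_{\ge\bar{b}})$, and $\bar{s}\ge\bar{b}\ge 0$. The first two conditions hold trivially from the truncations, while $\bar{s}\ge\bar{b}$ and $\bar{b}\ge 0$ are delivered by \cref{offering-good-parameters} together with the convention that $\bar{b}=0$ when no acceptable bid by $j$ exists. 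Given these conditions, \crefpart{ro}{properties} asserts that the RO mechanism is ex-post IR for both $i$ and $j$, provided their realized types lie in the respective supports; this is precisely the content of the remaining parts of \cref{offering-good-parameters}, which guarantee $s_j\le\bar{s}$ and $b_i\ge\bar{b}$, so that $s_j\in\supp(D^S_j|_{\le\bar{s}})$ and $b_i\in\supp(D^B_i|_{\ge\bar{b}})$. Combining these facts, each endpoint of an edge of $M(\Bb,\Bs)$ obtains non-negative utility from its RO sub-invocation.

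Since different pairs of $M(\Bb,\Bs)$ are vertex-disjoint, each agent in $M(\Bb,\Bs)$ is touched by at most one RO invocation, so the per-edge IR guarantees aggregate without interaction into ex-post IR for the whole Offering Mechanism. The only real obstacle in this argument is checking the parameter-admissibility of RO on each edge, and this is exactly what \cref{offering-good-parameters} was designed to establish; once that \lcnamecref{offering-good-parameters} is in hand, the proof of ex-post IR is essentially a bookkeeping reduction to \crefpart{ro}{properties}.
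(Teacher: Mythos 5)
Your proposal is correct and is essentially the paper's own argument: the paper proves this claim by citing exactly the two ingredients you identify, namely the parameter facts $\bar{s}\ge s_j$ and $\bar{b}\le b_i$ from \cref{offering-good-parameters} and the ex-post IR guarantee of the RO mechanism in \crefpart{ro}{properties}. You merely spell out the bookkeeping (zero utility for unmatched agents, vertex-disjointness of the edges of $M(\Bb,\Bs)$) that the paper leaves implicit.
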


\begin{proof}
That the Offering Mechanism is ex-post IR follows from \cref{offering-ir,ro}.
\end{proof}

\begin{claim}\label{offering-bic}
The Offering Mechanism is BIC.
\end{claim}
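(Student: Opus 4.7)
The plan is to adapt the BIC argument from the proof of \cref{thm:double-auction} to the matching setting. I prove BIC for a buyer $i$; the argument for sellers is symmetric. The key structural observation to exploit is that, by definition, for each potential edge $(i,j)$, the offer constraints $\bar{s}_{(i,j)}$ and $\bar{b}_{(i,j)}$ depend only on $(\Bb_{-i},\Bs_{-j})$, hence not on $b_i$ or $s_j$. Consequently, once we fix $(\Bb_{-i},\Bs)$, the RO instance (with its parameters) that the Offering Mechanism would run on edge $(i,j)$ is fully determined; the only effect of $i$'s report is to determine whether, and with which seller $j$, the edge $(i,j)$ belongs to $M(b_i,\Bb_{-i},\Bs)$. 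I therefore partition $i$'s possible misreports $b'_i$ by the induced partner of $i$ in $M(b'_i,\Bb_{-i},\Bs)$.

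For ``same-partner'' misreports, where $i$ is matched with the same seller $j$ under both the true $b_i$ and the misreport $b'_i$, the RO instance on $(i,j)$ is identical under both reports, so \crefpart{ro}{properties} gives that the misreport cannot increase $i$'s expected utility. Formally, this step requires noting that BIC of RO holds pointwise for every realized $s_j$ in the support, so it survives conditioning on the (not necessarily product-form) event $(i,j)\in M$. ``Unmatched'' misreports (those under which $i\notin M(b'_i,\Bb_{-i},\Bs)$) are trivially non-beneficial, since they yield zero utility, which by ex-post IR is at most $i$'s truthful expected utility.

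The main obstacle is the ``different-partner'' case, where $b'_i$ causes $i$ to match with a seller $j'$ with $(i,j')\notin M(b_i,\Bb_{-i},\Bs)$ (either switching partners or entering $M$ after being truthfully unmatched). The goal is to show that whenever such a misreport causes trade, the price paid is at least $b_i$, making $i$'s realized utility non-positive. By \crefpart{ro}{price-within-constraints}, the price is at least $\bar{b}_{(i,j')}$, so it suffices to prove $\bar{b}_{(i,j')}\ge b_i$. I plan to establish this via a case analysis using \cref{offering-constraint-vcg} and \cref{lem:VCG-price}: when $j'\in M_{-i}(\Bb_{-i},\Bs)$, we have $\bar{b}_{(i,j')}=P_i(b'_i,\Bb_{-i},\Bs)$, which is the VCG critical value for $i$ to be in the misreported matching, and an exchange argument on the max-weight matching will force $\bar{b}_{(i,j')}\ge b_i$ (otherwise $(i,j')$ would already have been in $M(b_i,\Bb_{-i},\Bs)$, contradicting the assumption that $i$ was not truthfully matched with $j'$); the case $j'\notin M_{-i}$ is handled analogously via the definition of $\bar{b}_{(i,j')}$ as the critical cost for $j'$ to remain in the first-best without $i$. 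The hard part will be this exchange/augmenting-path argument on max-weight matchings, which has to replace the simpler ``agent in the efficient trading set cannot change it while remaining in it'' observation used in the double-auction proof, in order to handle the richer structure of partner switches present in matching markets.
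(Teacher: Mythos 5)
Your decomposition of the buyer's misreports (same partner, unmatched, new partner) matches the skeleton of the paper's argument, and your new-partner case is essentially the paper's: one shows the price paid is at least $b_i$ via the threshold/VCG characterizations. But the same-partner case --- which is where the paper's proof does almost all of its work --- rests on a false claim. You assert that ``BIC of RO holds pointwise for every realized $s_j$,'' but this is true only for the side that \emph{receives} the offer. For the buyer in the BO branch, the offered price is the one maximizing her expected utility against $D^S_j|_{\le\bar{s}}$; for a particular realized $s_j$, a misreport (e.g., underbidding so as to offer a lower price that this particular seller still accepts) can strictly increase her utility, so \crefpart{ro}{properties} gives truthfulness only in expectation over $s_j\sim D^S_j|_{\le\bar{s}}$. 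The conditioning mismatch you wave away is therefore genuine: with everything but $s_j$ fixed, the event $(i,j)\in M(\Bb,\Bs)$ is $\{s_j\le P_j(\Bb,\Bs)\}$ (a threshold independent of $s_j$, by \cref{cor:vcg}), so the buyer's true posterior is $D^S_j|_{\le P_j(\Bb,\Bs)}$, and $P_j(\Bb,\Bs)$ can be strictly smaller than $\bar{s}$ --- exactly when $i\notin M_{-j}(\Bb,\Bs)$, in which case $P_j(\Bb,\Bs)=b_i$ by \cref{lem:VCG-price} (if $i\in M_{-j}$ then $\bar{s}=P_j$ by \cref{offering-constraint-vcg} and there is no mismatch). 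The paper's proof (written for the seller; the buyer side is symmetric) is devoted precisely to this point: it shows that whenever the threshold differs from the constraint it equals the counterpart's own value/cost, so on the discrepancy region (here $b_i<s_j\le\bar{s}$) the truthful agent gets zero from this edge while any misreport yields non-positive utility (payments flow directly within the pair and the mechanism is ex-post IR for the counterpart), and it also accounts for the fact that a misreport changes the set of profiles on which the agent is matched at all (note that your ``same-partner'' event depends on both $b_i$ and $b'_i$). None of these steps appear in your proposal, so the central case is unproven.

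Two smaller points. In the new-partner case with $j'\notin M_{-i}$, arguing via $\bar{b}_{(i,j')}$ does not work, since $\bar{b}_{(i,j')}$ can then lie below $b_i$; the paper instead uses \cref{lem:VCG-price} to get $P_i=s_{j'}\ge b_i$ and ex-post IR of the seller to conclude that the price is at least $s_{j'}$. Also, no exchange/augmenting-path machinery is needed anywhere: \cref{cor:vcg} already gives that a buyer matched under her true report cannot switch partners by misreporting, and that the threshold bid for being matched is independent of her own bid, which is all the new-partner case requires.
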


\begin{proof}
We will prove that the Offering Mechanism is BIC for the seller. A similar argument holds for the buyer.
For each seller $j$ with cost $s_j$, suppose she misreports her cost to be $s_j'\not=s_j$. We will show that taking expectation over other agents' valuation profile $\Bb,s_{-j}$, the expected utility of $s_j$ when reporting truthfully is at least the expected utility of seller $j$ with true cost $s_j$ when reporting $s_j'$.

We first consider $\Bb,s_{-j}$ such that $j$ is not in the first-best $M(\Bb,s_{-j},s_j)$. It is sufficient to consider manipulations $s_j'$ that cause $j$ to become part of the first-best $M(\Bb,s_{-j},s_j')$. Let $s_j'$ be such a manipulation, and note that in this case, $s_j\ge P_j(\Bb,s_{-j},s_j')\ge s_j'$ (since $P_j(\Bb,s_{-j},s_j')$ is the threshold bid of seller $j$ to become part of the first-best). Let $i$ be the agent such that $(i,j)\in M(\Bb,s_{-j},s_j')$. We will complete the proof of this case by considering two cases. First, if $i\in M_{-j}(\Bb,s_{-j},s_j')$, then $\bar{s}=P_j(\Bb,s_{-j},s_j')$ by \cref{offering-constraint-vcg}. Therefore, by \crefpart{ro}{price-within-constraints}, seller $j$ with reported cost $s_j'$ can only trade with $i$ in the RO mechanism at a price $p\leq\bar{s}=P_j(\Bb,s_{-j},s_j')\le s_j$, which derives non-positive utility for seller $j$. Second, if $i\notin M_j(\Bb,s_{-j},s_j')$, then by \cref{lem:VCG-price}, $P_j(\Bb,s_{-j},s_j')=b_i$. Since the mechanism is ex-post IR for buyer $i$, seller $j$ can only trade with $i$ (who has value $b_i$) in the RO mechanism at a price at most $b_i=P_j(\Bb,s_{-j},s_j)\le s_j$, which again derives non-positive utility for seller $j$.

Now for every buyer~$i$, consider those $\Bb,s_{-j}$ such that $(i,j)\in M(\Bb,s_{-j},s_j)$.
In this case, we note that if seller~$j$ misreports to $s'_j$, then either the first-best is unchanged (and so $j$ participates in the same RO mechanism with the same buyer $i$) or seller~$j$ is no longer in the first-best, receiving utility $0$. Either way, she cannot change the RO mechanism that is run, or the buyer that she is facing.
When the BO mechanism is processed, if seller $j$ is in the first-best, she will be asked to accept a price. This is ex-post truthful.

When the SO mechanism is processed, then it is enough to show that for every $i$ and fixed $b_{-i},s_{-j},s_j$, in expectation over all $b_i$ such that $(i,j)\in M(b_{-i},b_i,s_{-j},s_j)$, the utility of $j$ when she reports $s_j'\not=s_j$ (denoted as $u_j(s_j')$) is at most the utility of $j$ when she reports $s_j$ (denoted as $u_j(s_j)$) truthfully. Notice that either $i$ can never connect to $j$ in the first-best $M(b_{-i},b_i,s_{-j},s_j)$, or $P_i(\Bb,s_{-j},s_j)$ (which does not depend on $b_i$) is the threshold bid of buyer $i$ to connect to $j$ in the first-best. Thus it is enough to prove the claim that when there exists a bid for $i$ such that $i$ connects to $j$ in the first-best, in expectation over $b_i\geq P_i(\Bb,s_{-j},s_j)$, the utility of $j$ when she reports $s_j'$ is at most the utility of $j$ when she reports~$s_j$.

Note that fixed $b_{-i},s_{-j},s_j$, when $b_i\geq P_i(\Bb,s_{-j},s_j)$, the outcome of the Offering Mechanism for~$j$ is as if the SO mechanism with parameters $\bar{s}$ and $D^B_i|_{\ge\bar{b}}$ had been run between $j$ and $i$. Notice that the parameters $\bar{s}$ and $\bar{b}$ do not depend on $s_j$ or on $b_i$.
By \crefpart{ro}{properties}, the SO mechanism with parameters~$\bar{s}$ and $D^B_i|_{\ge\bar{b}}$ is BIC when the buyer's valuation is drawn from $D^B_i|_{\ge\bar{b}}$. In other words, 
\begin{equation}\label{equ:utility}
\E_{b_i\geq \bar{b}}[u_j(s_j')]\leq \E_{b_i\geq \bar{b}}[u_j(s_j)].
\end{equation}
Note that $P_i(\Bb,s_{-j},s_j)\ge\bar{b}$ by \cref{offering-ir}. If $P_i(\Bb,s_{-j},s_j)=\bar{b}$, then the above claim trivially holds.
Otherwise we have to reason about the case $P_i(\Bb,b_{-j},s_j)>b_i\ge\bar{b}$, which is included in the expectation in \cref{equ:utility} but not in the expectation in the above claim.

In this case, since $P_i(\Bb,s_{-j},s_j)>\bar{b}$,
then \cref{offering-constraint-vcg,lem:VCG-price}, $P_i(\Bb,s_{-j},s_j)=s_j$. We notice that when $s_j>b_i\ge\bar{b}$, seller $j$ won't trade with buyer $i$ in the Offering Mechanism when $j$ reports $s_j$, as $(i,j)$ can't be in the first-best $M(\Bb,\Bs)$. The utility of $j$ (contributed by buyer $i$) is thus $0$ in this case. When $s_j>b_i\ge\bar{b}$ and $j$ reports $s_j'$, since the payment goes directly from buyer $i$ to seller $j$ when they trade, they do so at price at most $b_j<s_j$ (since we assume that buyer $i$ reports truthfully, and since the mechanism is ex-post IR for her), so seller $j$'s utility (contributed by buyer $i$) in this case is negative if they trade, and 0 otherwise, so it is nonpositive. Combined this with \cref{equ:utility}, we obtain
\[
\E_{b_i\geq s_j}[u_j(s_j')]\leq \E_{b_i\geq s_j}[u_j(s_j)]
\]
which finishes the proof as $s_j=P_i(\Bb,s_{-j},s_j)$.
\end{proof}

\subsection{Efficiency Guarantee}\label{matching-ex-ante}

Given a bipartite graph $(B,S,E)$ and two matchings $M$ and $M'$ over the graph,
a path is called an \emph{alternating path} of $M\cup M'$  if the edges on the path alternate between edges of $M$ and $M'$.
If $a_K=a_1$, we call it an \emph{alternating cycle}. A path is \emph{maximal} if it is not a sub-path of any other path.

It is well-known that the union of two matchings in a bipartite graph can be divided into disjoint maximal alternating paths and cycles.

\begin{observation}\label{obs:division-alternating}
	Given any set of nodes $V$ and two undirected graphs $G_1= (V,E_1)$ and $G_2 = (V,E_2)$ such that in both graphs the degree of any node is at most $1$ (i.e., each is a matching), it holds that in $G_{1,2}=(V, E_1\cup E_2)$ every node has degree at most $2$ and thus $G_{1,2}$ is a disjoint union of maximal alternating paths and maximal alternating cycles.
\end{observation}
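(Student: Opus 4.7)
The plan is to prove the observation by a short combinatorial argument that splits naturally into two steps: bounding the degree in $G_{1,2}$, and then showing that connected components of such a graph are paths or cycles whose edges alternate between $E_1$ and $E_2$.

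First, I would fix an arbitrary vertex $v\in V$ and let $d_1(v)$ and $d_2(v)$ denote its degrees in $G_1$ and $G_2$, respectively. Since $G_1$ and $G_2$ are matchings, $d_1(v),d_2(v)\le 1$. Edges of $G_{1,2}$ incident to $v$ come from $E_1\cup E_2$, so the degree $d(v)$ of $v$ in $G_{1,2}$ is at most $d_1(v)+d_2(v)\le 2$. (Formally one might note that if an edge $e$ lies in both $E_1$ and $E_2$ it is still a single edge in $E_1\cup E_2$, so the actual bound is even tighter, but $d(v)\le 2$ is all we need.) This establishes the degree bound.

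Next, I would invoke the elementary fact from graph theory that any undirected graph in which every vertex has degree at most $2$ decomposes into a disjoint union of connected components, each of which is either an isolated vertex, a simple path, or a simple cycle. Applying this to $G_{1,2}$, it suffices to show that inside each such component the edges alternate between $E_1$ and $E_2$. The key observation is that if some vertex $u$ had two incident edges in the same $E_k$, then $u$ would have degree at least $2$ in $G_k$, contradicting the assumption that $G_k$ is a matching. Hence, as we traverse any path or cycle of $G_{1,2}$ and examine the (at most two) edges incident to each internal vertex, consecutive edges must belong to different sets $E_1$ and $E_2$, which is exactly the definition of an alternating path or cycle.

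Finally, by construction each component is maximal (it cannot be extended within $G_{1,2}$ because any extension would require an additional incident edge at an endpoint, raising its degree), which yields the claimed decomposition into maximal alternating paths and maximal alternating cycles. There is no real obstacle here; the only subtlety worth noting is the degenerate case of a component consisting of a single edge lying in $E_1\cap E_2$, or of an isolated vertex, both of which fit the statement under the convention that a trivial path with zero or one edges still counts as a (degenerate) alternating path.
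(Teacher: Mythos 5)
Your proof is correct and is the standard argument: the degree bound $d(v)\le d_1(v)+d_2(v)\le 2$, the elementary decomposition of a max-degree-$2$ graph into paths and cycles, and alternation because two incident edges from the same $E_k$ would violate the matching property of $G_k$. The paper itself offers no proof of this observation -- it simply invokes it as a well-known fact -- so there is nothing to diverge from; your write-up fills in exactly the expected argument. One minor point of convention: where you treat an edge lying in $E_1\cap E_2$ as a degenerate one-edge path, the paper later (in \cref{cor:cycles}) regards such a shared edge as a two-edge alternating cycle traversing the edge in both matchings; either convention is fine for the observation, but aligning with the paper's avoids friction when the cycle case is analyzed downstream.
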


Given a bipartite graph $(V_1,V_2,E)$, a set $U\subseteq V_1\cup V_2$ of nodes is \emph{matchable} if it is possible to find a perfect matching of all of the nodes in $U$ using edges in $E$. Note that if $U$ is matchable then $|U\cap V_1|= |U\cap V_2|$. A \emph{node weight function} is a function $W$ that assigns a weight $W(i)$ to any node $i\in V_1 \cup V_2$.
A \emph{node-based} weighted matching problem is a matching problem in which for some node weight function $W$,
the weight of every edge $(i,j)\in E$ is the sum of the weights of the two nodes incident on the edge, that is, $W(i,j)=W(i)+W(j)$.
The weight of a matchable set of nodes $U$ is $W(U) = \sum_{u\in U} W(u)$.
For a weighted matching problem, a \emph{weight-maximizing set} is a matchable set of nodes that has maximum weight, over all matchable sets.
Clearly, for any node-based  weighted matching problem, the weight of any matching over the same matchable set of nodes $U$ is the same.
Moreover, if $U$ is a weight maximizing set, then any perfect matching of it does not include any edge of negative weight.
For our mechanisms to work, we will need to carefully define the tie-breaking rule that will be used to choose the weight-maximizing set, as well as the perfect matching of its elements.

\begin{observation}\label[observation]{obs:same-matching}
	Fix a bipartite graph.
	Let $W_V$ and $W'_V$ be two node-based weight functions for the graph, and let $M$ and $M'$ be the two maximum weight matchings picked by the tie-breaking rule for these two weight functions, respectively.
	If the sets of matched nodes of $M$ and $M'$ are the same, then $M$ and $M'$ must be the same.
\end{observation}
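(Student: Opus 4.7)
The plan is to reduce \cref{obs:same-matching} directly to the weight-independence of the tie-breaking rule, by combining $W_V$ and $W'_V$ into a single auxiliary node-based weight function.

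Let $U$ denote the common set of matched nodes of $M$ and $M'$, so that both $M$ and $M'$ are perfect matchings of~$U$. The first step is the observation that in any node-based weighted matching problem, the total weight of \emph{any} perfect matching of a fixed matchable set $U$ equals $W(U)=\sum_{u\in U}W(u)$, and in particular depends only on $U$ and $W$, not on the specific pairing. Applying this to $W_V$ and to $W'_V$ separately shows that $M$ and $M'$ have the same total weight under $W_V$ and the same total weight under $W'_V$. Since $M$ is a maximum-weight matching under $W_V$ and $M'$ shares its $W_V$-weight, $M'$ too is maximum-weight under $W_V$; symmetrically, $M$ is maximum-weight under $W'_V$. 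Hence both $M$ and $M'$ lie in $\mathcal{M}_{W_V}\cap \mathcal{M}_{W'_V}$.

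The second step is to consider the node-based weight function $W^{\star}=W_V+W'_V$. For any matching $N$ with matched node set $U_N$, its $W^\star$-weight equals $W_V(U_N)+W'_V(U_N)$; hence a matching attains the maximum $W^\star$-weight exactly when it attains the maximum $W_V$-weight and the maximum $W'_V$-weight simultaneously. Since by the first step $\mathcal{M}_{W_V}\cap\mathcal{M}_{W'_V}$ is nonempty, the two individual maxima can indeed be achieved at once, and therefore $\mathcal{M}_{W^{\star}}=\mathcal{M}_{W_V}\cap\mathcal{M}_{W'_V}$. In particular, both $M$ and $M'$ lie in $\mathcal{M}_{W^{\star}}$. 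Now I apply weight-independence twice: with $(\mathcal{M},\mathcal{M}')=(\mathcal{M}_{W^{\star}},\mathcal{M}_{W_V})$, since $\mathcal{M}_{W^{\star}}\subseteq\mathcal{M}_{W_V}$ and the matching chosen from $\mathcal{M}_{W_V}$, namely $M$, lies in $\mathcal{M}_{W^{\star}}$, the matching chosen from $\mathcal{M}_{W^{\star}}$ must also be $M$; with $(\mathcal{M},\mathcal{M}')=(\mathcal{M}_{W^{\star}},\mathcal{M}_{W'_V})$, the same argument shows that the matching chosen from $\mathcal{M}_{W^{\star}}$ must be $M'$. Hence $M=M'$.

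The only real subtlety — and what I would flag as the main obstacle, though a mild one — is the bookkeeping needed to verify that $W^{\star}$ is itself a node-based weight function (so that weight-independence applies to it) and that $\mathcal{M}_{W^{\star}}$ equals the intersection rather than merely being contained in one side. Both are immediate: $W^\star$ assigns the node weight $W_V(u)+W'_V(u)$ to each node $u$, and the intersection $\mathcal{M}_{W_V}\cap\mathcal{M}_{W'_V}$ is nonempty (it already contains $M$ and $M'$), so the two maxima are simultaneously attainable. Note that subset consistency is not invoked in this argument.
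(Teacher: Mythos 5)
Your proof is correct. Note, though, that the paper itself states \cref{obs:same-matching} without proof, treating it as immediate from two facts it has already set up: for a node-based weight function any two matchings of the same matched set have equal weight, so $M$ and $M'$ each lie in \emph{both} sets $\mathcal{M}_{W_V}$ and $\mathcal{M}_{W'_V}$; and the tie-breaking is by a fixed strict total order (lexicographic by IDs) that ignores weights, so $M$, being the top-ranked element of $\mathcal{M}_{W_V}\ni M'$, satisfies $M\succeq M'$, and symmetrically $M'\succeq M$, whence $M=M'$. Your route reaches the same first step but then, instead of invoking the total order directly, introduces the auxiliary node-based function $W^{\star}=W_V+W'_V$, verifies $\mathcal{M}_{W^{\star}}=\mathcal{M}_{W_V}\cap\mathcal{M}_{W'_V}$ (correctly noting that this identity needs the intersection to be nonempty), and applies the stated weight-independence property twice. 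This is heavier than the one-line direct argument, but it buys something real: it derives the observation purely from the two abstract properties the paper isolates (weight independence, and here not even subset consistency), rather than from the specific lexicographic definition, which is in the spirit of the paper's footnote that any tie-breaking rule satisfying those properties would do. The only caveats are cosmetic: your two invocations must match the paper's phrasing of weight independence (with $\mathcal{M}_{W^{\star}}$ playing the role of the smaller set and $M$, respectively $M'$, as the element certifying membership), which you do correctly, and $W^\star$ is indeed node-based, as you check. No gap.
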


We also observe that if an agent is in the first best, by changing his bid he cannot influence the picked matching while staying in the first best.
\begin{observation}\label[observation]{obs:first-best}
	Fix a bipartite graph.
	Assume that with node-based weight function $W$, the maximum weight matching $M$ is picked by the tie-breaking rule.
	Fix any $i$ and let $W'$ be a node-based weight function such that $W'(k)= W(k)$ for any $k\neq i$.  Let $M'$ be the the maximum weight matching picked for $W'$.
	Then if $i\in M'$ it holds that $M=M'$.
\end{observation}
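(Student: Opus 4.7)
The plan is to reduce the conclusion $M = M'$ to \cref{obs:same-matching} by showing that the sets of matched nodes of $M$ and $M'$ coincide. The key algebraic leverage is that since $W$ and $W'$ agree except at the single node $i$, every matching $N$ satisfies $W'(N) - W(N) = \delta$ if $i \in N$ and $W'(N) - W(N) = 0$ otherwise, where $\delta := W'(i) - W(i)$. So swapping $W$ for $W'$ only shifts the weight of matchings containing $i$ by the constant $\delta$, while leaving matchings that avoid $i$ untouched.

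The first step is to argue that $i \in M$ as well. Let $(i,j^\ast)$ be the edge of $M'$ covering $i$, and consider the matching $M' \setminus \{(i,j^\ast)\}$, which avoids $i$ and so has the same weight under $W$ as under $W'$. Using the $W'$-optimality of $M'$ against the $W$-optimality of $M$, combined with the shift identity above, one can compare the maximum $W$-weight achievable by matchings containing $i$ against that achievable by matchings avoiding $i$; appealing to the subset consistency of the tie-breaking rule then lets one conclude that the $W$-optimal matching $M$ must itself include $i$.

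Once $i \in M \cap M'$ is established, the remainder closes cleanly. Both matchings contain $i$, so the shift contributes $\delta$ to both $W'(M) - W(M)$ and $W'(M') - W(M')$, giving $W(M) - W(M') = W'(M) - W'(M')$. Combined with the two optimality inequalities $W(M) \geq W(M')$ and $W'(M') \geq W'(M)$, this forces equality throughout. So $M$ and $M'$ are each max-weight under both $W$ and under $W'$, and in particular they match the same set of nodes; \cref{obs:same-matching} (which encodes the weight-independence of the tie-breaking rule) then yields $M = M'$.

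I expect the main obstacle to be the first step of establishing $i \in M$. The two optimality inequalities alone leave room, in principle, for configurations where node $i$ only gets pulled into the matching after the shift from $W$ to $W'$, so the argument cannot rely purely on algebraic manipulation of $\delta$, $W(M)$, $W(M')$, $W'(M)$, and $W'(M')$. Carefully tracking which matchable sets achieve the $W$-maximum, and how the additive shift by $\delta$ reshuffles their rankings under $W'$, is precisely where the weight-independence and subset consistency of the tie-breaking rule will do the heavy lifting.
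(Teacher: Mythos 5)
Your second step is sound, but your first step --- establishing $i\in M$ --- is not merely the hard part you anticipate: it is unprovable, because it does not follow from the stated hypotheses, and indeed \cref{obs:first-best} read literally is false without it. Take a single buyer $i$ and a single seller $j$ joined by an edge, with $W(i)=1$ and $W(j)=-2$ (i.e., $b_i=1$, $s_j=2$). The unique maximum-weight matching is the empty one, so $M=\emptyset$ and $i\notin M$. Now let $W'(i)=3$ and $W'(j)=W(j)$; the unique maximum-weight matching is $\{(i,j)\}$, so $i\in M'$ yet $M\neq M'$. There are no ties anywhere in this example, so no property of the tie-breaking rule (weight independence, subset consistency) can do the ``heavy lifting'' you hope for: the failure is about which node sets are matchable at maximum weight, not about how ties are resolved. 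The statement is only true --- and, in the paper, is only ever invoked --- with the additional hypothesis $i\in M$ as well: in the applications $W$ is the truthful profile, agent $i$ is assumed to be in the truthful first-best, and the claim actually needed is that a deviation by such an agent either leaves the picked matching unchanged or pushes her out of it.

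Once $i\in M\cap M'$ is assumed, your algebra is exactly right: with $\delta=W'(i)-W(i)$, the optimality of $M$ under $W$ and of $M'$ under $W'$ forces $W(M)=W(M')$ and $W'(M)=W'(M')$, so each of $M,M'$ is maximum weight under both weight functions. One repair is still needed in how you close: this does \emph{not} imply that $M$ and $M'$ match the same set of nodes (two maximum-weight matchings for the same node-based weights can match different node sets --- e.g., two equal-value buyers competing for one seller), so \cref{obs:same-matching} is not directly applicable. Instead, finish with the weight independence of the tie-breaking rule, which picks the highest matching in a fixed total order (lexicographic by IDs) among the maximum-weight ones: $M\succeq M'$ because $M'$ is $W$-maximum while $M$ is the matching picked for $W$, and $M'\succeq M$ because $M$ is $W'$-maximum while $M'$ is the matching picked for $W'$; antisymmetry then gives $M=M'$.
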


We prove the following two lemmas about VCG prices which are both useful in our proofs.

\begin{corollary}\label{cor:vcg}
	Consider the VCG mechanism with lexicographic by IDs tie-breaking rule.
	
	If $(i,j)\in M(\Bb,\Bs)$ for some $(\Bb,\Bs)$ then for any $b'_i$ such that $i$ trades when the bids are $((\Bb_{-i}, b'_i),\Bs)$, it holds that buyer $i$ trades with $j$ and pays $P_i(\Bb,\Bs)$. Moreover, for any such $b'_i$ it holds that $b'_i\geq P_i(\Bb,\Bs)\geq s_j$.
	
	Similarly, if $(i,j)\in M(\Bb,\Bs)$ for some $(\Bb,\Bs)$ then for any $s'_j$ such that seller $j$ trades when the bids are $(\Bb,(\Bs_{-j}, s_j))$, it holds that $j$ trades with $i$ and pays $P_j(\Bb,\Bs)$. Moreover, for any such $s'_j$ it holds that $s'_j\leq P_j(\Bb,\Bs)\leq b_i$.
\end{corollary}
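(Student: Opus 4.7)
The plan is to prove the buyer-side statement; the seller-side statement follows by an entirely symmetric argument. My first step is the matching-stability claim: if $i$ lies in the picked matching both under $(\Bb,\Bs)$ and under $((\Bb_{-i},b'_i),\Bs)$, then the two picked matchings coincide. Viewing the maximum-weight matching problem as a node-based one in the sense of \cref{obs:first-best}, with buyer node weights $b_{i'}$ and seller node weights $-s_{j'}$, replacing $b_i$ by $b'_i$ alters only buyer $i$'s node weight. Under the hypothesis of the corollary $i$ is in the chosen matching in both profiles, so \cref{obs:first-best}, which rests on the weight-independence of the lexicographic-by-IDs tie-breaking rule, immediately forces $M((\Bb_{-i},b'_i),\Bs) = M(\Bb,\Bs)$. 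In particular, $(i,j)$ is an edge of the new matching, so $i$ trades with $j$.

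Invariance of the VCG payment then follows by direct substitution into the formula of \cref{app:notation}: since $M_{-i}$ depends only on $(\Bb_{-i},\Bs)$ and not on $i$'s own bid, and since $M$ is preserved by the first step, the $b'_i$ contributed by $i$ on the edge $(i,j)\in M$ cancels with the $+b'_i$ term appearing outside of the sum in the VCG formula, leaving exactly $P_i(\Bb,\Bs)$.

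For the bound $b'_i \geq P_i(\Bb,\Bs)$, I would appeal to the standard threshold characterization: the welfare of the best matching containing $i$, recomputed with bid $b'_i$, equals $W(M) - b_i + b'_i$, and this must be at least $W(M_{-i})$ for VCG to include $i$ in the new profile; rearranging yields the claimed inequality. For $P_i(\Bb,\Bs) \geq s_j$, I would observe that $M \setminus \{(i,j)\}$ is a feasible matching on the node set after removing $i$ (leaving $j$ simply unmatched), hence $W(M_{-i}) \geq W(M) - (b_i - s_j)$, which after rearrangement gives precisely $P_i(\Bb,\Bs) \geq s_j$.

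I anticipate no substantive obstacle. The only genuine subtlety sits in the matching-stability step, where one must handle the possibility of exact ties (e.g.\ when $b'_i = P_i(\Bb,\Bs)$) without the chosen matching flipping to a different weight-equivalent one that excludes $i$ or pairs $i$ with a different seller; this is exactly the robustness captured by the weight-independence property of the tie-breaking rule, which was already packaged into \cref{obs:first-best} for precisely this purpose. Everything downstream is a one-line algebraic manipulation of the VCG payment formula.
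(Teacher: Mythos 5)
Your proof is correct, and it is rather more explicit than the paper's own, which disposes of the corollary in three lines. The paper treats the claim that $i$ still trades with $j$ and pays $P_i(\Bb,\Bs)$ as immediate from the tie-breaking machinery --- essentially your first two steps, i.e.\ \cref{obs:first-best} plus the cancellation of $b'_i$ in the payment formula of \cref{app:notation} --- and only argues the two inequalities: $b'_i\ge P_i(\Bb,\Bs)$ is attributed to VCG being ex-post IR, which is exactly your welfare comparison $W(M)-b_i+b'_i\ge W(M_{-i})$ spelled out, while $P_i(\Bb,\Bs)\ge s_j$ is proved by contradiction, instantiating a hypothetical bid $b_i$ with $P_i(\Bb,\Bs)<b_i<s_j$ that would force a negative-gain edge $(i,j)$ into a maximum-weight matching. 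Your route to this last inequality is different and, if anything, cleaner: observing that $M\setminus\{(i,j)\}$ remains feasible once $i$ is deleted gives $W(M_{-i})\ge W(M)-(b_i-s_j)$ directly, with no hypothetical bid needed. One point you were right to flag: \cref{obs:first-best} as literally stated only hypothesizes $i\in M'$, but its intended (and valid) reading requires $i$ to be matched under both weight functions; in your application both hold, since $(i,j)\in M(\Bb,\Bs)$ by assumption and $i$ trades under the report $b'_i$, so your tie-handling remark at $b'_i=P_i(\Bb,\Bs)$ is precisely where the weight-independence of the lexicographic rule earns its keep.
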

\begin{proof}
The inequality
$b'_i\geq P_i(\Bb,\Bs)$ holds by VCG being ex-post IR. It holds that $P_i(\Bb,\Bs)\geq s_j$ as otherwise, if
$P_i(\Bb,\Bs)<s_j$ then for $b_i$ s.t. $P_i(\Bb,\Bs)< b_i < s_j$ there is an inefficient trade in $M$, a contradiction. Similar arguments prove imply the claim for  seller $j$.
\end{proof}

Observe that $M=M(\Bb,\Bs)$, $M_1^*=M_1^*(\Bb,\Bs)$ and $M_2^*=M_2^*(\Bb,\Bs)$ are each a maximum weighted matching for some node based weight function, all defined over the same undirected bipartite graph $G=(S,B,E)$ and chosen using the same tie-breaking rule.
$M(\Bb,\Bs)$ is derived from the node-based function $W$ that assigns weight $b_i$ to any node $i\in B$ and weight $-s_j$ to any node $j\in S$.
Similarly, $M_1^*(\Bb,\Bs)$ is derived from the node-based function $W_1$ that assigns weight $\tilde{\varphi}_i(b_i)$ to any node $i\in B$ and weight $-s_j$ to any node $j\in S$, where $\tilde{\varphi}_i(b_i)$ is the ironed virtual value of $i$ when his value is $b_i$, as defined in \cref{sec:preliminaries-rvwm}. Finally, $M_2^*(\Bb,\Bs)$ is derived from the node-based function $W_2$ that assigns weight $b_i$ to any node $i\in B$ and weight $-\tilde{\tau}_j(s_j)$ to any node $j\in S$, where $\tilde{\tau}_j(s_j)$ is the ironed virtual cost of $j$ when his cost is $s_j$, as defined in \cref{sec:preliminaries-rvwm}.

A direct corollary of Observation \ref{obs:same-matching}, is that any alternating cycle in
$M\cup M_1^*$
cannot include more than two distinct nodes, as any such alternating cycle is actually two different matchings over the same matchable set of nodes.
We state the claim for  $M_1^*$; the same claim holds also for $M_2^*$.
\begin{corollary}\label[corollary]{cor:cycles}
	Let $(a_1a_2...a_K)$, be an alternating cycle of $M\cup M_1^*$. Then $K=3$. In other words, $a_1=a_3$ and the undirected edge
	$(a_1,a_2)$ is in both $M$ and in $M_1^*$.
\end{corollary}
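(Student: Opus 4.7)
The plan is to prove Corollary~\ref{cor:cycles} by a swap argument combined with the uniqueness of the tie-breaking rule. Let $(a_1, a_2, \ldots, a_K)$ be an alternating cycle of $M \cup M_1^*$ (so $a_K = a_1$), and let $U = \{a_1, \ldots, a_{K-1}\}$ denote its set of distinct vertices. The edges of the cycle partition into the edges of $M$ on $U$ and the edges of $M_1^*$ on $U$, which I denote $M|_U$ and $M_1^*|_U$; each is a perfect matching of $U$.

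First I would perform the swap. Define $M' := (M_1^* \setminus M_1^*|_U) \cup M|_U$, which is still a valid matching covering exactly the same set of nodes as $M_1^*$. The weight function $W_1$ underlying $M_1^*$ is node-based, so the $W_1$-weight of any perfect matching of $U$ equals $\sum_{u \in U} W_1(u)$, independent of which perfect matching of $U$ is used; hence $W_1(M') = W_1(M_1^*)$, so $M'$ is also a maximum-weight matching under $W_1$. Since the tie-breaking rule deterministically selects a single matching among those of maximum weight, and $M_1^*$ is by definition that selected matching, we must have $M' = M_1^*$. (Equivalently, apply Observation~\ref{obs:same-matching} with both weight functions equal to $W_1$.) This gives $M|_U = M_1^*|_U$.

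Now the alternating cycle has $K-1$ edges alternating between $M$ and $M_1^*$, all lying inside $U$. Since $M|_U = M_1^*|_U$, every edge of the cycle lies in \emph{both} matchings. But each node of $U$ is incident to exactly one edge of $M|_U$, since $M|_U$ is itself a matching. If $K \geq 5$, then the intermediate node $a_2$ is incident to the two distinct cycle edges $(a_1, a_2)$ and $(a_2, a_3)$, and both would lie in $M|_U$, contradicting that $M|_U$ is a matching. Hence $K = 3$; the ``cycle'' consists of the single undirected edge $(a_1, a_2)$, which lies in $M \cap M_1^*$.

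The only subtle step is the use of the uniqueness of the tie-breaking rule together with the node-based structure of $W_1$ to conclude that the post-swap matching $M'$ must actually coincide with $M_1^*$; once that is established, the combinatorial step ruling out $K \geq 5$ is immediate. An identical argument yields the same conclusion for alternating cycles of $M \cup M_2^*$, which is what is needed for the analogous analysis of the BO matching in Section~\ref{sec:ex-ante}.
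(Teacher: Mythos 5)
Your setup (the cycle's vertex set $U$, the fact that both $M|_U$ and $M_1^*|_U$ are perfect matchings of $U$ consisting entirely of cycle edges, the swap $M'=(M_1^*\setminus M_1^*|_U)\cup M|_U$, and the observation that node-based weights make $M'$ maximum weight for $W_1$) is all correct, but the pivotal step is a non sequitur. From ``$M'$ has maximum $W_1$-weight and the tie-breaking rule selects a unique matching'' you conclude $M'=M_1^*$. That does not follow: the rule selects one matching out of a possibly large set of maximum-weight matchings, and exhibiting another maximum-weight matching does not make it the selected one. Indeed, if the cycle were longer than a doubled edge, your $M'$ would be exactly such a non-selected maximum-weight matching, so the assertion $M'=M_1^*$ is equivalent to the statement being proved --- the argument is circular. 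The parenthetical appeal to \cref{obs:same-matching} with both weight functions equal to $W_1$ is likewise vacuous: that observation compares matchings that are \emph{picked} by the tie-breaking rule, and $M'$ is not known to be picked for any weight function.

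What is missing is a comparison of lexicographic ranks, which is precisely what the paper's weight-independence and subset-consistency properties of the lexicographic-by-IDs rule encode (this is why the paper deduces the corollary from \cref{obs:same-matching}). Your swap argument can be completed as follows: perform the swap in both directions, i.e., also set $M''=(M\setminus M|_U)\cup M_1^*|_U$, which is maximum weight for $W$. Since $M_1^*$ and $M$ are the lexicographically highest matchings among the maximum-weight ones for $W_1$ and $W$ respectively, we get that $M_1^*$ is ranked at least as high as $M'$ and $M$ at least as high as $M''$. Because $M|_U$ and $M_1^*|_U$ match exactly the same buyers (those in $U\cap B$), the sorted-by-buyer-ID edge lists of $M_1^*$ and $M'$ (respectively of $M$ and $M''$) can differ only at positions of buyers in $U$, so each of these global comparisons reduces to the comparison of the two restrictions; this yields $M_1^*|_U\succeq M|_U$ and $M|_U\succeq M_1^*|_U$, hence $M|_U=M_1^*|_U$, after which your final combinatorial step gives $K=3$. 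Without this rank comparison (or an explicit appeal to the tie-breaking properties), the proof has a genuine gap at its central step.
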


For a bipartite graph $(V_1,V_2,E)$, we say that the node-based weight function $W$ is a \emph{$V_1$-weak-improvement of $W'$} if for any node $i\in V_1$ it holds that $W(i)\geq W'(i)$ and for any $j\in V_2$ it holds that $W(j)=W'(j)$.
By definition of $\tilde{\varphi}$ and $\tilde{\tau}$, we have that the node-based weight function $W$ used to derive $M$ is a $B$-weak improvement to the node-based weight function $W_1$ used to derive $M_1^*$ (and similarly,
$W$ is an $S$-weak improvement to the node-based weight function $W_2$ used to derive $M_2^*$).

\begin{lemma}\label[lemma]{lem:path}
	Fix a bipartite graph $(B,S,E)$, and assume that the node-based weight function $W$ is a $B$-weak-improvement of $W'$.
	Let $M$ and $M'$ be the maximum weight matchings that are picked by the
	tie-breaking rule for $W$ and $W'$ respectively.
	Consider a  maximal alternating path of $M\cup M'$ that is not a cycle.
	It holds that path cannot both start and end with an edge from $M'$.
	Moreover, it holds that the path (or its inverse) starts with a node in $B$ and that the first edge belongs to $M$.
\end{lemma}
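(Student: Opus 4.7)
The plan is to work with, for a maximal alternating path $P$ of $M \cup M'$, the two swapped matchings
\[
M'' := (M \setminus (M \cap P)) \cup (M' \cap P), \qquad M''' := (M' \setminus (M' \cap P)) \cup (M \cap P),
\]
and to derive both conclusions by combining three ingredients: (a) the optimality inequalities $W(M) \geq W(M'')$ and $W'(M') \geq W'(M''')$; (b) the fact that since $W$ is a $B$-weak-improvement of $W'$ we have $W(b) \geq W'(b)$ for $b \in B$ and $W(s) = W'(s)$ for $s \in S$, together with the node-based identity $W(N) = \sum_{w \in V(N)} W(w)$; and (c) the weight-independent lexicographic-by-IDs tie-breaking rule, which prefers matchings with more edges, then smaller buyer ID at the first differing position, then smaller seller ID.

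For Part 1, I would argue by contradiction. Suppose both endpoints $u$ and $v$ of the path are covered only by $M'$-edges. Then $u, v$ are $M$-unmatched by maximality of the alternating path, the path length $L$ is odd, and $u, v$ lie on opposite sides of the bipartition, so WLOG $u \in B$ and $v \in S$. Since $V(M'') = V(M) \cup \{u, v\}$, computing $W(M'') - W(M) = W(u) + W(v)$ and symmetrically $W'(M') - W'(M''') = W'(u) + W'(v)$, and combining with $W(u) \geq W'(u)$ and $W(v) = W'(v)$, pinches both of these to zero. Hence $M''$ is also a maximum-weight matching under $W$, and I compare $M$ against $M''$ under the tie-breaking rule: they agree on all edges outside $P$, and either the smallest-ID buyer in $B \cap P$ is $u$ itself (in which case $M''$ wins outright by the ``more edges'' or ``lower buyer ID'' bullet, contradicting that $M$ is chosen) or it is some other $b^* \in B \cap P$ (in which case the seller-ID bullet forces $s_M(b^*) < s_{M'}(b^*)$, while the companion comparison of $M'$ against $M'''$ forces $s_{M'}(b^*) < s_M(b^*)$, again a contradiction).

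For Part 2, Part 1 restricts the path to three structural possibilities: (i) both ends are $M$-edges, so $L$ is odd and the endpoints lie on opposite sides of the bipartition---orient from the buyer-endpoint, whose first edge is automatically in $M$; (ii) exactly one end is an $M$-edge with both endpoints in $B$---orient from that $M$-endpoint, which is a buyer; or (iii) exactly one end is an $M$-edge with both endpoints in $S$, which I would rule out by a second application of the paired tie-breaking argument. In (iii) the swap now exchanges one seller for another, so by $B$-weak-improvement the two swapped endpoints have equal $W$-weight, giving $W(M'') = W(M)$ and $W'(M''') = W'(M')$. The matchings $M, M''$ share the same set of matched buyers but differ in partner at every $P$-buyer, and likewise for $M', M'''$, so the tie-break comparison at the smallest-ID $P$-buyer produces two incompatible seller-ID inequalities, exactly as in Part 1.

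The main obstacle is the tie-breaking analysis underlying Part 1 and case (iii) of Part 2. The ``more edges'' clause of the tie-breaking rule handles only the special subcase in which the first differing position in the sorted edge lists coincides with the extra endpoint buyer; in general that first differing position lies at an interior $P$-buyer whose partner differs between $M$ and $M''$, and the weight calculation has already been pinched to equality and cannot disambiguate. The fix I rely on is always to pair the comparison of $M$ against $M''$ with the companion comparison of $M'$ against $M'''$, extracting from the two independently chosen matchings' tie-break-optimality a pair of strictly incompatible seller-ID orderings.
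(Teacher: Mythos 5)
Your proposal is correct and follows essentially the same route as the paper's proof: swap the path's edges between $M$ and $M'$, use the two optimality inequalities together with the $B$-weak-improvement property to pinch the endpoint weights to equality, and then derive a contradiction from the tie-breaking rule, with the same case split (both ends from $M'$; both ends from $M$; mixed ends with both endpoints in $S$ ruled out). The only difference is that where the paper disposes of the equal-weight case by a terse appeal to weight-independence of the tie-breaking, you make that step explicit via the paired comparisons of $M$ against $M''$ and $M'$ against $M'''$, extracting incompatible seller-ID orderings --- a more detailed but equivalent handling.
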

\begin{proof}
	Let $A=(a_1a_2...a_K)$ be a maximal alternating path in $M\cup M'$ that is not a cycle, and let $U$ be the set of nodes in the path $A$.
	W.l.o.g., if there is a a node in $B$ on any end of the path, it is the first in the path.
	Assume that the path starts and ends with an edge from $M'$.
	In this case the path must have an odd number of edges (as any edge from $M'$ is followed by an edge from $M$), and it starts with a node $a_1\in B$ and ends with a node $a_k\in S$. If $W'(a_1)+ W'(a_k)<0$, then $U'=U\setminus \{a_1,a_k\}$ is matchable and has higher weight than $U$ for $W'$, a contradiction to the maximality of $M'$.
	If $W'(a_1)+ W'(a_k)\geq 0$ then since $W$ is a $V_1$-weak-improvement of $W'$ it holds that $W(a_1)+ W(a_k)\geq W'(a_1)+ W'(a_k)\geq 0$. If $W(a_1)+ W(a_k)>0$ then the matchable set $U'$ has higher weight than $U$ with respect to $W$, contradicting the maximality of $M$. If on the other hand $W(a_1)+ W(a_k)= W'(a_1)+ W'(a_k)=0$
	both $U$ and $U'$ are matchable sets of the same weight with respect to both $W$ and $W'$, so by set consistency of the tie breaking, both $M$ and $M'$ must have matched the same set, a contradiction.
	
	Now, if the path starts and ends with an edge in $M$, it has an odd number of edges, so it has a node in $B$ on one end and a node in $S$ on the other, and as we can assume w.l.o.g.\ that the node in $B$ is first, this completes the proof.
	
    We are left with the case that the path has an edge from $M$ on one end, and an edge from $M'$ on the other.
    In this case it has  an even number of edges and thus either both ends are in $B$, or both are in $S$.
    We prove that both are in $B$, completing the proof of the claim.
	Assume by way of contradiction that both $a_1$ and $a_k$ are in $S$.
	Since $W$ is a $B$-weak-improvement of $W'$, for any node $j\in S$ we have $W(j)=W'(j)$ and thus $W(a_1)=W'(a_1)$ and $W(a_k)=W'(a_k)$.
	If $W(a_1)<W(a_k)$ then $U\setminus \{a_k\}$ is a matchable set with higher weight than the set $U\setminus \{a_1\}$ with respect to $W$, contradicting the optimality of $M$. Similarly if   $W'(a_1)=W(a_1)>W(a_k)=W'(a_k)$ then $U\setminus \{a_1\}$ is a matchable set with higher weight than the set $U\setminus \{a_k\}$ with respect to $W'$, contradicting the optimality of $M'$. Thus it must be the case that $W'(a_1)=W'(a_k)$. So both matchable sets $U\setminus \{a_1\}$ and $U\setminus \{a_k\}$ have exactly the same weight with respect to both $W$ and $W'$, and as the tie breaking is
 weight-independent,	
 both should have picked the same set, a contradiction.
\end{proof}

From Lemma \ref{lem:path} and Corollary \ref{cor:cycles} we immediately get the following corollary.
\begin{corollary}\label[corollary]{cor:buyer-first}
	Let $(a_1a_2...a_K)$, be a maximal alternating path of $M\cup M_1^*$. Precisely one of the following holds:
	\begin{itemize}
	\item $K=3$, $a_1=a_3$ and the undirected edge $(a_1,a_2)\in M\cap M_1^*$,\quad or
	\item
	(w.l.o.g.)\ the path starts with a buyer and an edge from $M$.
	\end{itemize}
\end{corollary}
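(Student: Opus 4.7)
The plan is to reduce the statement to a direct case analysis on whether the maximal alternating path $A=(a_1a_2\ldots a_K)$ of $M\cup M_1^*$ is a cycle or not, and then invoke the two immediately preceding results.

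First, I would observe that the pair of matchings $(M,M_1^*)$ falls into the framework of \cref{lem:path}: the node-based weight function $W$ that yields $M$ assigns $W(i)=b_i$ for $i\in B$ and $W(j)=-s_j$ for $j\in S$, while the node-based weight function $W_1$ that yields $M_1^*$ assigns $W_1(i)=\tilde{\varphi}_i(b_i)$ for $i\in B$ and $W_1(j)=-s_j$ for $j\in S$. Since $\tilde{\varphi}_i(b_i)\le b_i$ (by the property of the ironed virtual value function recalled in \cref{sec:preliminaries-rvwm}), we have $W(i)\ge W_1(i)$ for every $i\in B$ and $W(j)=W_1(j)$ for every $j\in S$, so $W$ is a $B$-weak-improvement of $W_1$ in the sense of \cref{lem:path}. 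The same tie-breaking rule (lexicographic by IDs) is used for both matchings, as required.

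Now, split into two cases. If $A$ is an alternating cycle (i.e., $a_1=a_K$), then \cref{cor:cycles} applies directly and yields the first bullet: $K=3$, $a_1=a_3$, and the undirected edge $(a_1,a_2)$ lies in $M\cap M_1^*$. Otherwise, $A$ is a maximal alternating path that is not a cycle, so by \cref{lem:path} the path, or its reverse, starts with a node in $B$ and its first edge belongs to $M$; after (possibly) reversing, this is exactly the second bullet. These two cases are mutually exclusive since a cycle has no distinguished endpoint, while the second bullet is a statement about the first edge of a genuine (non-cyclic) path.

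There is no real obstacle here: the work was already done in \cref{cor:cycles,lem:path}, and the only thing to verify is the $B$-weak-improvement condition, which is immediate from $\tilde{\varphi}_i(b_i)\le b_i$. An analogous argument (with $W_2$ in the role of $W_1$, using $\tilde{\tau}_j(s_j)\ge s_j$ so that $W$ is an $S$-weak-improvement of $W_2$) would cover the symmetric statement for $M\cup M_2^*$, if needed later in the proof of \cref{offering-bcfm-pointwise}.
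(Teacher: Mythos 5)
Your proposal is correct and follows exactly the paper's route: the paper derives \cref{cor:buyer-first} immediately from \cref{lem:path} and \cref{cor:cycles}, having already noted (just before \cref{lem:path}) that $W$ is a $B$-weak-improvement of $W_1$ because $\tilde{\varphi}_i(b_i)\le b_i$. Your case split into cycle versus non-cycle and the verification of the weak-improvement hypothesis are precisely the intended argument.
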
	

\cref{lem:sufficient-trade-offering}, which plays a central role in our proof of the ex-ante guarantee of the Offering Mechanism, provides a sufficient condition for a buyer-seller pair to trade in that mechanism. Here we restate and prove the \lcnamecref{lem:sufficient-trade-offering}.

\begin{lemma}[Restatement of \cref{lem:sufficient-trade-offering}]
	Fix valuation profile $(\Bb,\Bs)$. For every $(i,j)\in M(\Bb,\Bs)$, if $j$ is in $M_{-i}(\Bb,\Bs)$ then buyer $i$ will trade with seller $j$ in the BO  Mechanism, and if $i$ is in $M_{-j}(\Bb,\Bs)$ then buyer $i$ will trade with seller $j$ in the SO Mechanism. Thus, in each such case the edge $(i,j)$ will be traded with probability at least $\nicefrac{1}{2}$ in the Offering Mechanism.
\end{lemma}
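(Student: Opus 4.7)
The plan is to invoke \cref{offering-constraint-vcg} to identify the offer constraints of the Offering Mechanism on the edge $(i,j)$ with the VCG payments, and then to use the fact that VCG payments sandwich the cost of the selling side and the value of the buying side.

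First suppose $j\in M_{-i}(\Bb,\Bs)$. By \cref{offering-constraint-vcg}, the BO offer constraint for the edge $(i,j)$ satisfies $\bar{b}=P_i(\Bb,\Bs)$. Since $(i,j)\in M(\Bb,\Bs)$, \cref{cor:vcg} yields $P_i(\Bb,\Bs)\ge s_j$, hence $\bar{b}\ge s_j$. By definition of the BO mechanism (run with offer constraint $\bar{b}$ and target distribution $D^S_j|_{\le\bar{s}}$), any price $p$ that buyer $i$ offers satisfies $p\ge\bar{b}\ge s_j$, so seller $j$ accepts and trade occurs. In particular, trade occurs on $(i,j)$ with probability~$1$ whenever the RO coin selects the BO branch, giving overall probability at least $\nicefrac{1}{2}$ of trading on $(i,j)$ in the Offering Mechanism.

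The symmetric case is completely analogous. If $i\in M_{-j}(\Bb,\Bs)$, then \cref{offering-constraint-vcg} gives $\bar{s}=P_j(\Bb,\Bs)$, and \cref{cor:vcg} yields $P_j(\Bb,\Bs)\le b_i$, hence $\bar{s}\le b_i$. In the SO mechanism (run with offer constraint $\bar{s}$ and target distribution $D^B_i|_{\ge\bar{b}}$), any price $p$ that seller $j$ offers satisfies $p\le\bar{s}\le b_i$, so buyer $i$ accepts and trade occurs. Thus trade occurs on $(i,j)$ with probability at least $\nicefrac{1}{2}$ in the Offering Mechanism.

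There is no real obstacle here: all of the work has been done in establishing \cref{offering-constraint-vcg} (which relates the offer constraints to VCG payments in the ``inner'' markets $(S\setminus\{j\},B)$ and $(S,B\setminus\{i\})$) and \cref{cor:vcg} (ex-post IR of VCG, which forces the payment of a trading buyer to lie between his partner's cost and his own value). The only thing to watch for is that $\bar{b}$ and $\bar{s}$ were defined via threshold bids in the ``inner'' markets rather than directly as VCG payments in $(\Bb,\Bs)$, which is exactly the gap that \cref{offering-constraint-vcg} closes.
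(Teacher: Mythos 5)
Your proof is correct and follows essentially the same route as the paper: identify $\bar{b}$ (resp.\ $\bar{s}$) with the VCG payment via \cref{offering-constraint-vcg}, use the VCG sandwich property to get $\bar{b}\ge s_j$ (resp.\ $\bar{s}\le b_i$), and conclude that the offer is always accepted in the corresponding branch. The only cosmetic differences are that you inline the one-line argument of \crefpart{ro-trade-probability}{always-trade} instead of citing it, and you invoke \cref{cor:vcg} for the inequality where the paper cites \cref{lem:VCG-price} (your citation is arguably the more apt one).
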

\begin{proof}
For every pair $(i,j)\in M(\Bb,\Bs)$, if $j\in M_{-i}(\Bb,\Bs)$, we have by \cref{offering-constraint-vcg} that $\bar{b}=P_i(\Bb,\Bs)\ge s_j$, where the inequality is by \cref{lem:VCG-price}.
Similarly, if $i\in M_{-j}(\Bb,\Bs)$, then $\bar{s}=P_j(\Bb,\Bs)\le b_i$. Therefore, in either case, by \crefpart{ro-trade-probability}{always-trade} the edge $(i,j)$ will be traded with probability at least $\nicefrac{1}{2}$ in the Offering Mechanism.
\end{proof}	

Consider a maximal alternating path that is not a cycle. \cref{lem:even-odd-path} shows that for every seller $j\in A$ that is not at one of the ends such a path, it holds that $j\in M_{-i}$, where $i$ is the buyer that is matched to $j$ in $M$. Here we restate and give the proof of the \lcnamecref{lem:even-odd-path}.

\begin{lemma}[Restatement of \cref{lem:even-odd-path}]
	Let $A$ be a maximal alternating path of $M\cup M_1^*$ that is not a cycle.
	For every seller $j\in A$ who is not at one of the ends of the path,
	it holds that $j\in M_{-i}$, where $i$ is the buyer such that $(i,j)\in M$.
\end{lemma}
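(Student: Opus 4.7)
By Corollary~\ref{cor:buyer-first}, we may assume after reversing if necessary that $A=(i_1,j_1,i_2,j_2,\ldots)$ starts with a buyer $i_1$ and an edge $(i_1,j_1)\in M$, with subsequent edges alternating between $M$ and $M_1^*$. Fix the interior seller $j=j_l$ and its $M$-partner $i=i_l$, and let $A'$ be the sub-path of $A$ from $j_l$ onwards. The plan is to consider the ``swapped'' matching
\[
M^* \;=\; M \setminus \bigl(\{(i_l,j_l)\}\cup (M\cap A')\bigr) \;\cup\; (M_1^*\cap A'),
\]
which is readily verified to be a valid matching in the market with $i_l$ removed and to cover $j_l$. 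The strategy is then to show $\mathrm{GFT}(M^*)\ge \mathrm{GFT}(M\setminus\{(i_l,j_l)\})$ and conclude that $j_l$ must be matched in the picked first-best $M_{-i_l}$.

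Two cases arise depending on how $A$ ends. In Case~1 (even number of edges, ending at some buyer $i_{K+1}$) the $M_1^*$-edges of $A'$ perfectly cover $A'$, while in Case~2 (odd number of edges, ending at some seller $j_K$) the seller $j_K$ is freed. In either case a telescoping calculation along $A'$ yields
\[
\mathrm{GFT}(M^*)-\mathrm{GFT}(M\setminus\{(i_l,j_l)\}) \;=\; b_{i_{K+1}}-s_{j_l} \quad\text{or}\quad s_{j_K}-s_{j_l},
\]
respectively. To sign this, I would apply the \emph{same} edge-swap inside $M_1^*$ itself, replacing $M_1^*\cap A'$ with $M\cap A'$, producing a competitor matching whose virtual welfare differs from $W_1(M_1^*)$ by $s_{j_l}-\tilde{\varphi}_{i_{K+1}}(b_{i_{K+1}})$ or $s_{j_l}-s_{j_K}$. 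Maximality of $M_1^*$ forces this difference to be non-positive, giving $\tilde{\varphi}_{i_{K+1}}(b_{i_{K+1}})\ge s_{j_l}$ (whence $b_{i_{K+1}}\ge s_{j_l}$, using that the ironed virtual value is pointwise at most the value) or $s_{j_K}\ge s_{j_l}$. Either way, $\mathrm{GFT}(M^*)\ge \mathrm{GFT}(M\setminus\{(i_l,j_l)\})$.

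To close the argument, I would reason by contradiction: if $j_l\notin M_{-i_l}$, then $M_{-i_l}\cup\{(i_l,j_l)\}$ is a feasible matching of the full market, and combining the optimality of $M$ with the feasibility of $M\setminus\{(i_l,j_l)\}$ in the reduced market forces $\mathrm{GFT}(M_{-i_l})=\mathrm{GFT}(M\setminus\{(i_l,j_l)\})$. Any strict inequality in the step above would then contradict the optimality of $M_{-i_l}$ via the witness $M^*$. The main obstacle is the tight-equality case, in which the virtual-welfare swap inside $M_1^*$ is also tight, so $M_1^*$ and its swapped competitor are both maximum-virtual-welfare matchings and the tie-breaking rule nevertheless selected $M_1^*$ (which covers $j_l$). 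I would close the proof by invoking the weight-independence and subset-consistency properties of the lexicographic-by-IDs tie-breaking rule (the two properties spelled out in \cref{app:ties}) to transport this choice from the virtual-welfare market to the GFT market without $i_l$, concluding that the picked $M_{-i_l}$ must also cover $j_l$, contrary to our assumption.
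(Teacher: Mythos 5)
Your proposal is correct and takes essentially the same route as the paper's proof: after reducing via \cref{cor:buyer-first} to a path starting with a buyer and an $M$-edge, you perform the same tail-swap between $M$ and $M_1^*$, split into the even/odd cases, sign the resulting difference ($b$ of the final buyer minus $s_{j_l}$, or $s$ of the final seller minus $s_{j_l}$) using the optimality of $M_1^*$ together with $\tilde{\varphi}_i(b_i)\le b_i$, and resolve the tight-equality case by the weight-independence and subset-consistency of the tie-breaking rule. The only difference is presentational: the paper phrases the swap in terms of matchable node sets and node-based weights, whereas you carry out an explicit edge-swap with a telescoping sum, and its handling of the tie case is exactly as terse as yours.
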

\begin{proof}
	By \cref{cor:buyer-first} we can assume w.l.o.g.\ that $A$ starts with a buyer and an edge in $M$.
	So, if the path has an even number of edges, then $A=(i_1j_1i_2j_2...i_{L-1}j_{L-1}i_L)$ and if it is odd then $A=(i_1j_1i_2j_2...i_{L-1}j_{L-1}i_Lj_L)$,
	where in either case each agent $i_l$ denotes a buyer and each agent $j_l$ denotes a seller,
	such that for every $l\in \{1,2,...,L-1\}$ it holds that $(i_l,j_l)\in M$. If the path is odd, it furthermore holds that $(i_L,j_L)\in M$.
	
	We need to show that for every $l\in \{1,2,...,L-1\}$ it holds that  $j_l\in M_{-i_l}$.	
	Assume for contradiction that $j_l\notin M_{-i_l}$ for some $l\in \{1,2,...,L-1\}$.
	Then $M_{-i_l} = M\setminus \{(i_l,j_l)\}$ by subset consistency of the tie-breaking rule, and in $A$ the matching $M_{-i_l}$ matches the set of agents\footnote{We slightly abuse notation by using $A$ to also denote the set of agents in the path $A$.}   $A'= M\cap (A\setminus \{i_l,j_l\})$.

	If the path has an even number of edges, then $i_L\notin A'$.
	To derive a contradiction we observe that the set $A''=A'\cup \{j_l,i_L\}=A\setminus\{i_l\}$
	is matchable (using the edges of $M$ on the path $A$ up to $j_{l-1}$, and the edges of
	$M_1^*$ on the path $A$ starting from $j_l$), and moreover,
	has weight with respect to~$W$ that is at least the weight of $A'$. This holds as $M_1^*$ matched $A''\cap R$ and not $A'\cap R$ for $R = \{j_l,i_{l+1}, j_{l+1},\ldots,i_L\}$, and the weight of $i_L$ is not lower in $W$ than in $W_1$ (and the weight of $j_l$ is the same in both). So we get an contradiction as either $A''$ is matchable and with a higher weight than $A'$ with respect to $W$, or they have the same weight with regard to $W$ and the same weight with regard to $W_1$, and ties were broken differently.
	
	Next we consider the case that the path has an odd  number of edges, in which case also $(i_L, j_L)\in M$.
	It must hold that $s_{j_l}\leq s_{j_L}$ as $M_1^*$ matches $A\setminus \{i_1,j_L\}$ and not the matchable set $A\setminus \{i_1,j_l\}$.
	Recall that since $j_l\notin M_{-i_l}$, then $M_{-i_l} = M\setminus \{(i_l,j_l)\}$, but the matchable set $A'' = A\setminus \{i_l,j_L\}\subseteq A\setminus\{i_l\}$ has at least the weight of
	$A' = A\setminus \{i_l,j_l\}$ with respect to $W$ (since sellers have the same weight in $W$ and $W_1$), so we get an contradiction as either $A''$ is matchable and with a higher weight than $A'$ with respect to $W$, or they have the same weight with regard to $W$ and the same weight with regard to $W_1$, and ties were broken differently in $M_{-i_l}$ and in $M_1^*$.
\end{proof}

\cref{lem:odd-path-extra-edge} considers such paths with an odd number of edges and present some additional characterization that will help us in bounding the GFT of our mechanism.
By Corollary \ref{cor:buyer-first} we can assume w.l.o.g.\ that any  maximal alternating path of  $M\cup M_1^*$ starts with a buyer and an edge in $M$.
Let $GFT_{M'}(U)$ be the GFT of all edges of $M'$ that are contained in $U$. We now restate and prove this \lcnamecref{lem:odd-path-extra-edge}.

\begin{lemma}[Restatement of \cref{lem:odd-path-extra-edge}]
	For $K>3$, let $A=(i_1j_1i_2j_2...i_{L-1}j_{L-1}i_Lj_L)$, be a maximal alternating path of odd number of edges of $M\cup M_1^*$ with
	any agent $i_l$ denoting a buyer and any agent $j_l$ denoting a seller, and the first edge in $M$  ($(i_1,j_1)\in M$).
It holds that
	\begin{itemize}
		\item if $b_{i_L}>b_{i_1}$ then  $i_L\in M_{-j_L}$.
		\item if $b_{i_L}\leq b_{i_1}$ then $GFT_M(A\setminus \{i_L,j_L\} )\geq GFT_{M_1^*}(A) $.
	\end{itemize}
\end{lemma}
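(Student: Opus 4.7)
\medskip

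\noindent\textbf{Proof proposal.} The plan is to handle the two conclusions separately, after first enumerating edges along the path. Since $A$ is alternating, starts with buyer $i_1$, and its first edge $(i_1,j_1)$ belongs to $M$, the odd-length structure forces the edges of $M$ contained in $A$ to be exactly $(i_l,j_l)$ for $l=1,\ldots,L$ and the edges of $M_1^*$ contained in $A$ to be exactly $(j_l,i_{l+1})$ for $l=1,\ldots,L-1$. In particular, $i_1$ and $j_L$ are the only two agents of $A$ that appear in no edge of $M_1^*\cap A$. The maximality of $A$, together with the fact that $j_L$ is incident in $M$ only to $i_L$ via the edge $(i_L,j_L)$, will be useful below.

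The second conclusion reduces to a direct computation:
\[
GFT_M(A\setminus\{i_L,j_L\}) - GFT_{M_1^*}(A) = \sum_{l=1}^{L-1}(b_{i_l}-s_{j_l}) - \sum_{l=1}^{L-1}(b_{i_{l+1}}-s_{j_l}) = b_{i_1} - b_{i_L},
\]
which is $\ge 0$ under the hypothesis $b_{i_L}\le b_{i_1}$.

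For the first conclusion, I would argue by contradiction: suppose $b_{i_L}>b_{i_1}$ and yet $i_L\notin M_{-j_L}$. Then neither $i_L$ nor $j_L$ is matched in $M_{-j_L}$ (the latter because $j_L$ has been removed), so $M_{-j_L}\cup\{(i_L,j_L)\}$ is a valid matching in the original market. By the optimality of $M$ in the original market, this yields
\[
W(M_{-j_L}) \le W(M) - (b_{i_L}-s_{j_L}).
\]
Next, I would construct a competing matching for the market without $j_L$ by swapping the $M$-edges along $A$ for the $M_1^*$-edges along $A$: define $N := (M\setminus(M\cap A))\cup(M_1^*\cap A)$. This $N$ is a valid matching that avoids $j_L$, and a direct accounting of the swap gives
\[
W(N) - W(M) = \sum_{l=1}^{L-1}(b_{i_{l+1}}-s_{j_l}) - \sum_{l=1}^{L}(b_{i_l}-s_{j_l}) = -b_{i_1}+s_{j_L},
\]
so $W(N) = W(M) - b_{i_1}+s_{j_L}$. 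Combining the two displays,
\[
W(N) - W(M_{-j_L}) \ge b_{i_L} - b_{i_1} > 0,
\]
contradicting the optimality of $M_{-j_L}$ in the market without $j_L$.

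I expect the main obstacle in writing this out to be purely bookkeeping: keeping the two candidate matchings ($M_{-j_L}\cup\{(i_L,j_L)\}$ in the original market, and $N$ in the reduced market) straight, and verifying the weight change in the swap defining $N$. The hypothesis $b_{i_L}>b_{i_1}$ is used only at the very end as a strict inequality, which is what guarantees the contradiction rather than merely an equality of weights; the symmetric case $b_{i_L}\le b_{i_1}$ is absorbed into the algebraic identity that proves the second conclusion. Neither step appears to require invoking \cref{lem:even-odd-path} or \cref{lem:path} again, because $j_L$ sits at the endpoint of $A$ and so only one swap configuration needs to be considered.
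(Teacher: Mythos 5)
Your proposal is correct, and its core is the same alternating-path exchange that the paper uses: in both bullets the whole argument boils down to the accounting identity in which swapping the $M$-edges of $A$ for its $M_1^*$-edges changes the weight by exactly $b_{i_L}-b_{i_1}$ (respectively $-b_{i_1}+s_{j_L}$), with the hypothesis $b_{i_L}>b_{i_1}$ supplying a strict inequality; your second bullet is literally the paper's computation. Where you diverge is in how the contradiction is closed in the first bullet. The paper first pins down what $M_{-j_L}$ looks like on $A$: from $i_L\notin M_{-j_L}$ it deduces, via the subset-consistency property of the lexicographic tie-breaking rule, that $M_{-j_L}$ agrees with $M$ on $A\setminus\{i_L,j_L\}$, and then compares this with the matchable set $A\setminus\{i_1,j_L\}$ matched by the $M_1^*$-edges. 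You instead never identify $M_{-j_L}$: you upper-bound $W(M_{-j_L})$ by $W(M)-(b_{i_L}-s_{j_L})$ by re-inserting the edge $(i_L,j_L)$ (legitimate, since under the contradiction hypothesis both endpoints are free in $M_{-j_L}$), and exhibit the explicit competing matching $N=(M\setminus(M\cap A))\cup(M_1^*\cap A)$ avoiding $j_L$ with $W(N)=W(M)-b_{i_1}+s_{j_L}$. Since the resulting inequality $W(N)-W(M_{-j_L})\ge b_{i_L}-b_{i_1}>0$ is strict, no appeal to the tie-breaking rule is needed at all, which is a small but genuine simplification; the paper's route, by contrast, is local to the path $A$ and reuses the tie-breaking machinery it has already set up for \cref{lem:even-odd-path}. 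Both are complete; just make sure (as you implicitly do) that all nodes of $A$ are $M$-matched inside $A$, so that $N$ is indeed a matching, and that maximality of $A$ ensures the $M_1^*$-edges contained in $A$ are exactly the $L-1$ path edges.
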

\begin{proof}
	We prove that if $b_{i_L}>b_{i_1}$ then  $i_L\in M_{-j_L}$.
	Assume for contradiction that $i_L\notin M_{-j_L}$.
	Since the tie breaking is subset consistent, the matching picked on $A\setminus \{i_L,j_L\}$ will be the same as the one in $M$.
	Yet, the set $A\setminus \{i_1, j_L\}$ is matchable (by the edges of $M_1^*$) and has higher weight than the weight that $M$ gets on $A\setminus \{i_L,j_L\}$, a contradiction.
	
We next consider the case that $b_{i_L}\leq b_{i_1}$.
Let $w=GFT_M(A)=\sum_{l=1}^{L}(b_{i_l}-s_{i_l})$. Notice that:
\[
GFT_M(A\setminus \{i_L,j_L\})=w-(b_{i_L}-s_{i_L})\ge w-(b_{i_1}-s_{i_L})=GFT_{M_1^*}(A\setminus \{i_1,j_L\})=GFT_{M_1^*}(A).\qedhere\]
\end{proof}

We are now ready to complete the proof of \cref{offering-bcfm-pointwise}, which states that for any valuation profile $(\Bb,\Bs)$, the gains from trade of the Offering Mechanism for matching markets is at least half of the from trade of the RVWM mechanism for that profile.

	Fix a valuation profile $(\Bb,\Bs)$.
	To prove the claim we consider the connected components of $M(\Bb,\Bs)\cup M_1^*(\Bb,\Bs)$ and show that in each connected component separately the GFT of the Offering Mechanism in expectation (over the randomness of the mechanism), is at least half the GFT of the RVWM mechanism  of \citet{BrustleCFM17} on $(\Bb,\Bs)$.
	
	By \cref{obs:division-alternating} each connected component is either a maximal alternating path or a cycle.
	By \cref{cor:cycles} any cycle has only two (identical) undirected edges, denote it by $(i,j)\in M(\Bb,\Bs) \cap M_1^*(\Bb,\Bs)$.
	That is, the unique edge $(i,j)$ of $M_1^*= M_1^*(\Bb,\Bs)$ in this cycle is the same as the unique edge $(i,j)$ of $M=M(\Bb,\Bs)$ in that cycle. If $j\in M_{-i}(\Bb,\Bs)$ or $i\in M_{-j}(\Bb,\Bs)$, by \cref{lem:sufficient-trade-offering} buyer $i$ will trade with seller $j$ in the Offering Mechanism with probability at least $\nicefrac{1}{2}$, which obtains at least half the GFT that the RVWM mechanism obtains on $(i,j)\in M_1^*(\Bb,\Bs)$ when the profile is $(\Bb,\Bs)$. Otherwise, since $i\notin M_{-j}$ we have that $\bar{s} \ge b_i$, and since $j\notin M_{-i}$ we have that $\bar{b}\le s_j$.
	Since trade occurs with positive probability on $(i,j)$ in the RVWM mechanism, then by \cref{rvwm-offer}, in this case the GFT of the RVWM mechanism on this edge are therefore at least those of the RO mechanism with SO parameters $\infty$ (no constraint) and $D^B_{b_i}$ (unconditioned distribution) and BO parameters $0$ (no constraint) and~$D^S_{s_j}$ (unconditioned distribution) on $(i,j)$. Since $\bar{s}\ge b_i$ and $\bar{b}\le s_j$, we have by \crefpart{ro-trade-probability}{restrictions-dont-matter} that the probability that trade occurs between $i$ and $j$ is at least as high in our Offering Mechanism (which runs the appropriate RO mechanism, constrained and conditioned) as it is in the unconstrained and unconditioned RO mechanism (that upper-bounds the GFT of RVWM on this edge). Therefore, in this case our Offering Mechanism achieves at least the gains from trade of the RVWM mechanism on this edge (and therefore, on any alternating cycle).
	
	By \cref{cor:buyer-first}, any other maximal alternating path of is not a cycle, and is a path starts or ends with a buyer and an edge from $M$. We will assume w.l.o.g.\ that it start with a buyer and an edge from $M$, and we consider such paths of even and odd numbers of edges separately. Note that Corollary~\ref{cor:buyer-first} implies that there is no connected component that does not include at least one edge from $M$, so by going over all connected components with at least two edges, we cover all the edges of $M_1^*$.
	
	If the number of edges in the path is even, by \cref{lem:even-odd-path}, $M$ matches every seller $j$ in the path to some buyer $i$,
	and for any such pair $(i,j)$ it holds that $j\in M_{-i}$. By \cref{lem:sufficient-trade-offering} buyer $i$ will trade with seller $j$ in the BO Mechanism, so whenever the BO mechanism runs, the	maximal GFT (first best) of that connected component, which is at least the GFT of the $M_1^*$ mechanism for that connected component, will be obtained. The Offering Mechanism runs the BO mechanism is probability $\nicefrac{1}{2}$, so in expectation it obtains at least $\nicefrac{1}{2}$ the GFT of $M_1^*$ for this path.
	
	We next consider the case that the number of edges in the path is odd and at least $3$.\footnote{As noted, if there is a single edge, it is only in $M$. We only need to cover edges in $M_1^*$.}
	Let the path be $A=(i_1j_1i_2j_2...i_{L-1}j_{L-1}i_Lj_L)$ for some $L\geq 2$.
	By \cref{lem:even-odd-path}, for any $l\in \{1,2,...,L-1\}$ it holds that $j_l\in M_{-i_l}$. Next, we use \cref{lem:sufficient-trade-offering} again. We consider two cases, using \cref{lem:odd-path-extra-edge}.
	\begin{itemize}
		\item If $b_{i_L}>b_{i_1}$ then  $i_L\in M_{-j_L}$. In this case all edges of $M$ will each be traded with probability at least $\nicefrac{1}{2}$ in the Offering Mechanisms, so in expectation it obtains at least $\nicefrac{1}{2}$ the GFT of $M$ in this path and thus also at least $\nicefrac{1}{2}$ the GFT of
		$M_1^*$ in this path $A$.
		\item If on the other hand  $b_{i_L}\leq b_{i_1}$ then $GFT_M(A\setminus \{i_L,j_L\} )\geq GFT_{M_1^*}(A) $. Therefore, since every edge $(i_l,j_l)$ for $l\in \{1,2,...,L-1\}$  is traded with probability $\nicefrac{1}{2}$ in the Offering Mechanisms, we have that in expectation the Offering Mechanism obtains in this path $A$ at least~$\nicefrac{1}{2}$ of the GFT of $M_1^*$ in this path $A$.	
	\end{itemize}

We conclude that the Offering Mechanisms obtains at least $\nicefrac{1}{2}$ the GFT that the RVWM mechanism gets on $M_1^*$.
Similar arguments show that the Offering Mechanisms obtains at least $\nicefrac{1}{2}$ the GFT of the RVWM mechanism gets on $M_2^*$.
Thus the Offering Mechanisms, obtains at least $\nicefrac{1}{4}$ the total GFT of $M_1^*$ and $M_2^*$.
The expected GFT of the RVWM mechanism is the average GFT of $M_1^*$ and of $M_2^*$.
We conclude that the Offering Mechanisms obtains at least $\nicefrac{1}{2}$ the GFT of the RVWM mechanism.

\section{The Hybrid Mechanism for\texorpdfstring{\\}{ }Matching Markets: Proofs}
\label{app:matching-hybrid}

In this \lcnamecref{sec:matching-hybrid}, we prove \cref{thm:matching-main}.

\paragraph{Ex-post IR, ex-post (direct trade) weakly budget balance} They directly come from the fact that both the TR mechanism and Offering Mechanism are ex-post IR and ex-post (direct trade) weakly budget balanced.

\paragraph{Bayesian IC} Lemma~\ref{lem:hybrid-BIC} proves that after combining the two mechanisms, the hybrid mechanism is still a BIC mechanism.

\begin{lemma}\label{lem:hybrid-BIC}
The hybrid mechanism for matching markets is BIC.
\end{lemma}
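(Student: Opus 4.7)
The plan is to mirror the structure of the BIC proof of \cref{thm:double-auction} for the double-auction case, while making use of the class structure of matching markets and the VCG-based characterization of the Offering Mechanism's constraints developed in \cref{app:matching-offering}. The argument rests on three principles: (i)~each sub-mechanism is individually at least BIC (TR is ex-post IC by \cref{thm:matching-TR}; the Offering Mechanism is BIC by \cref{offering-bic}); (ii)~a buyer $i$ who is in $M(\Bb,\Bs)$ cannot, by varying $b_i$ alone, alter the first-best matching $M$ itself while remaining in it, so the choice of sub-mechanism and the offer parameters relevant to $i$ are independent of $b_i$ on the interval of bids for which $i$ stays in $M$; (iii)~a buyer $i$ who is not in $M(\Bb,\Bs)$ can only ever trade after misreporting at a price that weakly exceeds her true value.

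I would fix the reports $(\Bb_{-i},\Bs)$ of all other agents and study the outcome as a function of $b_i$ (sellers being handled symmetrically). A standard weight-additivity argument together with the two properties of the lex-by-IDs tie-breaking rule from \cref{def:lex-order} (weight-independence and subset-consistency) shows that if $i\in M(\Bb,\Bs)$ for some bid $b_i$, then on the entire interval of bids for which $i$ remains in the first-best the matching $M$ is pinned down: changing $b_i$ shifts the weight of every matching that contains $i$ by the same amount, so $M$ retains maximal weight and continues to be picked. Consequently the class counts $q_t,d_t$, the scalar $\alpha(\Bb,\Bs)$, the choice of sub-mechanism, and the Offering Mechanism thresholds $\bar{s}_{(i,j)}$ and $\bar{b}_{(i,j)}$ for $i$'s partner $j$ (which are defined via $M_{-i}$ and $M_{-j}$ and therefore do not depend on $b_i$) are all constant on this interval. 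Within that interval, ex-post IC of TR rules out beneficial deviations when TR runs, and the BIC of the RO building block from \cref{ro} applied to the conditioned target distributions rules out beneficial deviations when the Offering Mechanism runs (this is essentially the argument already used in \cref{offering-bic}). A deviation that instead pushes $i$ out of $M(\Bb,\Bs)$ altogether drops her utility to zero, which is dominated by the non-negative utility of truthtelling guaranteed by ex-post IR.

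The hard case, which I expect to be the main obstacle, is a buyer $i\notin M(\Bb,\Bs)$ who manipulates to some $b_i'$ so as to enter the deviated first-best $M'=M((\Bb_{-i},b_i'),\Bs)$ and trade. Deviations keeping $i$ outside $M'$ yield zero utility either way, so it suffices to consider deviations that bring her in. Let $T_i$ denote $i$'s critical bid to be in $M'$ given $(\Bb_{-i},\Bs)$; since $i\notin M(\Bb,\Bs)$ we have $T_i\ge b_i$. If TR runs on the deviated profile, I would exploit that within any class $t$ all agents have identical connectivity in $G$, so the first-best matches the top-$q_t$ class-$t$ buyers by reported value; hence $i$'s entry into $M'$ displaces a class-$t$ buyer of reported value at least $b_i$, and TR's critical-value payment to $i$ (the highest reduced class-$t$ value) is therefore at least $b_i$, yielding non-positive utility. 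If instead the Offering Mechanism runs on the deviated profile and $i$ trades with her $M'$-partner $j$, I would split according to \cref{offering-constraint-vcg} and \cref{lem:VCG-price}: if $j\in M'_{-i}$, then $\bar{b}_{(i,j)}=P_i((\Bb_{-i},b_i'),\Bs)=T_i\ge b_i$, and by \crefpart{ro}{price-within-constraints} any price paid by $i$ is at least $\bar{b}_{(i,j)}\ge b_i$; if $j\notin M'_{-i}$, then $T_i=s_j$ by \cref{lem:VCG-price}, so $s_j\ge b_i$, and ex-post IR of the RO building block for seller $j$ forces any traded price to be at least $s_j\ge b_i$. In every branch the price charged to $i$ is at least $b_i$, so her utility is non-positive, and the symmetric argument for sellers then completes the proof.
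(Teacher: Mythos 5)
Your proposal is correct and follows essentially the same route as the paper's proof: for agents in the first-best you reduce to the ex-post IC of the TR mechanism and the BIC of the Offering Mechanism (\cref{offering-bic}) by observing that such an agent cannot alter the first-best matching, hence neither $\alpha(\Bb,\Bs)$, the sub-mechanism selection, nor the offer constraints, while remaining in it; and for an agent outside the first-best you show any trade forced by a deviation occurs at a price weakly worse than her true value, using \cref{offering-constraint-vcg}, \cref{lem:VCG-price}, \crefpart{ro}{price-within-constraints} and ex-post IR, exactly as in the paper. The only imprecision is the ``displacement'' justification in the TR branch of your hard case—raising an outsider's bid can enlarge $q_t$ rather than displace any class-$t$ buyer, so the claim that the highest reduced class-$t$ value is at least $b_i$ really rests on the monotonicity/critical-value structure of TR rather than on a displacement count—but the needed conclusion is an immediate consequence of the ex-post IC and IR of TR (\cref{thm:matching-TR}), which you already invoke, so nothing essential is missing.
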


\begin{proof}
We will prove that the Hybrid Mechanism is BIC for the seller. A similar argument holds for the buyer.
For each seller $j$ with cost $s_j$, suppose she misreports her cost to be $s_j'\not=s_j$. We will show that taking expectation over other agents' valuation profile $\Bb,s_{-j}$, the expected utility of $s_j$ when reporting truthfully is at least the expected utility of seller $j$ with true cost $s_j$ when reporting $s_j'$.
We consider three cases:
\begin{itemize}
\item
First, consider the case where when seller~$j$ reports $s_j$, then she is in the first-best and the TR Mechanism is run. In this case, we note that if seller~$j$ misreports to $s'_j$, then either the first-best is unchanged (and so the TR is still run) or $j$ is no longer in the first-best. In the former case, seller~$j$ does not profit since by \cref{thm:matching-TR} the TR Mechanism is ex-post IC, and in the latter case seller~$j$ does not profit as she gets utility $0$.
\item
Now, consider the case where when seller~$j$ reports $s_j$, then she is in the first-best and the Offering Mechanism is run with an offer on the edge $(i,j)$. Similarly to above, we note that if $j$ misreports to $s'_j$, then either the first-best is unchanged (and so the Offering Mechanism is still run) or $j$ is no longer in the first-best, and has $0$ utility. In particular, $j$ cannot cause the TR mechanism to run without getting $0$ utility. Also note that for the same reason, for every report of buyer $i$ that keeps $(i,j)$ in the first best, the Offering Mechanism is still run, and so it is enough to show truthfulness of $j$ in expectation over all such reports of buyer~$i$, and we have shown precisely that in \cref{offering-bic} using \crefpart{ro}{properties}. So, we have that when $j$ has cost $s_j$ such that there exists $\Bb,s_{-j}$ such that the Offering Mechanism is run with an offer on the edge $(i,j)$, then in expectation over all such $\Bb,s_{-j}$, it is the case that $s_j$ cannot gain from misreporting.
\item
Finally, consider the case where when seller~$j$ reports $s_j$, then she is not in the first-best. In this case, regardless of the mechanism that is actually run when $j$ reports $s_j$, her outcome reporting $s_j$ would have been the same under both mechanisms. So, since we have shown that when $j$ is not in the first-best, truthtelling is ex-post IC in both mechanisms, we have that this implies that truthfulness is ex-post IC for seller~$j$ in the hybrid mechanism in this case.\qedhere
\end{itemize}
\end{proof}

\paragraph{Ex-post efficiency guarantee} Whenever $\alpha(\Bb,\Bs)\geq 1/2$, the hybrid mechanism run TR mechanism. The ex-post guarantee directly comes from Claim~\ref{matching-TR-gains}.

\paragraph{Ex-ante efficiency guarantee}
Let $(\Bb,\Bs)$ be a profile. If $\alpha(\Bb,\Bs)\geq \frac{1}{2}$, the hybrid mechanism runs the Trading Reduction mechanism, which achieves at least $\nicefrac{1}{2}$-fraction of the first-best gains from trade. This is at least $\nicefrac{1}{2}$-fraction the gains from trade of the RVWM mechanism. If $\alpha(\Bb,\Bs)<\frac{1}{2}$, the hybrid mechanism runs the Offering Mechanism, which by \cref{offering-bcfm-pointwise} achieves at least a $\nicefrac{1}{2}$-fraction of the GFT of the RVWM mechanism for this profile. So, for any profile the hybrid mechanism achieves at least a $\nicefrac{1}{2}$-fraction of the GFT of the RVWM mechanism for this profile, and so by \cref{bcfm}, it achieves at least a $\nicefrac{1}{4}$-fraction of the GFT of the second-best mechanism, as required.

\begin{proof}[Proof of \cref{cor:matching-main-beta}]
Since $\beta(\Bb,\Bs)\le\alpha(\Bb,\Bs)$ (see the proof of \cref{cor:matching-TR-beta}), we have that in this case also $\alpha(\Bb,\Bs)\ge\nicefrac{1}{2}$, and so the hybrid mechanism runs the TR mechanism for matching markets, and so the claim following via \cref{cor:matching-TR-beta}.
\end{proof}

\end{document}